\def\blfootnote{\xdef\@thefnmark{}\@footnotetext}
\DeclareMathAlphabet\mathbfcal{OMS}{cmsy}{b}{n}
\DeclareMathOperator*{\esssup}{ess\,sup}
\newtheorem{theorem}{Theorem}
\newtheorem{proposition}{Proposition}
\newtheorem{corollary}{Corollary}
\newtheorem*{proof of Theorem*}{Proof of Theorem 3}
\newtheorem{proof of Lemma}{Proof of Lemma}
\newtheorem{definition}{Definition}
\newtheorem{lemma}{Lemma}
\begin{document}
\vspace{-0.2 in}
\title{Optimal Status Updating with a Finite-Battery Energy Harvesting Source}
\author{\IEEEauthorblockN{Baran Tan Bacinoglu\IEEEauthorrefmark{1}, Yin Sun\IEEEauthorrefmark{3}, Elif Uysal\IEEEauthorrefmark{1}, and Volkan Mutlu\IEEEauthorrefmark{1}}
\IEEEauthorblockA{\IEEEauthorrefmark{1}METU, Ankara, Turkey,
\IEEEauthorrefmark{3}Auburn University, AL, USA\\
 E-mail:  barantan@metu.edu.tr, yzs0078@auburn.edu, uelif@metu.edu.tr,  volkan.mutlu@metu.edu.tr}
}

\bibliographystyle{IEEEtran}

\maketitle
\vspace{-0.1 in}
\def\eg{\emph{e.g.}}
\def\ie{\emph{i.e.}}

\begin{abstract}
We consider an energy harvesting source equipped with a finite battery, which needs to send timely status updates to a remote destination. The timeliness of status updates is measured by a non-decreasing penalty function of the Age of Information (AoI). The problem is to find a policy for generating updates that achieves the lowest possible time-average expected age penalty among all online policies.  We prove that one optimal solution of this problem is a monotone threshold policy, which satisfies (i) each new update is sent out only when the age is higher than a threshold and (ii) the threshold is a non-increasing function of the instantaneous battery level. Let $\tau_B$ denote the optimal threshold corresponding to the full battery level $B$, and $p(\cdot)$ denote the age-penalty function, then we can show that $p(\tau_B)$ is equal to the optimum objective value, i.e., the minimum achievable time-average expected age penalty. These structural properties are used to develop an algorithm to compute the optimal thresholds. Our numerical analysis indicates that the improvement in average age with added battery capacity is largest at small battery sizes; specifically, more than half the total possible reduction in age is attained when battery storage increases from one transmission's worth of energy to two. This encourages further study of status update policies for sensors with small battery storage.

\end{abstract}
\begin{IEEEkeywords}Age of information; age-energy tradeoff; non-linear age penalty, threshold policy; optimal threshold; energy harvesting; battery capacity.\end{IEEEkeywords}
\section{Introduction}
\blfootnote{This paper was presented in part at IEEE ISIT 2018 ~\cite{8437573}.
This work was supported in part by NSF grant CCF-1813050, ONR
grant N00014-17-1-2417 and TUBITAK grant no 117E215.}
The \emph{Age of Information} (AoI), or simply the age, was proposed in \cite{ Kaul2011, Kaul2012} as a performance metric that measures the freshness of information in status-update systems. For a flow of information updates sent from a source  to a destination, the age is defined as the time elapsed since the newest update available was generated at the source. That is, if $U(t)$ is the largest among the time-stamps of all packets received  by time $t$, the age is defined as:
\begin{equation}
\Delta(t)= t-U(t),
\end{equation}
 AoI is a particularly relevant performance metric for status-update applications that have growing importance in  remote monitoring \cite{ZviedrisESMS10, blueforce}, machine-type communication, industrial manufacturing, telerobotics, Internet of Things and social networks. 

In many applications, the timeliness of status updates  not only  determines the quality of service, but also  affects other design goals such as the controllability of a dynamical system that relies on the updates  of sensing and control signals. AoI quantifies the timeliness of status-updates from the perspective of the receiver rather than throughput or delay based measures that are actually channel-centric.  Moreover, AoI is also related to measures such as the time-average mean-square error (MSE) for remote estimation. An example of this is the result in \cite{YinSunISIT2017} which showed remote estimation of a Wiener process minimizing MSE reduces to an AoI optimization problem when the sampling times at the transmitting side are independent of the  process. While AoI optimization based on linear functions of the age $\Delta(t)$ is a relevant performance goal for most scenarios, the performance of some applications may be related to \emph{non-linear} functions of the age. For example, the change in the value of stale data can be less/more significant as its age grows. In such cases, the penalty of data staleness can be modelled as a non-linear function $p(\Delta(t))$ of the age $\Delta(t)$, i.e., the \emph{age-penalty}. This function is chosen to be non-decreasing so that a decrease in \emph{age-penalty} can be only possible when the age is less. Accordingly, the optimization of the age-penalty parallels to average AoI optimization while it might have distinct optimality conditions.

 Ideally, AoI is minimized when status updates are frequent and fresh. That is, good AoI performance requires packets with low delay received regularly. A limitation in the minimization of AoI is a constraint on the long-term average update rate which may be due to an average power budget for the channel over which status updates are sent.  A stricter constraint is to keep a detailed budget on the number of status updates by allowing  update transmission when  a replenishable resource becomes available. This is the case of energy harvesting communication systems where each update consumes a certain amount of the harvested energy, if available. In the related literature of AoI optimization for energy harvesting communication systems, energy harvesting process is considered as an arrival process where each energy arrival carries the energy required for an update \cite{TanITA2015, TanISIT2017, WuYang2017, 2015ISITYates, 2018arXiv180202129A, DBLP:journals/corr/abs-1806-07271, 8437547, 8636088}. The goal of AoI optimization in such formulations is to find an optimal timing of update instants in order to minimize average AoI while transmission opportunities are subject to the availability of energy. Energy arrivals occur irregularly or randomly, which models an energy harvesting scenario.
 The main challenge in optimizing time average expected age under random energy arrivals is that in the case of an energy outages (empty battery), the transmitter must idle for an unknown duration of time. If it is the case that such random durations are inevitable, they introduce a tension for the regulation of inter-update durations. Another challenge is due to the finiteness of battery sizes. Theoretically, it is possible to achieve asymptotically optimal average AoI by employing simple schemes assuming infinite \cite{TanISIT2017} or sufficiently large battery \cite{WuYang2017} sizes. However, when the battery size is comparable to the energy required per update, such simple schemes do not allow performance guarantees. Consequently, it is important to explore optimal policies under such regimes where  performance depends heavily on the statistics of energy arrivals and the battery size.    

This study is motivated by the aforementioned challenges of optimizing AoI in energy harvesting systems, capturing both the randomness of energy arrivals and finite energy storage capability. In addition capturing both challenges we go further, by optimizing not only average age itself, but a more general age penalty function $p(\Delta(t))$ that is not necessarily linear (see \cite{Cho:2003:EPR:958942.958945, YinSunInfocom2016, 8000687, 8006543, DBLP:journals/corr/abs-1812-07241, Razniewski:2016:OUF:2983323.2983719}). Hence, the problem considered in this study is an age-penalty optimization problem where status updates consume discrete units of energy that are randomly generated, \ie, harvested, such that the number of energy units that can be stored at a time is limited by a finite value which is called battery capacity.

Under the assumption of Poisson energy arrivals, we show the structure of solutions for the age-penalty optimization problem. The structure of the optimal solution reflects a basic intuition about the optimal strategy: Updates should be sent when the update is valuable (when the age is high) and the energy is cheap (the battery level is high). We show that the optimal solution is given by a stopping rule according to which an update is sent when its immediate cost is surpassed by the expected future cost. For Poisson energy arrivals, this stopping rule can be found in the set of policies that we refer as \emph{monotone threshold policies}. Monotone threshold policies have the property that each update is sent only when the age is higher than a certain threshold which is a non-increasing function of the instantaneous battery level. One of our key results is that the value of the age-penalty function at the optimal threshold corresponding to the full battery level is exactly equal to the optimal value of the average age-penalty.



\subsection{Contributions}
The contributions of this paper can be summarized as follows:
\begin{itemize}
\item We formulate the general average age-penalty optimization problem for sending status updates from an energy harvesting source. This generalizes the AoI optimization goal in the prior studies \cite{TanISIT2017, WuYang2017, 2018arXiv180202129A, DBLP:journals/corr/abs-1806-07271, 8437573} to a non-linear function of age. In addition to the generalization on the objective, the optimization is carried out over a more general policy space defined only using the causality assumption. We prove that solutions to this general optimization problem can be found among threshold-type policies.  
\item We show that, for optimal threshold-type policies with non-decreasing thresholds, the value of the penalty function at the threshold corresponding to the highest battery level is equal to the minimum value of the average age-penalty. As this optimal threshold is also the minimum of optimal thresholds at different battery levels, this implies that inter-update durations under such a policy is always above the minimum value of the average age-penalty.
\item For the case when the age-penalty function is linear, i.e., average AoI minimization problem, we provide the optimal thresholds  for integer battery size up to $4$.  These results show that  the most significant decrease in the minimum average AoI happens when incrementing the battery capacity of unit size (capable of holding one packet transmission's worth of energy) to two units.  The minimum achievable average AoI with a battery size of 4 units is only about $10\%$ larger than the ultimate minimum average AoI with infinite battery capacity. That is a promising result for small sensor systems.
\item For average AoI minimization problem, we provide an algorithm that can find near optimal policies achieving average AoI values arbitrarily close to the optimal values for any given battery capacity. This algorithm provides a methodical way to derive near optimal policies utilizing analytical results.     
\end{itemize}
\subsection{Paper Organization}
The rest of the paper is organized as follows. In Section \ref{sec:relatedwork}, the related work is discussed and summarized.  In Section \ref{sec:sm}, the system model and the formulation of the AoI optimization problem are described. In Section \ref{sec:mainresults}, the main results on the structural properties of the solution to the AoI optimization problem are shown and an algorithm to derive solutions for arbitrary integer battery sizes is provided. In Section \ref{sec:nm}, the numerical results validating analytical results and also showing optimal solutions for integer battery size up to $4$ are presented. In Section \ref{sec:conc}, the paper is concluded summaring the results and insights obtained over the course of this study.
\section{Related Work}
\label{sec:relatedwork}
Several studies on AoI considered this performance metric under various queueing system models comparing service disciplines and queue management policies (\eg, \cite{Ephremides2013, Ephremides2014, Huang2015, Pappas2015, Ephremides2016, Najm2016, 8469047, StatusUpdateHARQ, 8406846}). A common observation in these studies was that many queueing/service policies that are throughput and delay optimal but are often suboptimal with respect to AoI, while AoI-optimal policies can be throughput and delay optimal, at the same time. This showed that AoI optimization is  quite different than optimization with respect to classical performance metrics. This required many queueing models to be re-addressed under respect to age related objectives. Moreover, queueing system formulations typically assume no precise control on the transmission or generation times of status updates. However, such control is important for age optimization \cite{YinSunInfocom2016, 8000687}.

A direct control on the generation times of status updates is possible through a control algorithm that runs at the source. This  is the \emph{``generate-at-will"} assumption formulated in \cite{TanITA2015, 2015ISITYates} and studied in \cite{YinSunInfocom2016, YinSunISIT2017, 8000687}.  In \cite{TanITA2015}, the problem of AoI optimization for a source, which is constrained by an arbitrary sequence of energy arrivals was studied. In \cite{2015ISITYates}, AoI optimization was considered for a source that harvests energy at a constant rate under stochastic  delays experienced by the status update packets. The results in these studies showed suboptimality of work-conserving transmission schemes. Often, introducing a waiting time before sending the next update is optimal. That is, for maximum freshness, one may sometimes send updates at a rate lower than one is allowed to which may be counter-intuitive at first sight.


The problem in \cite{TanITA2015} was extended to a continuous-time formulation with Poisson energy arrivals, finite energy storage (battery) capacity, and random packet errors in the channel in \cite{TanISIT2017}. An age-optimal threshold policy was proposed for the unit battery case, and the achievable AoI for arbitrary battery size was bounded  for a channel with a constant packet erasure probability. The concurrent study in \cite{WuYang2017}, limited to the special cases of unit battery capacity and infinite battery capacity computed the same threshold-type policies under these assumptions.
These special cases were investigated also for noisy channels with a constant packet erasure probability in \cite{8437547, 8636088}. The case for a battery with 2-units capacity was studied in \cite{2018arXiv180202129A} and the optimal policies for this case characterized as threshold-type policies similar to the optimal policy for unit battery capacity introduced in \cite{TanISIT2017} and \cite{WuYang2017}. Optimal policies for arbitrary battery sizes were characterized via Lagrangian approach in  \cite{DBLP:journals/corr/abs-1806-07271} and using optimal stopping theory in \cite{8437573}.  



\section{System Model}
\label{sec:sm}
Consider an energy harvesting transmitter that sends update packets to a receiver, as illustrated in Fig \ref{sensormodelw}. Suppose that the transmitter has a finite battery which is capable of storing up to $B$ units of energy. Similar to \cite{TanISIT2017}, we assume that the transmission of an update packet consumes one unit of energy. 
The energy that can be harvested arrive in units according to a Poisson process with rate $\mu_H$. Let $E(t)$ denote the amount of energy stored in the battery at time $t$ such that $0\leq E(t) \leq B$. 
The timing of status updates are controlled by a sampler which can monitor the battery level $E(t)$ for all $t$. We assume that the initial age and the initial battery level are zero, i.e., $\Delta(0)=0$ and $E(0)=0$.
\begin{figure}[htpb]
    \centering \includegraphics[scale=0.72]{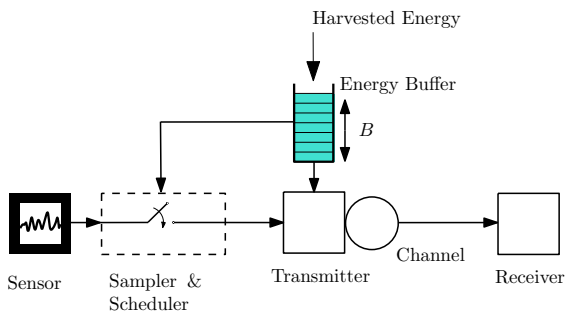}
\caption{System Model.}
\label{sensormodelw} 
\end{figure}
 
Let $H(t)$ and $A(t)$ denote the number of energy units that have arrived during $[0,t]$ and the number of updates sent out during $[0,t]$, respectively. Hence, $\{H(t), t\geq 0\}$ and $\{A(t),t\geq0\}$ are two counting processes. If an energy unit arrives when the battery is full, it is lost because there is no capacity to store it.

The system starts to operate at time $t=0$. Let $Z_k$ denote the generation time of the $k$-th update packet such that $0=Z_0 \leq Z_1\leq Z_2\leq\ldots$. An update policy is represented by a sequence of update instants $\pi=(Z_{0},Z_{1},Z_{2},...)$. Let $X_{k}$ represent the inter-update duration between updates $k-1$ and $k$, i.e., $X_{k}=Z_{k}-Z_{k-1}$.  In many status-update systems  (\eg, a sensor reporting temperature \cite{5483217}), update packets are small in size and are only sent out sporadically. Typically, the duration for transmitting a packet is much smaller than the difference between two subsequent update times, i.e., $X_{k}$s are typically large compared to the duration of a packet transmission. With such systems in mind, in our model, we will approximate the packet transmission durations as zero. In other words, once the $k$-th update is generated and sent out at time $t=Z_k$, it is immediately delivered to the receiver. Hence, the age of information $\Delta(t)$ at any time $t\geq0$ is 
\begin{equation}
\label{efuage}
\Delta (t) = t-\max\{Z_k: Z_k \leq t\},
\end{equation}
which satisfies $\Delta(t) = 0$ at each update time $t=Z_k$. Because an update costs one unit of energy, the battery level reduces by one upon each update, i.e.,
\begin{equation}
\label{energydecrease}
E(Z_{k})=E(Z_{k}^{-})-1,
\end{equation}
where $Z_{k}^{-}$ is the time immediately before the $k$-th update. Further, because the battery size is $B$, the battery level evolves according to 
\begin{align}
\label{energyavailability}
E(t)=\min \lbrace E(Z_{k})+H(t)-H(Z_{k}),B\rbrace, 
\end{align}
when  $t \in [Z_{k}, Z_{k+1})$ is between two subsequent updates.

In terms of energy available to the scheduler, we can define update policies, that do not violate causality, as in the following:

\begin{definition}
\normalfont
A policy $\pi$ is said to be \emph{energy-causal} if updates only occur when the battery is non-empty, that is, $E(Z_{k}^-)\geq 1$ for each packet $k$.
\end{definition}

Another restriction on update instants is due to the information available to the scheduler which we define as follows,
\begin{definition}
\normalfont
Information on the energy arrivals and updates by time $t$ is represented by the filtration \footnote{Note that the filtration is right continuous as both $H(t)$ and $A(t)$ are right continuous.} $\mathcal{F}_{t}=\sigma(\lbrace( H(t'),A(t')),0 \leq t' < t  \rbrace)$ which is the $\sigma$-field generated by the sequence of energy arrivals and updates, i.e., $\lbrace( H(t'),A(t')),0 \leq t' < t  \rbrace$. 
\end{definition}
Similar to the definition of energy-causal policies, in the policy space that we will consider we merely assume the causality of available information besides energy causality. To formulate this assumption, we use the definition of $\mathcal{F}_{t}$. In terms of information available to the scheduler, any random time instant $\theta$ does not violate causality  if and only if $\left\lbrace \theta \leq t\right\rbrace \in \mathcal{F}_{t}$ for all $t\geq 0$. We will refer such random instants as \emph{Markov times}\cite{peskir2006optimal} and consider update times as Markov times based on the filtration $\mathcal{F}_{t}$ in general. Notice that such update times do not have to be finite, however, we will refer Markov times that are also finite with probability 1 (w.p.1.) as \emph{stopping times}\cite{peskir2006optimal}. For a policy trying to regulate age, it is legitimate to assume that update instants are always finite w.p.1. as otherwise the age may grow unbounded with a positive probability. With this in mind, we will consider only the update instants that are stopping times. 

Accordingly, we can define the \emph{online} update policies combining the causality assumptions on available energy and information as follows:
\begin{definition}
\normalfont
A policy is said to be \emph{online} if (i) it is energy causal, (ii) no update instant is determined based on future information, i.e., all update times are stopping (finite Markov) times based on  $\mathcal{F}_{t}$, i.e., $Z_{k}$ is finite w.p.1. while  $\left\lbrace Z_{k} \leq t\right\rbrace \in \mathcal{F}_{t}$ for all $t\geq 0$ and $k \geq 1$.
\end{definition}
Let $\Pi^{\mathsf{online}}$ denote  the set of online update policies.
To evaluate the performance of online policies, we consider an \emph{age-penalty function} that relates the age at a  particular time to a cost which increases by the age. This  function is defined as in below:

We consider an \emph{age-penalty function} $p(\cdot)$ that maps the age $\Delta(t)$ at time $t$ to a penalty $p(\Delta(t))$:
\begin{definition}

\normalfont
A function $ p: [0,\infty) \rightarrow [0,\infty)$ of the age is said to be an \emph{age-penalty function}  if 
\begin{itemize}
\item $\lim_{\Delta\rightarrow \infty}p(\Delta)=\infty$.
\item $p(\cdot)$ is a non-decreasing function.
\item $\int_0^\infty p(t)e^{-\alpha t} dt<\infty$ for all $\alpha> 0$.
\end{itemize}

\end{definition}
Observe that the definition of age-penalty functions covers any non-decreasing function of age that is of sub-exponential order\footnote{This is due to the third property in the definition, which is a technical requirement for the proofs.}  and grows to infinity. 

The time-average expected value of the age-penalty or simply the \emph{average age-penalty} can be expressed as 

\begin{equation}
\label{pavaged}
\bar{p}=\limsup_{T \rightarrow \infty}\frac{1}{T}\mathbb {E}\left[ \displaystyle\int_{0}^{T}p(\Delta(t))  dt \right].
\end{equation}

Let $\bar{p}_{\pi}$ denote the average age-penalty achieved by a particular policy $\pi$. The goal of this paper is to find the optimal update policy for minimizing the average age-penalty, which is formulated as
\vspace{-0.05 in}
\begin{equation}
\label{minavaged}
\displaystyle\min_{\pi\in\Pi^{\mathsf{online}}}\bar{p}_{\pi}.
\end{equation}
\section{MAIN RESULTS}
\label{sec:mainresults}
We begin with a result guaranteeing that  the space of threshold-type policies (see Definition \ref{arbitrarythresholds}) contains optimal update policies hence we can focus our attention to these policies for finding solutions to (\ref{minavaged}). 

Note that at time $t=Z_k$, the age $\Delta(t)$ is equal to $0$.  In the meanwhile, the battery level $E(t)$ will grow as more energy is harvested. In threshold policies, the threshold $\tau_{E(t)}$ changes according to the battery level $E(t)$ and a new sample is taken at the earliest time that the age $\Delta(t)$ exceeds the threshold $\tau_{E(t)}$. We define such policies as follows:
\begin{definition}
\label{arbitrarythresholds}
\normalfont
When $E(t) \in \lbrace\ell=1,...,B\rbrace$ represents the battery level at time $t$, an online policy is said to be a  \emph{threshold policy} if there exists $\tau_{\ell}$ for $\ell=1,...,B$ s.t.
\begin{equation}
\label{thresholdpolicy}
Z_{k+1} =\inf \left\lbrace t \geq Z_{k} : \Delta(t) \geq \tau_{E(t)} \right\rbrace ,
\end{equation}
\end{definition}
Note that a policy is said to be \emph{stationary} if its actions depend only on a current state while being independent of time.
An immediate observation is that given $\Delta(t)$ and $E(t)$ threshold policies do not depend on time, hence:
\begin{proposition}
All threshold policies  are stationary.
\end{proposition}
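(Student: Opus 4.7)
The statement is essentially a definitional unpacking, so the plan is to show directly that the decision rule encoded by (\ref{thresholdpolicy}) can be recast as a time-invariant, state-dependent rule on the state $(\Delta(t), E(t))$. First I would fix a threshold policy $\pi$ with thresholds $\tau_1, \ldots, \tau_B$ and, for bookkeeping, set $\tau_0 = +\infty$ so that energy causality is automatically respected when the battery is empty. Then I would argue that at any time $t \geq 0$, the policy transmits at time $t$ if and only if $\Delta(t) \geq \tau_{E(t)}$. This follows from (\ref{thresholdpolicy}): between two consecutive update times $Z_k$ and $Z_{k+1}$, no update is issued because the age has not yet reached the current threshold, and an update is issued precisely at the first $t > Z_k$ at which $\Delta(t) \geq \tau_{E(t)}$. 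After a transmission, $\Delta$ is reset to $0$ and $E$ drops by one per (\ref{energydecrease}), and the same rule is reapplied.

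The key observation is then that the Boolean predicate $\{\Delta(t) \geq \tau_{E(t)}\}$ is a function of $(\Delta(t), E(t))$ alone: it references neither the absolute time $t$, nor the index $k$ of the pending update, nor any part of the history encoded in $\mathcal{F}_t$ beyond the current state. Since stationarity is defined as the action being a function of the current state only, independent of time, this establishes that $\pi$ is stationary.

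I do not expect any real obstacle here; the only subtlety worth flagging is to make sure the ``state'' is correctly identified as $(\Delta(t), E(t))$ and to handle the edge case where $E(t) = 0$ cleanly, either by the convention $\tau_0 = +\infty$ above or by appealing to the energy-causality clause in the definition of an online policy. With that in place, the proof reduces to a one-line observation that (\ref{thresholdpolicy}) is equivalent to the stationary decision rule ``transmit iff $\Delta(t) \geq \tau_{E(t)}$'', from which the proposition follows.
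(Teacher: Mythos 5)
Your proof is correct and follows essentially the same route as the paper's: both observe that the decision rule in (\ref{thresholdpolicy}) depends only on the current state $(\Delta(t), E(t))$ and not on absolute time, which is precisely the definition of stationarity. Your treatment is somewhat more explicit (e.g., the $\tau_0 = +\infty$ convention for the empty-battery case), but the core argument is the same one-line observation the paper makes.
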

\begin{proof}
By definition, the update instants of a threshold policy only depend on the time elapsed since the last update, i.e., $\Delta(t)$, and the current battery level.
\end{proof}
We expect that such stationary policies can minimize $\bar{\Delta}$ among all online policies as energy arrivals follow a Poisson process which is memoryless. Due to the memorylessness of energy arrivals, the evolution of the system can be understood through a renewal type behaviour which suggests that an optimal policy should be stationary.

Indeed, we note the following as the first key result of this paper,
\begin{theorem}
\label{existopthreshold}
There exists a threshold policy that is optimal for solving (\ref{minavaged}).
\end{theorem}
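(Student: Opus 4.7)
The plan is to recast problem (\ref{minavaged}) as an average-cost semi-Markov decision problem, exploit the memorylessness of the Poisson energy arrival process to reduce the inter-update state to the battery level alone, and then invoke the monotonicity of $p(\cdot)$ to conclude that an optimal stationary decision rule must be of threshold type.

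First I would set up the semi-MDP. Decision epochs are the update times $Z_k$; between $Z_k$ and $Z_{k+1}$ the age grows deterministically while the battery level jumps upward according to an independent Poisson process with rate $\mu_H$ (capped at $B$), and at $Z_{k+1}$ the age resets to $0$ and the battery decrements by one. Because the age resets at every $Z_k$ and the residual arrival process is independent of $\mathcal{F}_{Z_k}$, the post-update battery level $\ell = E(Z_k)\in\{0,\dots,B-1\}$ is a sufficient statistic for the future cost. The action at state $\ell$ is a stopping time $\tau$ on the post-update Poisson filtration, subject to energy causality $\min\{\ell+H(\tau),B\}\geq 1$ at the transmission instant.

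Second, I would establish the average-cost optimality (Bellman) equation: there exist a constant $\bar p^{\ast}=\inf_{\pi\in\Pi^{\mathsf{online}}}\bar p_\pi$ and a relative value function $V:\{0,\dots,B-1\}\to\mathbb{R}$ satisfying
\begin{equation*}
V(\ell)\;=\;\inf_{\tau}\,\mathbb{E}\!\left[\int_0^\tau\!\bigl(p(s)-\bar p^{\ast}\bigr)\,ds\;+\;V\bigl(\min\{\ell+H(\tau),B\}-1\bigr)\right].
\end{equation*}
Existence of $(V,\bar p^{\ast})$ would be obtained by a vanishing-discount argument on the \emph{finite} state $\ell$: the discounted-cost stopping problem is well-posed with finite value thanks to the sub-exponential bound $\int_0^\infty p(t)e^{-\alpha t}\,dt<\infty$, while the coercivity $p(\Delta)\to\infty$ rules out discounted minimizers whose stopping times escape to infinity, so that after subtracting a reference-state value the discounted value functions stay uniformly bounded and a limit $V$ can be extracted.

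Third, for each fixed $\ell\geq 1$ the infimum in the Bellman equation is an optimal stopping problem in the \emph{monotone case}: the instantaneous continuation cost $p(\Delta)-\bar p^{\ast}$ is non-decreasing in $\Delta$, and an interchange/coupling argument shows that $V(\ell)$ is non-increasing in $\ell$, so the terminal term is also monotone along the sample path. Classical monotone-case theorems for optimal stopping (Peskir--Shiryaev \cite{peskir2006optimal}) then guarantee that the optimal $\tau$ is a first-passage time of the age into a superlevel set; since $(\Delta(t),E(t))$ is the sufficient state at arbitrary $t$ and $E(t)$ is piecewise-constant and can only increase between updates, re-applying the same argument at each battery level produces a threshold $\tau_\ell$ for every $\ell\in\{1,\dots,B\}$ and yields an optimal policy of the form (\ref{thresholdpolicy}). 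The main obstacle will be the vanishing-discount step: one must rigorously preclude stopping times that let the age escape to infinity with positive probability, and this is precisely where the two technical assumptions on $p$ (coercivity and sub-exponential growth) do the work.
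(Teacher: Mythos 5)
Your proposal takes a genuinely different route from the paper's. Both begin with a vanishing-discount argument, but the paper proves the threshold structure for each fixed discounted problem (via a Snell-envelope analysis and the monotonicity/saturation of $J_\alpha(\Delta,\ell)$ in $\Delta$) and then transfers it to the average-cost problem by applying Feller's Tauberian theorem directly to the cost functionals $\alpha J_\alpha$. You instead first reduce to a semi-MDP on the \emph{finite} state space $\{0,\dots,B-1\}$ of post-update battery levels, aim to establish the average-cost optimality equation from a vanishing-discount bound on relative values, and then read off the threshold structure from the Bellman operator's stopping subproblem. The observation that the embedded SMDP has finite state is a clean way to bypass the uncountable-state issue the paper highlights, and it buys access to standard finite-state average-cost machinery; in exchange you take on the burden of actually establishing the ACOE, which the paper's Tauberian route avoids.

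Two points in your sketch need repair. First, the uniform boundedness of $J_\alpha(\ell)-J_\alpha(\ell_0)$ over small $\alpha$ is the crux of the vanishing-discount method when per-stage costs $\int_0^\tau p(s)\,ds$ and sojourn times are unbounded, and it does not follow merely from coercivity of $p$; you would need an explicit first-passage or coupling estimate showing the cost to reach a reference battery state is bounded uniformly in $\alpha$. Second, the appeal to classical monotone-case optimal stopping is not justified as written: the one-step-look-ahead stopping set is absorbing along sample paths only if, in addition to $p$ being non-decreasing, the increments $V(\ell-1)-V(\ell)$ are non-increasing in $\ell$ --- that is, $V$ is concave. This submodularity is a separate fact, which the paper proves (Lemma \ref{submodularitylemmaforJmax}) to obtain \emph{monotone} thresholds in Theorem \ref{existopmothreshold}, not to obtain mere existence of a threshold policy in Theorem \ref{existopthreshold}. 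The simpler and sufficient argument for the present theorem is the one the paper uses: the intermediate value $W(\Delta,\ell)$ induced by the Bellman operator is non-decreasing in $\Delta$ and bounded above by the stop-now cost $V(\ell-1)$, so the continuation set $\{\Delta:W(\Delta,\ell)<V(\ell-1)\}$ is an interval $[0,\tau_\ell)$, with no appeal to absorption or concavity needed.
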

\begin{proof}
See Appendix \ref{proof:existopthreshold}.
\end{proof}

 One significant challenge in the proof of Theorem \ref{existopthreshold} is that \eqref{minavaged} is an infinite time-horizon time-averaged MDP which has an uncountable state space. When the state space is countable, one can analyze infinite time-horizon time-averaged MDP by making a unichain assumption. However, this method cannot be directly applied when state space is uncountable. To resolve this, we use a modified version of the  ``vanishing discount factor" approach \cite{guohernandez2009} to prove Theorem \ref{existopthreshold} in two steps:

1. Show that for every $\alpha >0$, there exists a threshold policy that is optimal for solving 

$$\min_{\pi\in\Pi^{\mathsf{online}}}\mathbb {E}\left[ \displaystyle\int_{0}^{\infty}e^{-\alpha (t-a)} p(\Delta(t))  dt \right].$$

2. Prove that this property also holds when the discount factor $\alpha$ vanishes to zero.

In our search for an optimal policy, we can further reduce the space of policies:
\begin{definition}
\label{monotonicthresholds}
\normalfont
A threshold policy is said to be a \emph{monotone threshold} policy if $\tau_{1}\geq\tau_{2}\geq\ldots\geq\tau_B$. 
\end{definition}
Note that the definition of monotone threshold policies refers only to the case of thresholds that non-increasing in battery levels as opposed to the non-decreasing case.

Let $\Pi^{\rm{MT}}$ be the set of monotone threshold policies, then, the following is true: 

\begin{theorem}
\label{existopmothreshold}
There exists a monotone threshold policy $\pi \in \Pi^{\rm{MT}}$ that is optimal for solving (\ref{minavaged}).
\end{theorem}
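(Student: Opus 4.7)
The plan is to return to the $\alpha$-discounted auxiliary problem used in the proof of Theorem~\ref{existopthreshold} and show that the discount-optimal threshold policy is already monotone; the monotonicity then survives the vanishing-discount limit. Let $V_\alpha(\Delta, E)$ denote the $\alpha$-discounted value function. From the Bellman equation, the discount-optimal threshold at battery level $E\in\{1,\ldots,B\}$ is characterised by
\[
\tau^\alpha_E = \inf\{\Delta \geq 0 : V_\alpha(\Delta, E) = V_\alpha(0, E-1)\},
\]
so the goal reduces to proving $\tau^\alpha_1 \geq \tau^\alpha_2 \geq \cdots \geq \tau^\alpha_B$ for every $\alpha > 0$; the subsequential-limit step $\alpha \downarrow 0$ from the proof of Theorem~\ref{existopthreshold} then transfers the ordering to the limiting average-cost-optimal thresholds, since the non-strict inequalities $\tau^\alpha_{E+1} \leq \tau^\alpha_E$ are closed under pointwise limits.

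The ordering $\tau^\alpha_{E+1} \leq \tau^\alpha_E$ would follow from two structural properties of $V_\alpha$: (a) submodularity in $(\Delta, E)$, namely $V_\alpha(\Delta', E+1) - V_\alpha(\Delta, E+1) \leq V_\alpha(\Delta', E) - V_\alpha(\Delta, E)$ for all $0 \leq \Delta \leq \Delta'$, expressing that the marginal cost of ageing is smaller when the battery is fuller; and (b) concavity of $V_\alpha(0,\cdot)$ on $\{0,\ldots,B\}$, which is the standard diminishing-marginal-value of an extra unit of battery. Indeed, combining (a) evaluated at $(\Delta, \Delta') = (0, \tau^\alpha_E)$ with the defining identity $V_\alpha(\tau^\alpha_E, E) = V_\alpha(0, E-1)$, followed by (b), yields
\[
V_\alpha(\tau^\alpha_E, E+1) \leq V_\alpha(0, E+1) + V_\alpha(0, E-1) - V_\alpha(0, E) \leq V_\alpha(0, E),
\]
so stopping is optimal at state $(\tau^\alpha_E, E+1)$, giving $\tau^\alpha_{E+1} \leq \tau^\alpha_E$.

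Both (a) and (b) I would establish by value iteration: $V_\alpha$ is the limit of $T_\alpha^n V^{(0)}$ for the Bellman operator $T_\alpha$ and any nonnegative bounded initialiser $V^{(0)}$, so it suffices to verify that $T_\alpha$ preserves each property. The inductive step for (a) rests on a pathwise coupling of two copies of the system driven by the same Poisson energy arrivals but starting at battery levels $E$ and $E+1$, where the extra unit of battery can be carried as \emph{slack} and used at a later decision epoch; property (b) follows from an analogous three-system coupling at levels $E-1, E, E+1$ together with the preservation of concavity under a min-of-affine structure. The main obstacle will be handling the battery cap at $B$ in these couplings, since an energy arrival may cause an overflow in the higher-battery copy but not in the lower-battery copy, which breaks exact pathwise coupling and forces a careful separate treatment of the boundary $E = B$ at each Bellman iteration. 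As a backup, should the coupling prove too delicate at the boundary, a direct exchange argument on an optimal threshold policy from Theorem~\ref{existopthreshold} is available: whenever $\tau^*_\ell < \tau^*_{\ell+1}$, lower $\tau^*_{\ell+1}$ to $\tau^*_\ell$ and use a renewal-reward calculation over the battery-level Markov chain to show that the average age-penalty cannot increase, iterating finitely often to obtain a policy in $\Pi^{\mathrm{MT}}$.
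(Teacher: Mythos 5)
Your overall route---pass to the $\alpha$-discounted problem, prove structural monotonicity of the discounted value function to get threshold monotonicity for each $\alpha$, then let $\alpha\downarrow 0$---is the paper's route. But the deduction as written has a sign error that makes your displayed inequality chain vacuous. You state submodularity as
$V_\alpha(\Delta',E+1)-V_\alpha(\Delta,E+1)\leq V_\alpha(\Delta',E)-V_\alpha(\Delta,E)$,
and from this together with concavity derive $V_\alpha(\tau^\alpha_E,E+1)\leq V_\alpha(0,E)$. That last inequality is true for \emph{every} age, not just $\tau^\alpha_E$: sending an update immediately at state $(\Delta,E+1)$ is always feasible and incurs cost $V_\alpha(0,E)$, so $V_\alpha(\Delta,E+1)\leq V_\alpha(0,E)$ unconditionally. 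The ``$\leq$'' you obtain therefore does not certify that stopping is optimal; to get $\tau^\alpha_{E+1}\leq\tau^\alpha_E$ you need the reverse bound $V_\alpha(\tau^\alpha_E,E+1)\geq V_\alpha(0,E)$, and for that the submodularity must go the other way: the increment in value from age $\Delta$ to $\Delta'$ should be \emph{larger} at the higher battery level, i.e.
$V_\alpha(\Delta',E+1)-V_\alpha(\Delta,E+1)\geq V_\alpha(\Delta',E)-V_\alpha(\Delta,E)$ for $\Delta\leq\Delta'$ (equivalently, the marginal benefit of an extra unit of battery shrinks as the age grows). With that direction, $V_\alpha(\tau^\alpha_E,E+1)-V_\alpha(0,E+1)\geq V_\alpha(\tau^\alpha_E,E)-V_\alpha(0,E)=V_\alpha(0,E-1)-V_\alpha(0,E)\geq V_\alpha(0,E)-V_\alpha(0,E+1)$ by concavity, which gives $V_\alpha(\tau^\alpha_E,E+1)\geq V_\alpha(0,E)$ and hence equality.

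On method: the paper does not run value iteration or a pathwise coupling. It instead uses a one-shot policy-substitution argument in the stopping-problem representation (feeding the optimal stopping kernel for battery $\ell+2$ into the $\ell+1$ problem as a suboptimal policy), which yields directly
\[
J_\alpha(r,\ell+1)-J_\alpha(r,\ell+2)\leq \mathbb{E}\bigl[J_\alpha(0,\min\{\ell+\sigma,B-1\})-J_\alpha(0,\min\{\ell+1+\sigma,B-1\})\bigr],
\]
and then establishes concavity of $J_\alpha(0,\cdot)$ by a \emph{downward} induction starting from $\ell=B-2$, where the $\min\{\cdot,B-1\}$ cap collapses all $\sigma\geq 1$ terms. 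This is precisely the boundary-overflow issue you flag as the ``main obstacle'' for the coupling approach: the paper's substitution-plus-downward-induction absorbs the overflow at $E=B$ for free, whereas a pathwise coupling of two trajectories across the cap does indeed break and would need a separate boundary argument. So beyond the sign correction, you should also expect the coupling to be more delicate than the paper's route, and your backup exchange argument, while plausible in spirit, is left unverified (one would need to show, via the renewal-reward expression over the battery Markov chain, that lowering $\tau_{\ell+1}$ to $\tau_\ell$ never increases the average age-penalty, which is a nontrivial calculation).
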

\begin{proof}
See Appendix \ref{proof:existopmothreshold}.
\end{proof}
Theorem \ref{existopmothreshold} implies that in the optimal update policy, update packets are sent out more frequently when the battery level is high and less frequently when the battery level is low. This result is quite intuitive: If the battery is full, arrival energy cannot be harvested; if the battery is empty, update packets cannot be transmitted when needed and the age increases. Hence, both battery overflow and outage are harmful. Monotone threshold policies can address this issue. When the battery level $l$ is high, the threshold $\tau_l$ is small to reduce the chance of battery overflow; when the battery level $l$ is low, the threshold $\tau_l$ is high to avoid battery outage.  
\begin{figure}[htpb]
    \centering \includegraphics[scale=0.72]{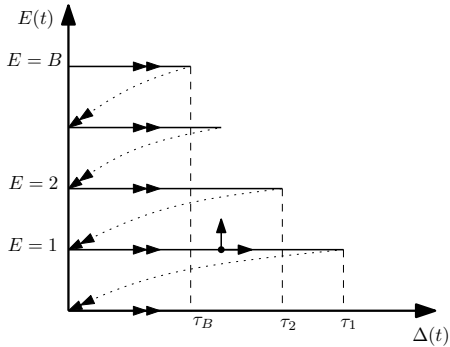}
\caption{An illustration of a monotone threshold policy.}
\label{statespacew} 
\end{figure}

For a policy in $\Pi^{\rm{MT}}$, the state $(\Delta(t), E(t))$ does not spend a measurable amount of time anywhere $\Delta(t)\geq \tau_{E(t)}$ in which an update is sent out instantly reducing the battery level. Otherwise, the battery level is incremented upon energy harvests while the age is increasing linearly in time. The illustration in Fig. \ref{statespacew} shows the time evolution of the state $(\Delta(t), E(t))$ for policies in $\Pi^{\rm{MT}}$. If the energy level is $E(Z_{k})=j$ upon the previous update, then the inter-update time $X_{k+1} \in [\tau_m, \tau_{m-1}]$ holds if and only if  $m-j$ packets arrive during the inter-update time. In other words, reaching the battery state $m$ or higher is necessary and sufficient for the next inter-update duration being shorter than some $x$ when $x \in [\tau_{m},\tau_{m-1})$. Let $Y_{i}$ denote the duration required for $i\geq 1$ successive energy arrivals, which obeys the Erlang distribution at rate $\mu_H$ with parameter $i$,
\begin{equation}
\label{erlangdist}
P(Y_{i}\leq x)= 1- \displaystyle\sum_{v=0}^{i-1}\frac{1}{v!} e^{-\mu_H x}(\mu_H x)^{v},
\end{equation}

and let $Y_{i}=0$ for $i \leq 0$.

Accordingly, for  policies in $\Pi^{\rm{MT}}$, the cumulative distribution function (CDF) of inter-update durations,  can be expressed  as

\begin{eqnarray}
&\Pr(X_{k+1}\leq x \mid  E(Z_{k})=j)  = &\nonumber \\
&\begin{cases} 0, &\text{if~} x < \tau_{B}^, \\
\Pr(Y_{m-j}\leq x), &\text{if~}  \tau_{m}\leq x < \tau_{m-1}, \forall m \in \lbrace 2,...,B\rbrace, \\
\Pr(Y_{1-j}\leq x),  &\text{if~} \tau_{1}\leq x, 
\end{cases}&\nonumber\\ 
\label{cdferlang}
\end{eqnarray}

From (\ref{cdferlang}), an expression for the transition probability $\Pr(E(Z_{k+1})=i \mid  E(Z_{k})=j)$ for $i= 0,1, ...., B-1$ can be derived\footnote{Note that the event $E(Z_{k+1})=i$ happens if and only if $X_{k+1} \in [\tau_{i+1}, \tau_{i})$, accordingly $\Pr(E(Z_{k+1})=i \mid  E(Z_{k})=j)=\Pr(X_{k+1}\leq \tau_{i} \mid  E(Z_{k})=j)-\Pr(X_{k+1}\leq \tau_{i+1} \mid  E(Z_{k})=j)$.}
\begin{eqnarray}
&\Pr(E(Z_{k+1})=i \mid  E(Z_{k})=j)
=&\nonumber\\
&\begin{cases} \Pr(Y_{B-j}\leq \tau_{B-1}), &\text{if~} i=B-1, \\
\Pr(Y_{1+i-j}\leq \tau_{i})- \Pr(Y_{2+i-j}\leq \tau_{i+1}), &\text{if~}  i <B-1, \\
\end{cases}&\nonumber\\ 
\label{transitionerlang}
\end{eqnarray}

Hence, energy states sampled at update instants can be described as a Discrete Time Markov Chain (DTMC) with the transition probabilities in (\ref{transitionerlang}) (See Fig. \ref{DTMCw}). When thresholds are finite, this DTMC is ergodic as any energy state is reachable from any other energy state in $B-1$ steps with positive probability. 

\begin{figure}[htpb]
    \centering \includegraphics[scale=0.48]{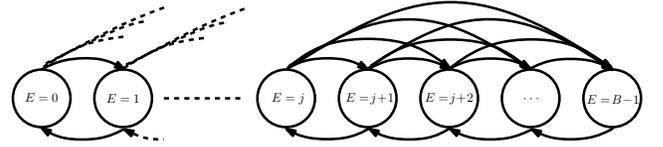}
\caption{The DTMC for energy states sampled at update times.}
\label{DTMCw} 
\end{figure}

 Any optimal policy in $\Pi^{\rm{MT}}$ has the following property:
\begin{theorem}
\label{fixedpthreshold}
An optimal policy for solving (\ref{minavaged}) is a monotone threshold policy  that satisfies the following 
\begin{equation}
p(\tau^*_B) = \bar{p}_{\pi^*} =  \min_{\pi\in\Pi^{\mathsf{online}}} \bar{p}_\pi. 
\end{equation}
where $\pi^*$ is a monotone threshold policy solving (\ref{minavaged}) and $\tau^*_B$ is its age threshold for the full battery case.
\end{theorem}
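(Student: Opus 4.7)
The plan is to prove $p(\tau^*_B)=\bar{p}^*$ by a two-sided local variational argument at the full-battery threshold, using a relative (differential) value function for the average-cost MDP. By Theorem \ref{existopmothreshold}, let $\pi^*\in\Pi^{\rm{MT}}$ be an optimal monotone threshold policy with thresholds $\tau^*_1\geq\cdots\geq\tau^*_B$ attaining the infimum in (\ref{minavaged}), and set $\bar{p}^*:=\bar{p}_{\pi^*}$. I would construct a bounded relative value function $V^*(\Delta,E)$ on $[0,\infty)\times\{0,\ldots,B\}$ by recycling the vanishing-discount-factor machinery from the proof of Theorem \ref{existopthreshold}: take $V^*$ as the subsequential limit of $V_\alpha(\cdot,\cdot)-V_\alpha(0,B-1)$ as $\alpha\downarrow 0$, where $V_\alpha$ is the $\alpha$-discounted optimal value. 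Ergodicity of the embedded chain $\{E(Z_k)\}$ on $\{0,\ldots,B-1\}$ (Fig.\ \ref{DTMCw}), together with the sub-exponential growth condition on $p$, then guarantees that $V^*$ solves the average-cost Bellman equation.

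Because the battery cannot grow past $B$, on the slice $E=B$ the Bellman equation specializes neatly. In the continuation region $\{\Delta<\tau^*_B\}$ it reduces to the ODE $\partial_\Delta V^*(\Delta,B)=\bar{p}^*-p(\Delta)$, and in the stopping region $\{\Delta\geq\tau^*_B\}$ one has $V^*(\Delta,B)=V^*(0,B-1)$ because an update resets the age and decrements the battery. Continuity of $V^*$ at $\tau^*_B$ and integration give
$$
V^*(\Delta,B)=V^*(0,B-1)+\int_{\Delta}^{\tau^*_B}\bigl[p(s)-\bar{p}^*\bigr]ds,\qquad \Delta\in[0,\tau^*_B].
$$
I would then close the proof with two local comparisons at $\tau^*_B$. \emph{Waiting past $\tau^*_B$ cannot improve the cost:} at state $(\tau^*_B,B)$, continuing for an additional $\epsilon>0$ contributes $\int_{\tau^*_B}^{\tau^*_B+\epsilon}[p(s)-\bar{p}^*]\,ds$ on top of $V^*(0,B-1)$; this must be non-negative, which after dividing by $\epsilon$ and letting $\epsilon\downarrow 0$ gives $p(\tau^*_B)\geq\bar{p}^*$. \emph{Stopping before $\tau^*_B$ cannot improve the cost:} since $\pi^*$ continues at $(\tau^*_B-\epsilon,B)$, the relative value must satisfy $V^*(\tau^*_B-\epsilon,B)\leq V^*(0,B-1)$, i.e.\ $\int_{\tau^*_B-\epsilon}^{\tau^*_B}[p(s)-\bar{p}^*]\,ds\leq 0$, which yields $p(\tau^*_B)\leq\bar{p}^*$ in the limit. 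The two inequalities together produce the required equality $p(\tau^*_B)=\bar{p}^*=\bar{p}_{\pi^*}=\min_{\pi\in\Pi^{\mathsf{online}}}\bar{p}_\pi$.

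The main obstacle I anticipate is making the relative value function $V^*$ rigorous on the uncountable state space $[0,\infty)\times\{0,\ldots,B\}$ and justifying the differentiation step that converts the two Bellman inequalities into pointwise bounds on $p(\tau^*_B)$. For the first part I would adapt the vanishing-discount-factor apparatus from the proof of Theorem \ref{existopthreshold}, using the growth hypothesis $\int_0^\infty p(t)e^{-\alpha t}\,dt<\infty$ to keep $V_\alpha$ uniformly bounded in $\alpha$. For the second part, a regenerative-cycle decomposition at the recurrent embedded state $E(Z_k)=B-1$ (reachable from any $j$ within $B-1$ steps with positive probability) localizes a $\tau^*_B$-perturbation to a single full-battery waiting segment per cycle, so the first-order change in $\bar{p}$ depends only on the local behavior of $p$ near $\tau^*_B$ and not on the induced shift in the stationary distribution of $\{E(Z_k)\}$.
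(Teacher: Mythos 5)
Your proposal is correct in outline and gets to the same conclusion, but it takes a genuinely different route from the paper's proof. The paper never constructs an average-cost relative value function or invokes the ACOE. Instead, it stays entirely in the discounted framework: the Snell-envelope analysis in Lemma \ref{stoppingexists} shows that once $E(t)=B$ the gain process $G_t$ is deterministic, so the stopping condition $S_t=G_t$ first holds at the time $t_c$ where $p(t_c-a)=\alpha J_\alpha(0,B-1)$; this reads off $p(\rho_\alpha(B))=\alpha J_\alpha(0,B-1)$ directly, and Feller's Tauberian theorem (already deployed in the proof of Theorem \ref{existopthreshold}) supplies $\lim_{\alpha\downarrow 0}\alpha J_\alpha(0,B-1)=\bar{p}_{\pi^*}$. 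You instead pass to the average-cost Bellman equation, derive the ODE $\partial_\Delta V^*(\Delta,B)=\bar{p}^*-p(\Delta)$ on the full-battery slice, and do a two-sided local comparison at $\tau_B^*$. Both exploit the same structural fact---when the battery is full the dynamics between decisions are deterministic---but yours requires considerably more apparatus: you must show the subsequential limit $V^*$ of $V_\alpha-V_\alpha(0,B-1)$ exists, is well-behaved, and solves the ACOE, and then supply a verification argument to justify the two Bellman inequalities you use. The paper sidesteps all of this by extracting exactly the one scalar quantity it needs from the discounted stopping rule. You correctly flag this as the main obstacle; I would add that your claim that $V^*$ can be taken \emph{bounded} on $[0,\infty)\times\{0,\dots,B\}$ is doubtful, since $p(\Delta)\to\infty$ and at $E=0$ one must idle for an exponential waiting time at arbitrary age, so $V^*(\Delta,0)$ will typically grow with $\Delta$. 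This is not fatal for your argument (you only use $V^*$ near $(\tau_B^*,B)$ and $(0,B-1)$), but the blanket boundedness claim should be dropped. Also, like the paper's definition of $t_c$, your $\epsilon\downarrow 0$ limits only pin down $p(\tau_B^{*-})\le\bar{p}^*\le p(\tau_B^{*+})$, so strict equality needs continuity of $p$ at $\tau_B^*$---a shared implicit assumption rather than a defect unique to your route.
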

\begin{proof}
See Appendix \ref{proof:fixedpthreshold}.
\end{proof}

The result in Theorem \ref{fixedpthreshold} exhibits a structural property of optimal policies which also appears in the sampling problem that was studied in \cite{DBLP:journals/corr/abs-1812-07241} . The sampling problem in \cite{DBLP:journals/corr/abs-1812-07241} considered sources without energy harvesting, where the packet transmission times were \emph{i.i.d.} and non-zero.  On the one hand, the optimal sampling policy in Theorem 1 of \cite{DBLP:journals/corr/abs-1812-07241} is a threshold policy on an expected age penalty term, and the threshold is exactly equal to the optimal objective value. On the other hand, a sampling problem for an energy harvesting source with zero packet transmission time is considered in the current paper. The optimal sampling policy in  Theorem \ref{fixedpthreshold} can be rewritten as  
\[
Z_{k+1} =\inf \left\lbrace t \geq Z_{k} : p(\Delta(t)) \geq p(\tau_{E(t)}^{*}) \right\rbrace 
\]
which is a multi-threshold policy on the age penalty function, each threshold  $p(\tau_{\ell}^{*})$ corresponding to a battery level $\ell$. Further, the threshold $p(\tau_{B}^{*})$ associated with a full battery size $E(t)=B$  is equal to the optimal objective value. The results in these two studies are similar to each other. Together, they provide a unified view on optimal sampler design for sources both with and without energy harvesting capability. The proof techniques in these two studies are of fundamental difference.  

\subsection{Average Age Case}
If we take the age-penalty function as an identity function, i.e., $p(\Delta)=\Delta$, then (\ref{minavaged}) becomes the problem of minimizing the time-average expected age. In this case, the result in Theorem \ref{fixedpthreshold} implies that in optimal monotone threshold policies, inter-update durations can be small as much as the minimum average AoI only when the battery is full. From results in
~\cite{TanISIT2017} and ~\cite{WuYang2017}, we know that the minimum average AoI for the infinite battery case is $\frac{1}{2\mu_{H}}$ and this can be achieved asymptotically using the best-effort scheme in ~\cite{WuYang2017} or with a threshold policy ~\cite{TanISIT2017} where all thresholds are nearly equal to $\frac{1}{\mu_{H}}$. On the other hand, according to  Theorem \ref{fixedpthreshold}, the optimal threshold for the full battery level tends to $\frac{1}{2\mu_{H}}$ as the battery capacity increases. This shows that the optimal monotone threshold policies remain structurally dissimilar to asymptotically optimal policies when the battery capacity is approaching to infinity. The result is more useful when the battery capacity is finite as it may lead to the optimal threshold values of the other battery levels. We will use this in  an algorithm for finding near optimal policies for any given integer sized battery capacity.      
In addition, the special case of Theorem \ref{fixedpthreshold} for average age \cite{8437573} can be derived from a more general result which we  provide in Lemma \ref{diffmoments}. This result shows a relation between the partial derivatives of a non-negative random variable with respect to the thresholds determining the random variable in a similar way to the inter-update duration case.   
\begin{lemma}
\label{diffmoments}
Suppose $X$ is a r.v. that satisfies the following:
\begin{eqnarray*}
\Pr(X\leq x)  = 
\begin{cases} 0 &\text{if~} x < \tau_{B}, \\
F_{i}(x) &\text{if~}  \tau_{i}\leq x < \tau_{i-1}, \forall i \in \lbrace 2,...,B \rbrace , \\
F_{1}(x)  &\text{if~} \tau_{1}\leq x, 
\end{cases}&
\end{eqnarray*}
where $0<\tau_{B}\leq ... \leq \tau_{2} \leq \tau_{1}$ and for each $i \in \lbrace 1,...,B \rbrace$  $F_{i}(x)$ is the CDF of a non-negative random variable. Then:
\begin{equation*}
\frac{\partial}{\partial \tau_{i}}\mathbb {E}\left[ X^{2} \right]=2\tau_{i}\frac{\partial}{\partial \tau_{i}}\mathbb {E}\left[ X \right].
\end{equation*}
\end{lemma}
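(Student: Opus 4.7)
The plan is to express both moments as tail integrals and differentiate via Leibniz's rule, which reduces the identity to a direct calculation involving boundary terms at $x=\tau_i$.

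First I would use the standard formulas for non-negative random variables,
\[
\mathbb{E}[X] = \int_0^\infty \Pr(X > x)\,dx, \qquad \mathbb{E}[X^2] = \int_0^\infty 2x\,\Pr(X>x)\,dx,
\]
and split each integral according to the piecewise description of the CDF, yielding
\[
\mathbb{E}[X] = \tau_B + \sum_{i=2}^{B}\int_{\tau_i}^{\tau_{i-1}}\!(1-F_i(x))\,dx + \int_{\tau_1}^{\infty}\!(1-F_1(x))\,dx,
\]
\[
\mathbb{E}[X^2] = \tau_B^2 + \sum_{i=2}^{B}\int_{\tau_i}^{\tau_{i-1}}\!2x(1-F_i(x))\,dx + \int_{\tau_1}^{\infty}\!2x(1-F_1(x))\,dx.
\]
In both representations the functions $F_i$ themselves are fixed in $x$ and do not depend on the $\tau_j$'s, so only the integration limits (and the first explicit $\tau_B$ term) carry the dependence on the thresholds.

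Next, for a generic interior threshold $\tau_i$ with $2\le i\le B-1$, I would apply the Leibniz rule. The variable $\tau_i$ appears as the upper limit of the integral with integrand involving $F_{i+1}$ and as the lower limit of the integral with integrand involving $F_i$, so
\[
\frac{\partial \mathbb{E}[X]}{\partial \tau_i} = (1-F_{i+1}(\tau_i)) - (1-F_i(\tau_i)) = F_i(\tau_i) - F_{i+1}(\tau_i),
\]
\[
\frac{\partial \mathbb{E}[X^2]}{\partial \tau_i} = 2\tau_i(1-F_{i+1}(\tau_i)) - 2\tau_i(1-F_i(\tau_i)) = 2\tau_i\bigl(F_i(\tau_i)-F_{i+1}(\tau_i)\bigr),
\]
which is precisely $2\tau_i$ times the first derivative. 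The two boundary indices $i=B$ and $i=1$ are handled in exactly the same spirit: for $i=B$ one also picks up the explicit $\tau_B$ (respectively $\tau_B^2$) contribution, which yields $\partial\mathbb{E}[X]/\partial\tau_B = F_B(\tau_B)$ and $\partial\mathbb{E}[X^2]/\partial\tau_B = 2\tau_B F_B(\tau_B)$; for $i=1$ only the two adjacent pieces contribute, giving $F_1(\tau_1)-F_2(\tau_1)$ and its $2\tau_1$ multiple. In every case the $2x$ weight in the tail integral for $\mathbb{E}[X^2]$ gets evaluated at $x=\tau_i$, which is the whole point of the identity.

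Since the computation is essentially mechanical, I do not anticipate a genuine obstacle. The only subtlety is bookkeeping at the boundaries $i=1$ and $i=B$ (and noting that the CDF jumps at $x=\tau_i$ when $F_i(\tau_i)\neq F_{i+1}(\tau_i)$ do not disturb the calculation, because the tail-integral formulas are insensitive to a countable set of discontinuities). Hence the argument can be presented compactly by treating the generic case $2\le i\le B-1$ and remarking that the edge cases are handled by the same Leibniz-rule computation with the convention $F_{B+1}\equiv 0$ and the explicit $\tau_B$/$\tau_B^2$ contributions.
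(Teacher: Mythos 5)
Your proof is correct and follows essentially the same route as the paper's: both express $\mathbb{E}[X]$ and $\mathbb{E}[X^2]$ as tail integrals $\int \Pr(X>x)\,dx$ and $\int 2x\Pr(X>x)\,dx$, split at the thresholds, observe that $\tau_i$ enters only through the limits of the two adjacent pieces, and apply the Leibniz rule so that the $2x$ weight evaluates to $2\tau_i$ at the surviving boundary. Your write-up is somewhat more explicit about the boundary cases $i=1$ and $i=B$ (where the paper uses the convention $\tau_{B+1}=0$ and a compressed sum), but the underlying calculation is identical.
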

\begin{proof}
See Appendix \ref{proof:diffmoments}.
\end{proof}
\begin{corollary}
The inter-update intervals, $X$, for any $\pi \in \Pi^{\rm{MT}}$ satisfy the following: 
\begin{equation}
\frac{\partial}{\partial \tau_{i}}\mathbb {E}\left[ X^{2} \mid E=j\right]=2\tau_{i}\frac{\partial}{\partial \tau_{i}}\mathbb {E}\left[ X \mid E=j\right],
\end{equation}
$\forall(i,j)\in \lbrace 1, 2,...,B\rbrace^{2}$ where $\mathbb {E}\left[ X \mid E=j\right] \triangleq \mathbb {E}\left[ X_{k} \mid E(Z_{k})=j\right]$ and $\mathbb {E}\left[ X^{2} \mid E=j\right] \triangleq \mathbb {E}\left[ X_{k}^{2} \mid E(Z_{k})=j\right]$ .
\end{corollary}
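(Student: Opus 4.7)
The plan is to express both moments through the tail-integral representation valid for non-negative random variables, namely $\mathbb{E}[X]=\int_{0}^{\infty}(1-F_X(x))\,dx$ and $\mathbb{E}[X^{2}]=\int_{0}^{\infty}2x(1-F_X(x))\,dx$, and then differentiate under the integral sign using Leibniz's rule. Substituting the piecewise form of $F_X$ prescribed in the statement gives the decompositions
\begin{align*}
\mathbb{E}[X] &= \tau_B + \sum_{i=2}^{B}\int_{\tau_i}^{\tau_{i-1}}\!(1-F_i(x))\,dx + \int_{\tau_1}^{\infty}\!(1-F_1(x))\,dx, \\
\mathbb{E}[X^{2}] &= \tau_B^{2} + \sum_{i=2}^{B}\int_{\tau_i}^{\tau_{i-1}}\!2x(1-F_i(x))\,dx + \int_{\tau_1}^{\infty}\!2x(1-F_1(x))\,dx,
\end{align*}
where the constant contributions $\tau_B$ and $\tau_B^{2}$ come from the interval $[0,\tau_B)$, on which $1-F_X\equiv 1$. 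Crucially, each $F_i$ is a fixed CDF independent of all the thresholds, so differentiating with respect to $\tau_i$ only touches the integration limits.

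Next I would apply Leibniz's rule. For any $i\in\{2,\dots,B-1\}$, the parameter $\tau_i$ appears as the upper limit of the integral over $[\tau_{i+1},\tau_i)$ and as the lower limit of the integral over $[\tau_i,\tau_{i-1})$. The first-moment derivative is therefore
\[
\frac{\partial \mathbb{E}[X]}{\partial \tau_i} = (1-F_{i+1}(\tau_i)) - (1-F_i(\tau_i)) = F_i(\tau_i) - F_{i+1}(\tau_i),
\]
while the second-moment integrand carries an extra factor of $2x$, so the boundary evaluations at $x=\tau_i$ acquire an additional multiplicative factor of $2\tau_i$, yielding
\[
\frac{\partial \mathbb{E}[X^{2}]}{\partial \tau_i} = 2\tau_i(1-F_{i+1}(\tau_i)) - 2\tau_i(1-F_i(\tau_i)) = 2\tau_i\bigl(F_i(\tau_i) - F_{i+1}(\tau_i)\bigr).
\]
Comparing the two displays establishes the claimed identity for the interior indices. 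The essential mechanism is transparent: differentiation produces only boundary terms evaluated at $x=\tau_i$, and the integrand of $\mathbb{E}[X^{2}]$ is exactly $2x$ times the integrand of $\mathbb{E}[X]$, so every such boundary term picks up the factor $2\tau_i$.

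The only place care is needed is at the endpoints $i=1$ and $i=B$. For $i=B$, the relevant terms are $\tau_B$ and $\tau_B^{2}$ (from $[0,\tau_B)$) together with the integral over $[\tau_B,\tau_{B-1})$; differentiation gives $\partial_{\tau_B}\mathbb{E}[X]=1-(1-F_B(\tau_B))=F_B(\tau_B)$ and $\partial_{\tau_B}\mathbb{E}[X^{2}]=2\tau_B-2\tau_B(1-F_B(\tau_B))=2\tau_B F_B(\tau_B)$, which again match up to the factor $2\tau_i$. For $i=1$, the interchange of differentiation and integration over the infinite tail $[\tau_1,\infty)$ must be justified, which follows from the finiteness of $\mathbb{E}[X^{2}]$ (equivalently, integrability of $x(1-F_1(x))$), and then the same boundary-term computation yields the identity. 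The corollary is then immediate: for a monotone threshold policy $\pi\in\Pi^{\rm MT}$ and conditioning on $E(Z_k)=j$, the conditional CDF in (\ref{cdferlang}) fits the template of the lemma with $F_i(x)=\Pr(Y_{i-j}\leq x)$, which does not depend on the thresholds, so the lemma applies verbatim to produce the stated derivative relation for $\mathbb{E}[X\mid E=j]$ and $\mathbb{E}[X^{2}\mid E=j]$.
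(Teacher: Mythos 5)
Your argument is correct and follows the same route as the paper's proof of Lemma~\ref{diffmoments}: you write both moments via the tail-integral (survival-function) representation, split the integral at the thresholds, differentiate with Leibniz's rule, and observe that the boundary terms of the second-moment integrand are exactly $2\tau_i$ times those of the first-moment integrand; the corollary then follows because the conditional CDF in~(\ref{cdferlang}) has $F_m(x)=\Pr(Y_{m-j}\le x)$, which is threshold-independent, so Lemma~\ref{diffmoments} applies verbatim. Your treatment is, if anything, slightly more careful than the paper's (you handle the endpoint indices $i=1,B$ explicitly and note the integrability condition needed to differentiate under the improper integral), but the underlying mechanism is identical.
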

Note that the transition probabilities (\ref{transitionerlang}) do not depend on $\tau_{B}$ hence the steady-state probabilities obtained from (\ref{transitionerlang}) also do not depend on $\tau_{B}$. This leads to a property of $\tau_{B}$ the average age case of Theorem \ref{fixedpthreshold} as shown in \cite{8437573}. 
The unit-battery case , i.e., $B=1$ case was solved in ~\cite{TanISIT2017} and  ~\cite{WuYang2017}. For completeness, this result is summarized in Theorem \ref{B1age}.
\begin{theorem}
\label{B1age}
When $B=1$, the average age $\bar{\Delta}$ can be expressed as
\begin{equation}
\label{B1avage}
\bar{\Delta}=\frac{\frac{1}{2}(\mu_{H}\tau_{1})^{2}+ e^{-\mu_{H}\tau_{1}}\!(\mu_{H}\tau_{1}+1)}{\mu_{H}(\mu_{H}\tau_{1}+e^{-\mu_{H}\tau_{1}})},
\end{equation}
and $\tau_{1}^{*}=\bar{\Delta}_{\pi^*}=\frac{1}{\mu_{H}}2W(\frac{1}{\sqrt{2}})$ where $W(\cdot)$ is the Lambert-W function.

\end{theorem}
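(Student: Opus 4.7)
The plan is to reduce the problem to a renewal computation and then apply Theorem~\ref{fixedpthreshold} to pin down the optimal threshold via a scalar fixed-point equation.

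First I would describe the inter-update interval explicitly. With $B=1$, after any update the battery is empty, so the scheduler must wait for the first energy arrival. Let $T\sim\mathrm{Exp}(\mu_H)$ denote the waiting time until that arrival; by the memoryless property the successive $T$'s are i.i.d., so inter-update intervals form a renewal sequence. Under a threshold policy with threshold $\tau_1$, the next update occurs at the larger of the two times ``age reaches $\tau_1$'' and ``energy arrives,'' so
\begin{equation*}
X=\max(\tau_1,T).
\end{equation*}
Integration by parts (or the substitution $u=t-\tau_1$ in the tail integral) then yields
\begin{equation*}
\mathbb{E}[X]=\tau_1+\frac{e^{-\mu_H\tau_1}}{\mu_H},\qquad
\mathbb{E}[X^2]=\tau_1^{\,2}+\frac{2e^{-\mu_H\tau_1}}{\mu_H^{2}}(1+\mu_H\tau_1).
\end{equation*}

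Next I would invoke the standard renewal-reward identity for time-average age, $\bar{\Delta}=\mathbb{E}[X^2]/(2\,\mathbb{E}[X])$, which is valid here because the state $(\Delta(t),E(t))$ resets to $(0,0)$ at every update under the $B=1$ threshold policy, making $\{X_k\}$ i.i.d. Substituting the two moments above and multiplying numerator and denominator by $\mu_H^{2}$ produces exactly the expression~\eqref{B1avage} claimed in the theorem. No further identities are needed for this part; it is a direct computation.

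For the optimal threshold I would invoke Theorem~\ref{fixedpthreshold}. With the linear penalty $p(\Delta)=\Delta$ and $B=1$, the theorem specializes to $\tau_1^{*}=\bar{\Delta}_{\pi^{*}}$, so $\tau_1^{*}$ is a fixed point of the map $\tau_1\mapsto\bar{\Delta}(\tau_1)$. Writing $x=\mu_H\tau_1^{*}$ and clearing the fraction in~\eqref{B1avage} gives
\begin{equation*}
x\bigl(x+e^{-x}\bigr)=\tfrac{1}{2}x^{2}+e^{-x}(x+1),
\end{equation*}
and the terms $xe^{-x}$ on each side cancel, leaving the clean transcendental equation $\tfrac12 x^{2}=e^{-x}$, i.e.\ $x^{2}e^{x}=2$. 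Taking square roots yields $\tfrac{x}{2}\,e^{x/2}=\tfrac{1}{\sqrt{2}}$, which is the defining equation of the Lambert-$W$ function and gives $x=2W(1/\sqrt{2})$, hence $\tau_1^{*}=\frac{1}{\mu_H}\,2W(1/\sqrt{2})=\bar{\Delta}_{\pi^{*}}$.

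I do not anticipate a substantial obstacle: the first two steps are routine calculus on an i.i.d.\ renewal sequence, and the third step is an algebraic simplification that collapses unexpectedly nicely because the $xe^{-x}$ cross-terms cancel. The only subtle point worth flagging is the justification of the renewal-reward formula, which I would support by noting that under the $B=1$ threshold policy the system regenerates at every update (both age and battery return to zero), so $X_k$ are i.i.d.\ with finite second moment, and the area under the age sawtooth over one cycle equals $X_k^{2}/2$; standard renewal theory then delivers $\bar{\Delta}=\mathbb{E}[X^2]/(2\mathbb{E}[X])$ almost surely.
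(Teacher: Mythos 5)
Your proposal is correct and follows essentially the same route as the paper: compute the two moments of $X=\max(\tau_1,T)$, apply the renewal-reward formula $\bar{\Delta}=\mathbb{E}[X^2]/(2\mathbb{E}[X])$ (the paper does this via Lemma~\ref{convage} and Lemma~\ref{ergoupdates}, you argue it directly from the regenerative/i.i.d.\ structure at $B=1$, which is equivalent), and then invoke Theorem~\ref{fixedpthreshold} to get the fixed point $\tau_1^*=\bar{\Delta}_{\pi^*}$ and reduce it to $\tfrac12 x^2=e^{-x}$, hence $x=2W(1/\sqrt{2})$. Your dimensionless algebra is in fact a bit cleaner than the paper's (which has a minor typo $(\tau_1^*)^2=\tfrac{2}{\mu_H}e^{-\mu_H\tau_1^*}$ where the exponent on $\mu_H$ should be $2$), but the substance is the same.
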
 
\begin{proof}
See Appendix \ref{proof:B1age}.
\end{proof}
\begin{theorem}
\label{B2age}
When $B=2$, the average age $\bar{\Delta}$ can be expressed as:
\begin{eqnarray}
&\bar{\Delta}=& \nonumber\\
&\frac{\!\frac{\alpha_{2}^{2}}{2}\!+\! e^{-\alpha_{2}}\![\alpha_{2}+1+\rho_{1}(\alpha_{2}^{2}+2\alpha_{2}+2)]\! -e^{-\alpha_{1}}\![\alpha_{1}+1+\rho_{1}(\alpha_{1}^{2}+\alpha_{1}+1)]}{\mu_{H}\left( \alpha_{2}+e^{-\alpha_{2}}[1+\rho_{1}(\alpha_{2}+1)]-e^{-\alpha_{1}}[1+\rho_{1}\alpha_{1}]\right) },&
\label{eq:B2age}
\end{eqnarray}
where
\[
\rho_{1}=\frac{e^{-\alpha_{1}}}{1-e^{-\alpha_{1}}\alpha_{1}},
\]
and
\[
\alpha_{1}=\mu_{H}\tau_{1}, \alpha_{2}=\mu_{H}\tau_{2}.
\]

\end{theorem}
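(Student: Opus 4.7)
The plan is to apply the semi-Markov renewal-reward theorem to the cycle structure induced by the update epochs, taking the embedded Markov chain to be the post-update battery level $E(Z_k)\in\{0,1\}$ with transitions given by (\ref{transitionerlang}), and the cycle length to be the inter-update interval $X_{k+1}$ with conditional CDF given by (\ref{cdferlang}). Since the age grows linearly from $0$ to $X_{k+1}$ during each cycle, the accumulated age-reward in that cycle equals $X_{k+1}^{2}/2$, and ergodicity of the embedded chain (noted right after (\ref{transitionerlang})) gives
\[
\bar{\Delta}\;=\;\frac{\pi_0\,\mathbb{E}[X^{2}\mid E=0]+\pi_1\,\mathbb{E}[X^{2}\mid E=1]}{2\bigl(\pi_0\,\mathbb{E}[X\mid E=0]+\pi_1\,\mathbb{E}[X\mid E=1]\bigr)}.
\]
Proving (\ref{eq:B2age}) thus reduces to evaluating the four conditional moments and the stationary weights $(\pi_0,\pi_1)$, and then regrouping terms.

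For the conditional moments, specializing (\ref{cdferlang}) to $B=2$ gives, for each $j\in\{0,1\}$, a mixed distribution on $[\tau_2,\infty)$ consisting of a point mass at $\tau_2$ (from the cases where the Erlang time $Y_{2-j}$ undershoots $\tau_2$ and the scheduler must wait out the threshold), an Erlang-$(2-j)$ density on $(\tau_2,\tau_1)$, and for $j=0$ an additional point mass at $\tau_1$ together with an exponential tail above $\tau_1$. Direct integration of $x$ and $x^{2}$ against these pieces yields closed forms in $\alpha_1,\alpha_2,e^{-\alpha_1},e^{-\alpha_2}$; in particular one finds $\mu_H\,\mathbb{E}[X\mid E=1]=\alpha_2+e^{-\alpha_2}-e^{-\alpha_1}$ and $\tfrac{1}{2}\mu_H^{2}\,\mathbb{E}[X^{2}\mid E=1]=\tfrac{\alpha_2^{2}}{2}+e^{-\alpha_2}(\alpha_2+1)-e^{-\alpha_1}(\alpha_1+1)$, with analogous but bulkier expressions for the $E=0$ moments.

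For the stationary distribution, (\ref{transitionerlang}) with $B=2$ specializes to $P_{00}=e^{-\alpha_1}(1+\alpha_1)$, $P_{01}=1-e^{-\alpha_1}(1+\alpha_1)$, $P_{10}=e^{-\alpha_1}$, and $P_{11}=1-e^{-\alpha_1}$. The balance equation $\pi_0 P_{01}=\pi_1 P_{10}$ together with $\pi_0+\pi_1=1$ yields $\pi_0=\frac{e^{-\alpha_1}}{1-\alpha_1 e^{-\alpha_1}}$, which is exactly the quantity $\rho_1$ in the theorem statement, and hence $\pi_1=1-\rho_1$.

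Substituting the four conditional moments and $(\pi_0,\pi_1)=(\rho_1,1-\rho_1)$ into the renewal-reward ratio above, and grouping the resulting expression by its coefficient in $\rho_1$, $e^{-\alpha_1}$, and $e^{-\alpha_2}$, recovers (\ref{eq:B2age}) exactly. The main obstacle is purely algebraic: the $\mathbb{E}[X^{2}\mid E=0]$ integral requires $\int x^{3}e^{-\mu_H x}$ on a finite interval combined with point masses at both $\tau_2$ and $\tau_1$ and an exponential tail, so tracking the $\alpha_1^{2}$, $\alpha_2^{2}$ and cross-terms through the final cancellation is where slips are most likely to creep in.
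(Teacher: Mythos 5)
Your proof is correct and mirrors the paper's: both use the renewal-reward ratio of Lemmas \ref{convage} and \ref{ergoupdates}, evaluate the conditional moments from (\ref{cdferlang}), and solve the two-state embedded DTMC for the stationary weights. One point in your favor: you correctly identify $\rho_1=\Pr(E=0)$, which is the assignment that actually reproduces (\ref{eq:B2age}); the paper's own (\ref{steadyexp}) labels this same quantity as $\Pr(E=1)$, which is a typo, since the balance equations give $\Pr(E=1)=\frac{1-e^{-\alpha_1}(1+\alpha_1)}{1-\alpha_1 e^{-\alpha_1}}$ and $\Pr(E=0)=\frac{e^{-\alpha_1}}{1-\alpha_1 e^{-\alpha_1}}$. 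A very minor prose slip: the $j=1$ conditional law also has an atom of size $e^{-\alpha_1}$ at $\tau_1$ (since $\Pr(Y_0\leq x)=1$ for $x\geq\tau_1$ in (\ref{cdferlang}) while $\lim_{x\uparrow\tau_1}\Pr(Y_1\leq x)=1-e^{-\alpha_1}$), not only at $\tau_2$; your stated moments $\mu_H\mathbb{E}[X\mid E=1]=\alpha_2+e^{-\alpha_2}-e^{-\alpha_1}$ and $\tfrac{\mu_H^2}{2}\mathbb{E}[X^2\mid E=1]=\tfrac{\alpha_2^2}{2}+e^{-\alpha_2}(\alpha_2+1)-e^{-\alpha_1}(\alpha_1+1)$ nonetheless account for it and are correct.
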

\begin{proof}
See Appendix \ref{proof:B2age}.
\end{proof}
\subsection{An Algorithm for Finding Near Optimal Policies}
\label{sec:AoIop}
We propose an algorithm to find a near optimal policy $\pi \in \Pi^{\rm{MT}}$ such that $\bar{\Delta}_{\pi}-\bar{\Delta}_{\pi^*}\leq \frac{1}{2^{q+1}\mu_H}$ for any given $B$ and $q \in \mathbb{Z}^{+}$. Let $m_{1}(\tau_{1},\tau_{2},...,\tau_{B})$ and $m_{2}(\tau_{1},\tau_{2},...,\tau_{B})$ denote the functions such that:
\begin{equation}
m_{1}(\tau_{1},\tau_{2},...,\tau_{B})=\sum_{j=0}^{B-1}\mathbb {E}\left[ X \mid E=j\right] \Pr(E=j),
\label{monotoneXa}
\end{equation}
\begin{equation}
m_{2}(\tau_{1},\tau_{2},...,\tau_{B})=\sum_{j=0}^{B-1} \mathbb {E}\left[ X^{2} \mid E=j\right] \Pr(E=j),
\label{monotoneCa}
\end{equation}
where  $\Pr(E=j)$ is the steady-state probability for energy state $j$, $\mathbb {E}\left[ X \mid E=j\right] \triangleq \mathbb {E}\left[ X_{k} \mid E(Z_{k})=j\right]$ and $\mathbb {E}\left[ X^{2} \mid E=j\right] \triangleq \mathbb {E}\left[ X_{k}^{2} \mid E(Z_{k})=j\right]$.

Note that it is straight forward to derive $m_{1}(\tau_{1},\tau_{2},...,\tau_{B})$ and $m_{2}(\tau_{1},\tau_{2},...,\tau_{B})$ using (\ref{cdferlang}) and (\ref{transitionerlang}), hence we assume these functions are available for any $B$. 

In the below theorem , we state the main result that we will use in an algorithm for finding near optimal policies:
\begin{theorem}
\label{momentsthreshold}
For $B>1$, the equation 
\begin{equation}
2\tau_{B}m_{1}(\tau_{1},\tau_{2},...,\tau_{B})-m_{2}(\tau_{1},\tau_{2},...,\tau_{B})=0,
\label{transcendentalage}
\end{equation}
has a solution with monotone non-increasing thresholds, i.e., $\tau_{B}\leq ... \leq \tau_{2} \leq \tau_{1}$ if and only if $\tau_{B}\geq \bar{\Delta}_{\pi^*}$.
\end{theorem}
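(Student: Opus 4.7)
The plan is to first rewrite the transcendental equation $2\tau_{B}m_{1}-m_{2}=0$ as the fixed-point condition $\tau_{B}=\bar{\Delta}_{\pi}$ for the monotone threshold policy that the thresholds induce. Because the age curve is a sawtooth that resets to zero at each (instantaneous) update, the area accrued over one inter-update interval of length $X$ is exactly $X^{2}/2$, so applying the semi-Markov renewal--reward theorem to the DTMC of energy states sampled at update instants (Fig.~\ref{DTMCw}) yields
\[
\bar{\Delta}_{\pi}=\frac{\mathbb{E}[X^{2}]}{2\,\mathbb{E}[X]}=\frac{m_{2}(\tau_{1},\ldots,\tau_{B})}{2\,m_{1}(\tau_{1},\ldots,\tau_{B})}.
\]
With this identification, the equation is nothing but $\tau_{B}=\bar{\Delta}_{\pi}$.

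The forward implication is then immediate: if monotone thresholds $(\tau_{1},\ldots,\tau_{B})$ satisfy the equation, the induced policy $\pi\in\Pi^{\rm{MT}}$ has $\bar{\Delta}_{\pi}=\tau_{B}$, so Theorem~\ref{existopmothreshold} gives $\tau_{B}=\bar{\Delta}_{\pi}\geq \bar{\Delta}_{\pi^{*}}$.

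For the reverse implication I would show that for every $c\geq\bar{\Delta}_{\pi^{*}}$ one can produce monotone thresholds with $\tau_{B}=c$ satisfying the equation. At $c=\bar{\Delta}_{\pi^{*}}$, Theorem~\ref{fixedpthreshold} supplies $\pi^{*}$ with $\tau_{B}^{*}=\bar{\Delta}_{\pi^{*}}=c$. For $c>\bar{\Delta}_{\pi^{*}}$, introduce the auxiliary problem
\[
\psi(c)=\min_{\pi\in\Pi^{\rm{MT}},\,\tau_{B}\geq c}\bar{\Delta}_{\pi}.
\]
Because the unconstrained optimum has $\tau_{B}^{*}=\bar{\Delta}_{\pi^{*}}<c$ and therefore violates the constraint, the constraint must be active at any minimizer, forcing $\tau_{B}=c$ there. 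Differentiating $\bar{\Delta}=m_{2}/(2m_{1})$ and using that the transition probabilities in (\ref{transitionerlang}) (hence the steady-state distribution) do not depend on $\tau_{B}$, the Corollary to Lemma~\ref{diffmoments} propagates the identity $\partial m_{2}/\partial\tau_{B}=2\tau_{B}\,\partial m_{1}/\partial\tau_{B}$ to the unconditional moments and gives
\[
\frac{\partial\bar{\Delta}}{\partial\tau_{B}}=\frac{\partial m_{1}/\partial\tau_{B}}{m_{1}}\,(\tau_{B}-\bar{\Delta}).
\]
Since $\partial m_{1}/\partial\tau_{B}>0$ (longer $\tau_{B}$ makes inter-update intervals longer), the KKT condition $\partial\bar{\Delta}/\partial\tau_{B}\geq 0$ forces $c=\tau_{B}\geq\bar{\Delta}$ at the minimizer, i.e.\ $\psi(c)\leq c$. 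Finally, on the path-connected slice $M_{c}=\{(\tau_{1},\ldots,\tau_{B-1},c):\tau_{1}\geq\cdots\geq\tau_{B-1}\geq c\}$, the continuous map $(\tau_{1},\ldots,\tau_{B-1})\mapsto\bar{\Delta}(\tau_{1},\ldots,\tau_{B-1},c)$ attains values at least down to $\psi(c)\leq c$ and at least up to $+\infty$ (take $\tau_{1}\to\infty$), so the intermediate value theorem delivers a point in $M_{c}$ at which $\bar{\Delta}=c$, yielding the desired monotone solution.

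The hard part of this plan will be establishing the existence of the constrained minimizer that powers the KKT step, because the monotone feasible region is unbounded above. To close this gap I would prove a coercivity statement asserting that $\bar{\Delta}$ diverges whenever the thresholds grow collectively, allowing the search to be restricted to a compact sublevel set. Continuity and differentiability of $\bar{\Delta}$ in the thresholds follow from the analyticity of the Erlang CDFs in (\ref{cdferlang}) and the smooth dependence of the steady-state probabilities of the ergodic DTMC on its entries; the only further care needed is to verify that the KKT stationarity direction in $\tau_{B}$ remains available when the boundary $\tau_{B-1}=c$ is simultaneously active, which is a routine case analysis.
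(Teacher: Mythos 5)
Your proof takes a genuinely different route from the paper's. The paper argues sufficiency by an explicit alternating construction: starting from the optimal $\pi^{*}$ (for which $\phi=0$ and $\tau_{B}^{*}=\bar{\Delta}_{\pi^{*}}$), it repeatedly (Phase 1) raises $\tau_{B}$ up to $\min\{\tau_{B-1},\,\text{target}\}$ using $\partial\phi/\partial\tau_{B}=2m_{1}>0$ to show $\phi>0$ afterwards, and then (Phase 2) raises $\tau_{1},\ldots,\tau_{B-1}$ uniformly to drive $\phi$ back to zero, and shows the iterates march the smallest threshold all the way to the target. You instead pose a constrained minimization $\psi(c)$, invoke KKT to get $\psi(c)\le c$, and close with an intermediate-value argument on the slice $M_{c}$. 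The shared ingredients are the renewal identity $\bar{\Delta}=m_{2}/(2m_{1})$ and the moment identity from Lemma~\ref{diffmoments}; the necessity direction is identical.

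However, the sufficiency direction of your proposal has two genuine gaps. First, the upper leg of your intermediate-value argument is false: on the slice $M_{c}$, sending $\tau_{1}\to\infty$ with $\tau_{2},\ldots,\tau_{B-1},c$ fixed does \emph{not} send $\bar{\Delta}\to\infty$. Because energy arrivals occur w.p.\ 1 at rate $\mu_{H}$, the time spent with battery level $1$ is stochastically dominated by an $\mathrm{Exp}(\mu_{H})$ duration, after which the finite threshold $\tau_{2}$ governs; every inter-update time is then bounded by $\max\{Y_{B},\tau_{2}\}$, so $\bar{\Delta}$ stays bounded. Even sending all of $\tau_{1},\ldots,\tau_{B-1}\to\infty$ jointly only drives the policy to the unit-battery behavior with threshold $c$, whose $\bar{\Delta}$ is finite (and in fact is $<c$ once $c$ exceeds the one-battery optimal threshold), so the "up to $+\infty$" endpoint simply does not exist. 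Second, the step asserting the constraint $\tau_{B}\ge c$ is active at the constrained minimizer is unsupported: infeasibility of the unconstrained optimum does not force the constrained optimum onto the boundary in a non-convex problem, and without activity you cannot conclude either that $\psi(c)$ is attained on $M_{c}$ or that the KKT sign condition yields $\psi(c)\le c$ (an interior stationary point would instead give $\psi(c)=\tau_{B}'>c$). You flag the existence/coercivity issue and the $\tau_{B-1}=c$ corner case, but these two gaps are prior to both and would need to be repaired before the KKT/IVT outline can go through; the paper's constructive Phase~1/Phase~2 argument sidesteps them by never needing a global constrained minimizer or a divergence of $\bar{\Delta}$ on the slice.
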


Algorithm \ref{alg1} uses this result to find a near optimal policy $\pi \in \Pi^{\rm{MT}}$ such that $\bar{\Delta}_{\pi}-\bar{\Delta}_{\pi^*}\leq \frac{1}{2^{q+1}\mu_H}$. Each iteration in Algorithm \ref{alg1} halves the interval where the minimum average AoI can be found based on the existence of solution to (\ref{momentsthreshold}) with the current estimate of the smallest threshold $\hat{\tau}_{B}$.  Accordingly, it is guaranteed that Algorithm \ref{alg1} finds a solution within a gap to the optimal value that is $\frac{1}{2^{q+1}\mu_H}$.  

 Algorithm \ref{alg1} assumes a numerical solver that can solve the transcendental equation in (\ref{transcendentalage}), however, the exact solution is required only once at the final step while iterations only require verifying the existence of a solution to (\ref{momentsthreshold}).
\begin{algorithm} 
\caption{Find $\pi \in \Pi^{\rm{MT}}$ such that $\bar{\Delta}_{\pi}-\bar{\Delta}_{\pi^*}\leq \frac{1}{2^{q+1}\mu_H}$} 
\label{alg1} 
\begin{algorithmic} 
    \REQUIRE $B \geq 1 \wedge q \geq 1$
    \ENSURE $\bar{\Delta}_{\pi}-\bar{\Delta}_{\pi^*}\leq \frac{1}{2^{q+1}\mu_H}$
    \STATE $\tau_{B}^{-}\leftarrow \frac{1}{2\mu_{H}}$, $\tau_{B}^{+}\leftarrow \frac{1}{\mu_{H}}$
   
    \FOR{$i=1,2,...,q$}
    	\STATE $\hat{\tau}_{B}\leftarrow\frac{\tau_{B}^{-}+\tau_{B}^{+}}{2}$
        \IF{$\exists \tau_{B-1}\leq ... \leq \tau_{2} \leq \tau_{1}$ s.t. $\tau_{B-1}\geq \hat{\tau}_{B}$ and\\
        $2\hat{\tau}_{B}m_{1}(\tau_{1},\tau_{2},...,\hat{\tau}_{B})-m_{2}(\tau_{1},\tau_{2},...,\hat{\tau}_{B})=0$}
        
            \STATE $\tau_{B}^{+}\leftarrow \hat{\tau}_{B}$
            
        \ELSE
            \STATE $\tau_{B}^{-}\leftarrow \hat{\tau}_{B}$
        \ENDIF
    \ENDFOR 
    \STATE Solve $2\hat{\tau}_{B}m_{1}(\tau_{1},\tau_{2},...,\hat{\tau}_{B})-m_{2}(\tau_{1},\tau_{2},...,\hat{\tau}_{B})=0$
    \RETURN $\pi=(\tau_1,\tau_{2},...,\hat{\tau}_{B})$ 
\end{algorithmic}
\end{algorithm}
\section{NUMERICAL RESULTS}
\label{sec:nm}
For battery sizes $B=1,2,3,4$, the policies in $\Pi^{\rm{MT}}$ are numerically optimized giving AoI versus energy arrival rate (Poisson) curves in Fig \ref{AgeEnergy}. We give the corresponding threshold values in Table \ref{thresholdtable}. These results were obtained through exhaustive search for possible threshold values, and Monte Carlo analysis for approximating AoI values in the simulation of the considered system and policies without relying on analytical results. It can be seen that these optimal thresholds and corresponding AoI values (in Table \ref{thresholdtable}) validate Theorem \ref{fixedpthreshold}.  
Fig. \ref{AgeB2tau1} and \ref{AgeB2tau2} show the dependency of AoI on threshold values $\tau_{1}$  and $\tau_{2}$ which is consistent with the result in Theorem \ref{B2age} for the special case of $B=2$.
\begin{figure}[htpb]
\centering
  \begin{psfrags}
    \psfrag{Y}[t]{$\bar{\Delta}$}
    \psfrag{X}[b]{$\mu_{H}$}
    \psfrag{a}[l]{$B=1$}
    \psfrag{b}[l]{$B=2$}
    \psfrag{c}[l]{$B=3$}
    \psfrag{d}[l]{$B=4$}
    \psfrag{e}[l]{$B=5$}
    \psfrag{f}[l]{$B=\infty$}
   \includegraphics[scale=0.19]{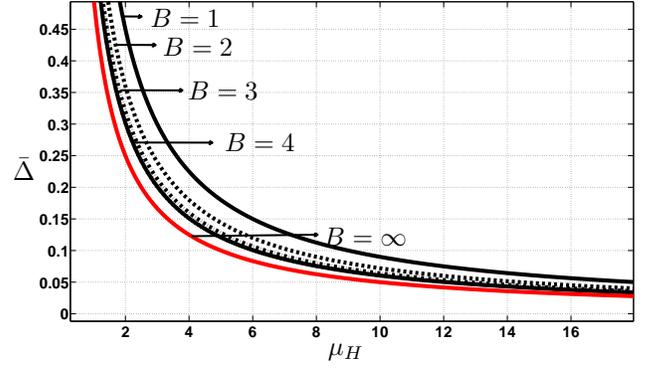}
    \end{psfrags}
\caption{AoI versus energy arrival rate (Poisson) for different battery sizes $B=1,2,3,4$.}
\label{AgeEnergy} 
\end{figure}

\begin{figure}[htpb]
\centering
  \begin{psfrags}
    \psfrag{Y}[t]{$\bar{\Delta}$}
    \psfrag{X}[b]{$\tau_{1}$}
   \includegraphics[scale=0.19]{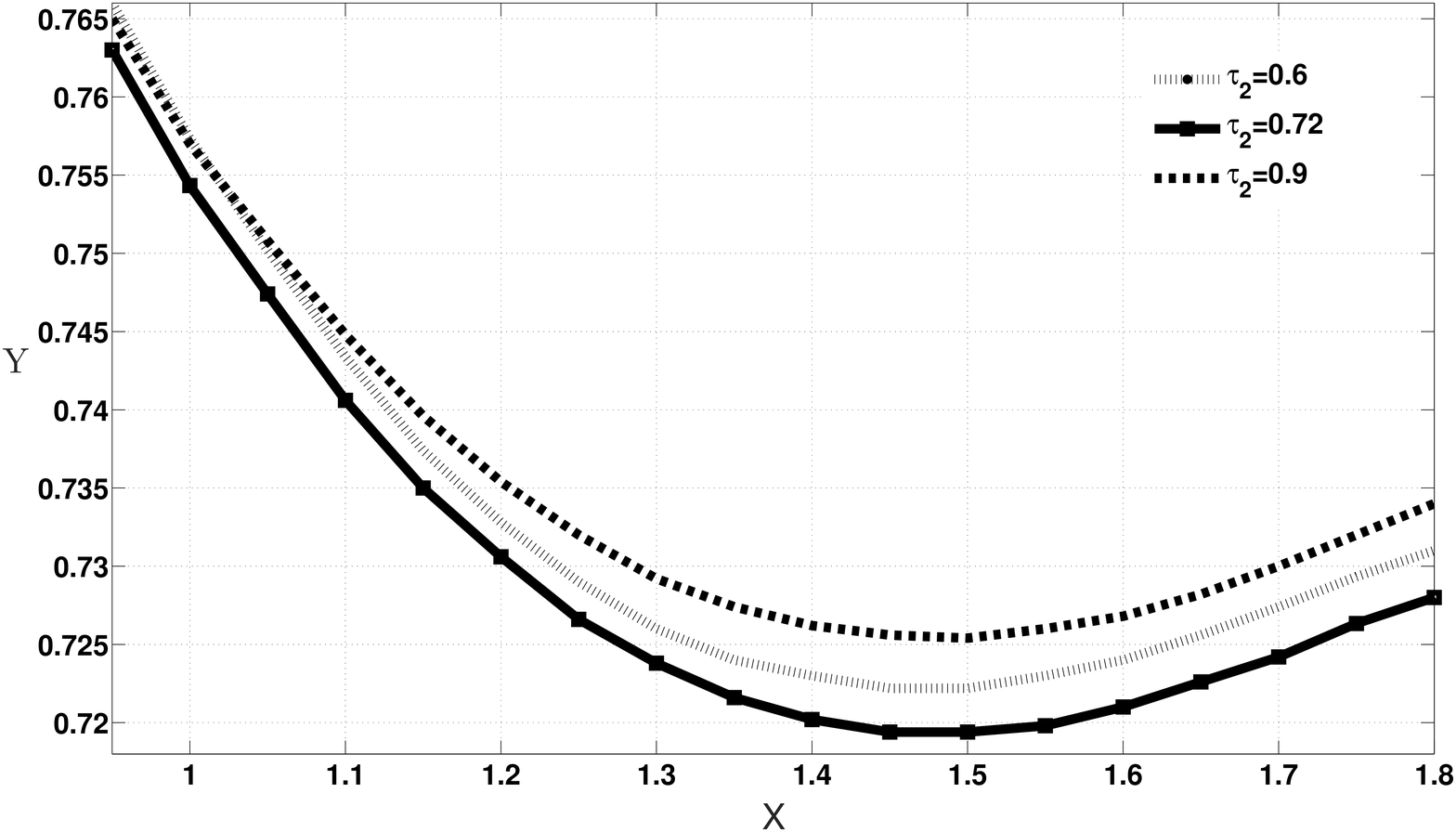}
    \end{psfrags}
\caption{AoI versus $\tau_{1}$ against various $\tau_{2}$ values for $B=2$ and $\mu_{H}=1$.}
\label{AgeB2tau1} 
\end{figure}

\begin{figure}[htpb]
\centering
  \begin{psfrags}
    \psfrag{Y}[t]{$\bar{\Delta}$}
    \psfrag{X}[b]{$\tau_{2}$}
  \includegraphics[scale=0.19]{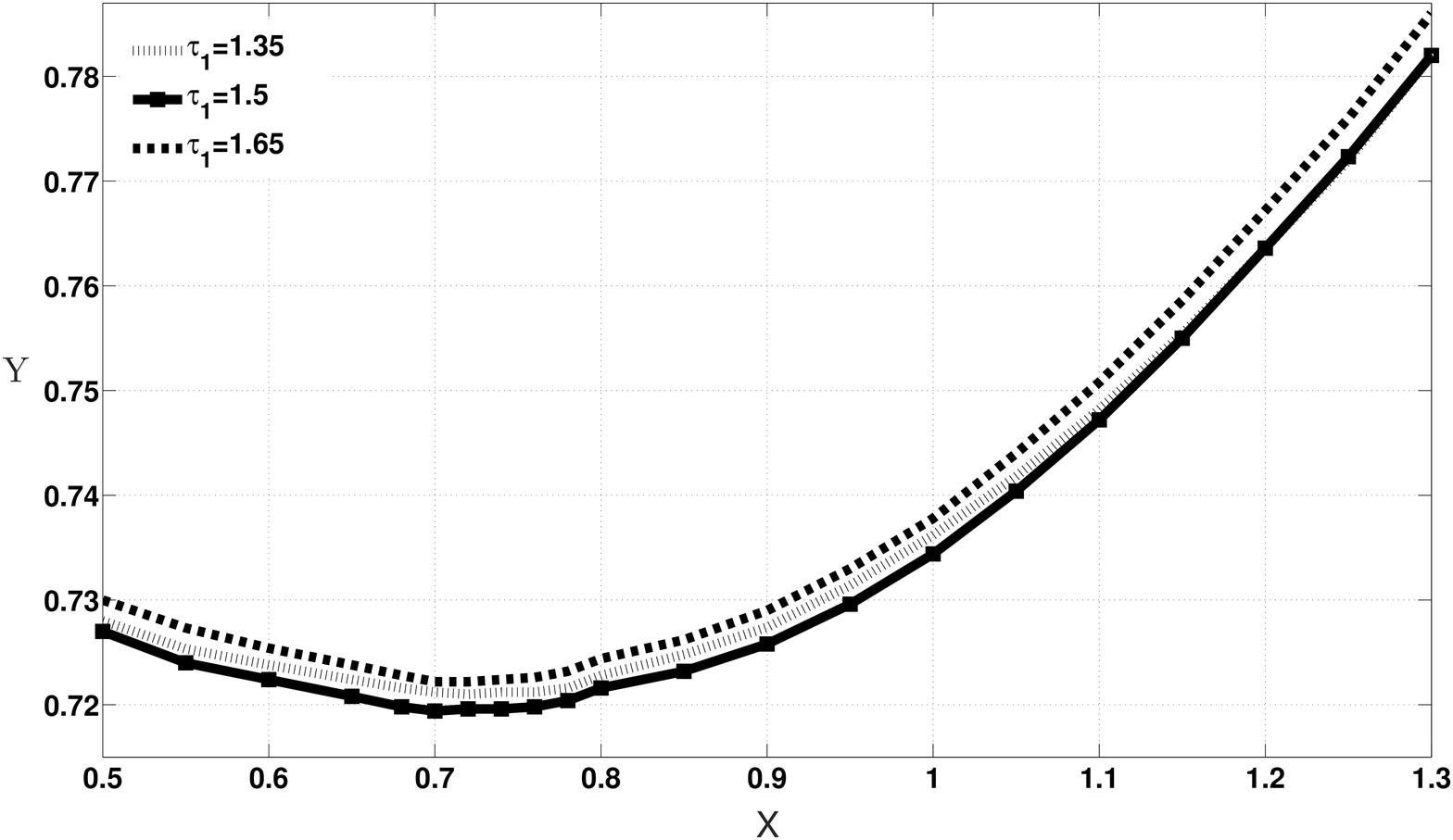}
    \end{psfrags}
\caption{AoI versus $\tau_{2}$ against various $\tau_{1}$ values for $B=2$ and $\mu_{H}=1$.}
\label{AgeB2tau2} 
\end{figure}

\begin{table}[]
\centering
\caption{Optimal thresholds for different battery sizes for $\mu_{H}=1$}
\label{thresholdtable}
\begin{tabular}{|l|l|l|l|l|l|l|}
\hline
    & $\tau_{1}$ & $\tau_{2}$     & $\tau_{3}$     & $\tau_{4}$       &$\bar{\Delta}_{\pi^*} $    \\ \hline
$B=1$ & 0.90         & - &  -    &   -      & 0.90 \\ \hline    
$B=2$ & 1.5         & 0.72 &  -    &   -      & 0.72 \\ \hline
$B=3$ & 1.5         & 1.2  & 0.64 &   -     & 0.64\\ \hline
$B=4$ & 1.5         & 1.2  & 0.86 & 0.604    & 0.604\\ \hline
\end{tabular}
\end{table}

\vspace{-0.1 in}
\section{CONCLUSION}
\label{sec:conc}
We have studied optimizing a non-linear age penalty in the generation  
and transmission of status updates by an energy harvesting source with  
a finite battery. An optimal status updating policy for
minimizing the time-average expectation of a general non-decreasing age
function $p(\cdot)$ has been obtained. The policy
has a monotonic threshold structure: (i) each new update is sent out only when the age is higher than a threshold and (ii) the threshold is a
non-increasing function of the instantaneous battery level such that the
updates are sent out more frequently when the battery level is high.
Furthermore, we have identified an interesting relationship between the
smallest optimal threshold $\tau_{B}^{*}$ (i.e., the threshold corresponding to a full
battery level) and the optimal objective value $\bar{p}_{\pi^{*}}$ (i.e., the
minimum achievable time-average expected age penalty), which is given by
\begin{align}
p(\tau_{B}^{*}) = \bar{p}_{\pi^{*}}.
  \end{align}  
\vspace{-0.1 in}
\bibliography{AgeOfInformation}

\appendix
\subsection{The Proof of Theorem \ref{existopthreshold}}
\label{proof:existopthreshold}

In order to prove Theorem \ref{existopthreshold}, we use a modified version of the ``vanishing discount factor" approach \cite{guohernandez2009} which consists of 2 steps:

\emph{Step 1}. Show that for every $\alpha >0$, there exists a threshold policy that is optimal for solving 

$$\min_{\pi\in\Pi^{\mathsf{online}}}\mathbb {E}\left[ \displaystyle\int_{0}^{\infty}e^{-\alpha t} p(\Delta(t))  dt \right].$$

\emph{Step 2}. Prove that this property still holds when the discount factor $\alpha$ vanishes to zero. 

We first discuss \emph{Step 1}. Recall that $\mathcal{F}_{t}$ represents the information about the energy arrivals and the update policy during $[0,t]$. Given $\mathcal{F}_{a}$, we are interested in finding the optimal online policy during $[a,\infty)$, which is formulated as

\begin{equation}
\label{adiscounted}
\displaystyle\min_{\pi\in\Pi^{\mathsf{online}}} \mathbb {E}\left[ \displaystyle\int_{a}^{\infty}e^{-\alpha (t-a)}p(\Delta(t))  dt \middle| \mathcal{F}_{a}\right].
\end{equation}

Observe that, in (\ref{adiscounted}), the term $e^{-\alpha (t-a)}$ ensures that the exponential decay always starts from unity so that the problem is independent of $a$ given $\mathcal{F}_{a}$. In addition, this problem has the following nice property:
\begin{lemma}
\label{starcost}
There exists an optimal solution to (\ref{adiscounted}) that depends on $\mathcal{F}_{a}$ only through $(\Delta(a),E(a))$. That is, $(\Delta(a),E(a))$ is a sufficient statistic for solving (\ref{adiscounted}).
\end{lemma}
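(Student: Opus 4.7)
The plan rests on the strong Markov property of the Poisson energy-arrival process. Given $\mathcal{F}_a$, the shifted arrival process $\{H(a+s) - H(a) : s \geq 0\}$ is a Poisson process of rate $\mu_H$ that is independent of $\mathcal{F}_a$. For any online policy $\pi$, the trajectory $(\Delta(t), E(t))_{t \geq a}$ therefore evolves as a functional of (i) the initial state $(\Delta(a), E(a))$, (ii) this fresh Poisson arrival stream, and (iii) the post-$a$ update decisions, which are stopping times with respect to the enlarged filtration $\mathcal{G}_s \triangleq \sigma(\mathcal{F}_a) \vee \sigma(H(a+s') - H(a) : 0 \leq s' \leq s)$. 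The extra information in $\mathcal{F}_a$ beyond $(\Delta(a), E(a))$ is pure history and does not influence the future system dynamics except through which continuations the policy is allowed to select.

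First I would introduce the value function
\[
V_\alpha(\delta, e) \triangleq \inf_{\pi \in \Pi^{\mathsf{online}}} \mathbb{E}\left[\int_0^\infty e^{-\alpha s} p(\tilde\Delta(s))\, ds \,\middle|\, \tilde\Delta(0) = \delta,\ \tilde E(0) = e\right],
\]
where $(\tilde\Delta, \tilde E)$ is driven from state $(\delta, e)$ by a fresh Poisson arrival process of rate $\mu_H$. The sub-exponential growth condition on $p$ ensures $V_\alpha$ is finite on $[0,\infty) \times \{0,1,\ldots,B\}$. Using the independence of the post-$a$ arrivals from $\mathcal{F}_a$ noted above, a standard decoupling argument then yields the almost-sure lower bound
\[
\mathbb{E}\!\left[\int_a^\infty e^{-\alpha(t-a)} p(\Delta(t))\, dt \,\middle|\, \mathcal{F}_a\right] \geq V_\alpha(\Delta(a), E(a)),
\]
since the left-hand side is the cost of one specific post-$a$ continuation (the restriction of $\pi$) starting from $(\Delta(a), E(a))$, whereas the right-hand side minimizes over all such continuations.

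To close the argument I would construct an optimal policy $\pi^{\ast}$ whose post-$a$ decisions depend on $\mathcal{F}_a$ only through $(\Delta(a), E(a))$, and which attains this lower bound. For each $(\delta, e)$, fix an $\varepsilon$-optimal policy $\pi^{\ast}_{\delta, e}$ for the state-$(\delta, e)$ problem and compose these into a single admissible policy that reads off $(\Delta(a), E(a))$ at time $a$ and then applies $\pi^{\ast}_{\Delta(a), E(a)}$ to the post-$a$ system; a diagonal $\varepsilon \downarrow 0$ argument upgrades $\varepsilon$-optimality to exact optimality. The main technical obstacle is measurable selection: the map $(\delta, e) \mapsto \pi^{\ast}_{\delta, e}$ must be jointly measurable so that the composed policy remains online with respect to $\mathcal{F}_t$. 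Since the state space $[0, \infty) \times \{0, 1, \ldots, B\}$ is Polish and the expected cost is lower semicontinuous in the appropriate topology on policies, the Kuratowski--Ryll-Nardzewski measurable-selection theorem (or its standard MDP adaptation) supplies such a selection, completing the proof of the lemma.
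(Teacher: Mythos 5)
Your proof is correct and rests on the same fundamental fact as the paper's: independence of Poisson increments (strong Markov property) means that, conditionally on $\mathcal{F}_a$, the future of the system is statistically governed only by $(\Delta(a), E(a))$ and a fresh arrival stream. The difference is in degree of rigor, not in strategy. The paper argues informally that the post-$a$ age trajectory is a function of $\Delta(a)$, $E(a)$, and $\{H(t)-H(a)\}_{t \geq a}$, and concludes directly; it never explicitly introduces a value function, never states the decoupling lower bound, and skips over how to stitch per-state optimal continuations into a single admissible online policy. You make each of these steps explicit: you define $V_\alpha(\delta,e)$ as the clean-slate value function, derive the almost-sure lower bound, and then flag the measurable-selection obstacle that the paper silently passes over, invoking Kuratowski--Ryll-Nardzewski to produce a jointly measurable selector. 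This buys you a defensible construction of the optimal policy, at the cost of invoking heavier machinery than the paper does. One minor soft spot in your write-up is the diagonal $\varepsilon \downarrow 0$ argument: to upgrade $\varepsilon$-optimal continuations to an exactly optimal one you implicitly need the infimum in $V_\alpha$ to be attained (or an equivalent compactness/lower-semicontinuity argument in a suitable policy topology), and you assert rather than establish this. That gap, however, is also present in the paper's sketch, which simply does not address policy existence at this level of detail.
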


\begin{proof}
In Problem \eqref{adiscounted}, the age evolution $\{\Delta(t), t \geq a\}$ is determined by the initial age $\Delta(a)$ at time $a$ and the update policy during $[a,\infty)$. Further, the update policy during $[a,\infty)$ is determined by the initial age $\Delta(a)$, the initial battery level $E(a)$, and the energy  counting process $\{H(t)-H(a), t\geq a\}$. Hence, $\{\Delta(t), t \geq a\}$ is determined by $\Delta(a)$, $E(a)$, and $\{H(t)-H(a), t\geq a\}$.

Recall that $\Delta(0)$ and $E(0)$ are fixed. Hence, for any online update policy, the online update decisions during $[0,a]$ depends only on $\{H(t), t\leq a\}$. Hence, $\mathcal{F}_{a}$ is determined by $\{H(t), t\leq a\}$. Because $\{H(t), t\geq 0\}$ is a compound Poisson process, $\{H(t)-H(a), t\geq a\}$ is independent of $\{H(t), t\leq a\}$. Hence, $\{\Delta(t), t \geq a\}$ depends on $\mathcal{F}_{a}$ only through $\Delta(a)$ and  $E(a)$. By this, $(\Delta(a),E(a))$ is a sufficient statistic for solving (\ref{adiscounted}).\end{proof}

 By using Lemma \ref{starcost}, we can simplify (\ref{adiscounted}) as \eqref{Jdefinition} and define a cost function $J_{\alpha}(\Delta(a),E(a))$ which is the optimal objective value of (\ref{Jdefinition}):
\begin{align}
\label{Jdefinition}
&J_{\alpha}(\Delta(a),E(a)) :=\!\!\!\! \displaystyle\min_{\pi\in\Pi^{\mathsf{online}}} \mathbb {E}\left[ \displaystyle\int_{a}^{\infty}\!\!\!\!e^{-\alpha (t-a)}p(\Delta(t))  dt \middle| \mathcal{F}_{a}\right]=& \nonumber\\
&\!\!\!\! \displaystyle\min_{\pi\in\Pi^{\mathsf{online}}} \mathbb {E}\left[ \displaystyle\int_{a}^{\infty}\!\!\!\!e^{-\alpha (t-a)}p(\Delta(t))  dt \middle| \Delta(a),E(a)\right]& 
\end{align}

Furthermore, one important question is: Given that the previous update occurs at $Z_k = a$, how to choose the next update time $Z_{k+1}$. This can be formulated as 
\begin{eqnarray}
\label{kfinite}
&\displaystyle\min_{(Z_1, \ldots, Z_k=a, Z_{k+1},\ldots)\in\Pi^{\mathsf{online}}}&\nonumber\\
& \mathbb {E}\left[ \displaystyle\int_{a}^{\infty}e^{-\alpha (t-a)}p(\Delta(t))  dt \middle| Z_k = a, \Delta(a) = 0,E(a)\right]&,
\end{eqnarray}
where we have used the fact that if $Z_k = a$, then $\Delta(a)=\Delta(Z_k)=0$.

According to the definition of $\Pi^{\mathsf{online}}$, $Z_{k+1}$ is a finite Markov time, i.e., stopping time, hence the problem of finding $Z_{k+1}$ for a solution to (\ref{kfinite}) can be formulated as an infinite horizon optimal stopping problem in the interval $[a,\infty)$.  We will consider a \emph{gain} \cite{peskir2006optimal} process $G=(G_{t})_{t \geq a}$ adapted to the filtration $\mathcal{F}_{t}$ where  a stopping time $Z_{k+1}$ for a solution to (\ref{kfinite}) maximizes $\mathbb {E}\left[ G_{Z_{k+1}} \mid \mathcal{F}_{a}\right]$ when we choose $Z_{k+1}$ from a family of stopping times based on $\mathcal{F}_{t}$. Let $\mathfrak{M}_{a}$ denote this family of $Z_{k+1}$s which can be expressed as:
\begin{align*}
&\mathfrak{M}_{a}= \nonumber\\
&\left\lbrace  Z_{k+1} \geq a : \Pr(Z_{k+1}<\infty)=1,\left\lbrace Z_{k+1} \leq t\right\rbrace \in \mathcal{F}_{t}, \forall t \geq a\right\rbrace.
\end{align*} 

Note that a stopping time in $\mathfrak{M}_{a}$ may violate energy causality however our definition of the gain process will guarantee that those stopping times cannot be optimal. 

We will define the gain process $(G_{t})_{t \geq a}$ based on the value of the discounted cost when an update is sent at a particular time $t$.  The gain process $(G_{t})_{t \geq a}$ for $E(t)>0$ corresponds to the additive inverse of this cost and can be written  as follows:
 \begin{align}
\label{gainalt}
&G_{t}=-\displaystyle\min_{\pi\in\Pi^{\mathsf{online}}}\nonumber\\ 
&\mathbb {E}\left[ \displaystyle\int_{a}^{\infty}e^{-\alpha (w-a) }p(\Delta(w))  dw \middle|Z_{k}=a, Z_{k+1}=t, E(t)\right],&\nonumber\\ & E(t)> 0.
\end{align} 

Note that the stopping time cannot be at time $t$ when $E(t)=0$ as there is no energy to send another update in that case. To cover this case, we set $G_{t}$ to $-\infty$ so that a stopping time $Z_{k+1}$ maximizing $\mathbb {E}\left[ G_{Z_{k+1}} \mid \mathcal{F}_{a}\right]$ should satisfy energy causality hence belongs to an online policy. In other words, the stopping time $Z_{k+1}$ in a solution to (\ref{kfinite}) maximizes $\mathbb {E}\left[ G_{Z_{k+1}} \mid \mathcal{F}_{a}\right]$ among all the stopping times in $\mathfrak{M}_{a}$. 

Alternatively, the gain process  $(G_{t})_{t \geq a}$ can be  expressed in terms of the cost defined in (\ref{Jdefinition}) as follows 
\begin{align}
\label{ageJ}
&G_{t} =-\int_{a}^{t}e^{-\alpha (w-a) }p(w-a) dw-\nonumber\\&\mathbb {E}\left[ \int_{t}^{\infty}\!\!e^{-\alpha(w-a)}p(\Delta(w)) dw\middle|Z_{k}=a, Z_{k+1}=t, E(t)\right]\nonumber\\
&=-\int_{a}^{t}e^{-\alpha(w-a)}p(w-a) dw  -e^{-\alpha(t-a)}J_{\alpha}(0, E(t)-1),
\end{align}
for $t \geq a$ and $E(t)>0$.

Let's define $J(0,-1):=\infty$ so that (\ref{ageJ}) holds for the $E(t)=0$ as well. Notice that the process $G_{t}$ is driven by the random process $E(t)$ which is not conditioned on any particular value of $E(a)$ while being adapted to the filtration $\mathcal{F}_{t}$. However, for a policy solving (\ref{kfinite}),  the stopping time $Z_{k+1}$ depends on $E(a)$ as it maximizes $\mathbb {E}\left[ G_{Z_{k+1}} \mid \mathcal{F}_{a}\right]$ which depends on $E(a)$ through the filtration $\mathcal{F}_{a}$.    

Accordingly, we define the stopping problem of maximizing the expected gain in the given interval $[a,\infty)$ as in the following:
\begin{equation}
\label{stoppingae}
\displaystyle\max_{t \in \mathfrak{M}_{a}} \mathbb {E}\left[ G_{t} \mid \mathcal{F}_{a}\right].
\end{equation}
Based on this formulation, we will show that the optimal stopping time exists and is given by the
following stopping rule for $Z_{k+1}$:
\begin{equation}
\label{rulesnell}
Z_{k+1} = \inf \lbrace t \geq Z_{k}=a : G_{t} = S_{t} \rbrace,
\end{equation}
where $S$ is the Snell envelope \cite{peskir2006optimal} for $G$:
\begin{equation}
\label{snellsw}
S_{t}= \esssup_{t' \in \mathfrak{M}_{t}}  \mathbb {E}\left[ G_{t'} \mid\mathcal{F}_{t}\right].
\end{equation}

Showing that $Z_{k+1}$ in (\ref{rulesnell}) is finite w.p.1 is sufficient to prove the existence of the optimal stopping time and the optimality of the stopping rule in (\ref{rulesnell})(see \cite[Theorem 2.2.]{peskir2006optimal}). Consider the lemma below and its proof in order to see the finiteness of $Z_{k+1}$ in (\ref{rulesnell}):
\begin{lemma}
\label{stoppingexists}
For the stopping rule in (\ref{rulesnell}) $Z_{k+1}$ is finite w.p.1, i.e.,  $\Pr(Z_{k+1}<\infty)=1$.
\end{lemma}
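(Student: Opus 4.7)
The plan is to invoke the standard Snell-envelope machinery (specifically \cite[Theorem 2.2]{peskir2006optimal}, as signposted above the lemma) after verifying the regularity of the gain process $(G_t)_{t \ge a}$, and then to rule out escape of the first hitting time in \eqref{rulesnell} to infinity by analyzing the long-time behavior of $G_t$. First I would verify integrability: since $p(\cdot)\ge 0$ and $J_\alpha(0,\cdot)\ge 0$, the gain satisfies $G_t \le 0$ whenever $E(t)\ge 1$; combined with the convention $G_t=-\infty$ on $\{E(t)=0\}$, this gives $G_t^+=0$ identically, so $\sup_t \mathbb{E}[G_t^+]=0<\infty$. Moreover, by comparison with the never-update policy, $J_\alpha(0,\ell)\le C:=\int_0^\infty e^{-\alpha u}p(u)\,du$ for every $\ell$, and $C<\infty$ by the third condition in the definition of an age-penalty function. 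Consequently $|G_t|\le 2C$ on $\{E(t)\ge 1\}$, and right-continuity of $G$ follows from the right-continuity of the counting processes $H(t)$ and $E(t)$, together with the continuity of $t\mapsto \int_a^t e^{-\alpha(w-a)}p(w-a)\,dw$.

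Next I would analyze the long-time behavior of $G_t$. With probability one, the Poisson process $H$ delivers infinitely many energy units, so $E(t)=B$ for all sufficiently large $t$ (recall that between $a$ and the stopping time no updates occur, so $E(t)=\min(E(a)+H(t)-H(a),B)$); on this event, $e^{-\alpha(t-a)}J_\alpha(0,E(t)-1)\to 0$ exponentially while $\int_a^t e^{-\alpha(w-a)}p(w-a)\,dw \to C$, hence $G_t \to -C$ almost surely as $t\to\infty$. On the other hand, at the a.s.\ finite Markov time $\sigma := \inf\{t\ge a : E(t)\ge 1\}$ we have $G_\sigma = -\int_a^\sigma e^{-\alpha(w-a)}p(w-a)\,dw - e^{-\alpha(\sigma-a)}J_\alpha(0,E(\sigma)-1)$, and a short calculation using the monotonicity of $p$ shows $G_\sigma>-C$ whenever $J_\alpha(0,E(\sigma)-1)<C$; the latter inequality holds because any policy that updates at least once strictly improves on the never-update cost $C$. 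Hence $G_\sigma > \lim_{t\to\infty} G_t$ almost surely, and in particular $\sup_{t<\infty} G_t$ is attained strictly above the limiting value with probability one.

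Finally, I would combine these ingredients to conclude that the first hitting time $Z_{k+1}=\inf\{t\ge a : G_t = S_t\}$ is almost surely finite. The cleanest route appears to be a truncation argument: for each $T>a$, the finite-horizon Snell envelope $S^T$ equals $G$ at a first hitting time $Z_{k+1}^T\le T$, and $S^T\uparrow S$ as $T\to\infty$. The strict gap $G_\sigma-(-C)>0$ established above, together with the right-continuity of $G$ and a Fatou-type argument, forces $Z_{k+1}^T$ to stabilize at some almost-surely finite limit $Z_{k+1}$ for $T$ large enough, since the Snell-envelope value cannot be realized by letting the stopping time drift to infinity without violating the minimality of $S$ among supermartingales dominating $G$. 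The main obstacle I anticipate is handling the exchange of limits in this truncation step and carefully treating the event $\{E(a)=0\}$, on which $G_a=-\infty$ so that $Z_{k+1}\ge\sigma$ automatically; both difficulties should succumb to the uniform bound $|G_t|\le 2C$ on $\{E(t)\ge 1\}$, which guarantees the uniform integrability needed to transfer the finite-horizon conclusion to the infinite-horizon problem.
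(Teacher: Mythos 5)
Your regularity analysis (nonpositivity of $G_t$ on $\{E(t)\ge 1\}$, the uniform bound $|G_t|\le 2C$ there, right-continuity) and the limiting behavior $G_t\to -C$ are all correct, and so is the observation that the supremum of $G$ exceeds its limiting value because $G_\sigma > -C$. The gap is in the final step: you never actually prove that the hitting time $Z_{k+1}=\inf\{t:G_t=S_t\}$ is finite. You sketch a truncation argument --- finite-horizon Snell envelopes $S^T\uparrow S$, finite-horizon hitting times $Z_{k+1}^T$ --- and assert that the strict gap ``forces $Z_{k+1}^T$ to stabilize at an a.s.\ finite limit,'' but this inference is not established; knowing $\sup_t G_t > \lim_{t\to\infty}G_t$ does not by itself preclude the stopping region $\{G_t=S_t\}$ from lying arbitrarily far out, and you concede this yourself (``the main obstacle I anticipate is handling the exchange of limits\dots'').

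The ingredient you are missing is the \emph{eventual monotonicity of $G_t$}, which the paper exploits to make the argument constructive rather than asymptotic. The paper introduces the auxiliary hitting time $Q_{k+1}=\inf\{t\ge a : E(t)=B,\ G_t=S_t\}\ge Z_{k+1}$ and shows $Q_{k+1}<\infty$ a.s.\ in two steps. First, the battery first fills at a time $R_{k+1}$ that is stochastically dominated by $a+Y_B$ with $Y_B$ Erlang, hence a.s.\ finite. Second, once $E(t)=B$ the gain is deterministic and
\[
\frac{d}{dt}G_t = e^{-\alpha(t-a)}\bigl(\alpha J_\alpha(0,B-1)-p(t-a)\bigr),
\]
which is non-positive for all $t\ge t_c:=\inf\{t\ge a : p(t-a)=\alpha J_\alpha(0,B-1)\}$; since $\alpha J_\alpha(0,B-1)$ is finite and $p\to\infty$, $t_c$ is finite. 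On $\{E(t)=B\}$ the Snell envelope reduces to $S_t=\sup_{t'\ge t}G_{t'}$, so non-increasingness of $G$ past $t_c$ gives $S_t=G_t$ for every $t\ge\max\{R_{k+1},t_c\}$, whence $Q_{k+1}=\max\{R_{k+1},t_c\}<\infty$ a.s. Your approach would need an analogue of this monotonicity (or a genuinely complete version of the truncation step, e.g.\ a bounded-convergence contradiction on the event $\{Z_{k+1}^T\to\infty\}$) to close the argument; without it, the proposal is incomplete.
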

\begin{proof}
Consider the Markov time $Q_{k+1}$ which is defined as follows:
\begin{equation}
Q_{k+1} := \inf \lbrace t \geq Z_{k}=a : E(t)=B, G_{t} = S_{t} \rbrace.
\end{equation}
Clearly, the stopping time $Z_{k+1}$ chosen in (\ref{rulesnell}) is earlier than $Q_{k+1}$ as $Q_{k+1}$ has an additional stopping condition $E(t)=B$. This means that if $\Pr(Q_{k+1}<\infty)=1$, then $\Pr(Z_{k+1}<\infty)=1$.

Accordingly, for the proof of this lemma, it is sufficient to show that $Q_{k+1}$ is finite w.p.1. We will show this by showing the finiteness of (i) the first time $t \geq Z_{k}=a$ such that $E(t)=B$,  and (ii) the duration between this time and the Markov time $Q_{k+1}$. Note that $E(t)=B$ condition is always satisfied after it reached for the first time. Let $R_{k+1}$ be the Markov time representing the first time when $E(t)=B$ is satisfied:
\begin{equation}
R_{k+1} := \inf \lbrace t \geq Z_{k}=a : E(t)=B \rbrace.
\end{equation}

(i) Observe that the Markov time $R_{k+1}$ is finite w.p.1 as it is stochastically dominated by $a+ Y_{B}$ where $Y_{B}$ is an Erlang distributed random variable with parameter $B$ which obeys (\ref{erlangdist}) and $\Pr(Y_{B}< \infty)=1$. 

(ii) In order to see that $Q_{k+1}-R_{k+1}$ is also finite, consider the time period after $R_{k+1}$, i.e., $[R_{k+1},\infty)$. As $E(t)=B$ for any $t\geq R_{k+1}$, the evolution of $G_{t}$ becomes deterministic after $t \geq R_{k+1}$:
\begin{eqnarray}
\label{gaincost} 
&G_{t} =& \nonumber\\
&-\int_{a}^{t}e^{-\alpha(w-a)}p(w-a) dw -e^{-\alpha(t-a)}J_{\alpha}(0, B-1)&,
\end{eqnarray}
for $t \geq R_{k+1}$.

On the other hand, for $t\geq  R_{k+1}$, the Snell envelope is $S_{t}= \esssup_{t' \in \mathfrak{M}_{t}}G_{t'}= \sup_{t'\geq t}  G_{t'}$. We will show that $G_{t}$ is always non-increasing after some finite time so that $S_{t}=G_{t}$ is always satisfied after that time.  

In order to see this, consider the change in $G_{t}$ for $t\geq R_{k+1}$. As
\begin{eqnarray}
&-\frac{\partial}{\partial t}\left[ \int_{a}^{t}e^{-\alpha(w-a)}p(w-a) dw +e^{-\alpha(t-a)}J_{\alpha}(0, B-1)\right] =&\nonumber\\ &e^{-\alpha(t-a)}\left(\alpha J_{\alpha}(0, B-1)- p(t-a)\right),&\nonumber\\
\end{eqnarray}
and $p(t-a)$ is non-decreasing, for $t\geq R_{k+1}$, $G_{t}$ is non-increasing if $t\geq t_{c}$ for  some $t_{c}$ such that 
\begin{equation}
\label{tcdef}
t_{c}:=\inf\lbrace t \geq a: p(t-a)=\alpha J_{\alpha}(0, B-1)\rbrace.
\end{equation}
 This implies that, for $t\geq \max \lbrace R_{k+1}, t_{c}\rbrace$, $G_{t}= \sup_{t'\geq t} G_{t'}$  and  hence $S_{t}=G_{t}$. Accordingly, the stopping conditions of $Q_{k+1}$ are satisfied for the first time when $t = \max \lbrace R_{k+1}, t_{c}\rbrace$ which means $Q_{k+1}=\max \lbrace R_{k+1}, t_{c}\rbrace$. 

As $\alpha J_{\alpha}(0, B-1)$ is finite, 
 $t_{c}$ is finite which implies $Q_{k+1}$ is finite w.p.1 as $R_{k+1}$ is finite w.p.1. This completes the proof.
\end{proof}
We just showed that the Markov time in (\ref{rulesnell}) is finite w.p.1 and this means that it is the optimal stopping time by \cite[Theorem 2.2.]{peskir2006optimal}. Next, we show that the optimal stopping rule in (\ref{rulesnell}) is a threshold policy by using the properties of the cost function in (\ref{Jdefinition}).  To relate the optimal stopping time and the cost function in (\ref{Jdefinition}), we will express the Snell envelope in an alternative way.

Notice that the Snell envelope can be  written by substituting (\ref{gainalt}) in (\ref{snellsw})  as follows:
\begin{align}
&S_{t}=\esssup_{t' \in \mathfrak{M}_{t}}-\displaystyle\min_{\pi\in\Pi^{\mathsf{online}}}&\nonumber\\
&\mathbb {E}\left[ \displaystyle\int_{a}^{\infty}e^{-\alpha(w-a)}p(\Delta(w))  dw \middle|Z_{k}=a, Z_{k+1}=t', \mathcal{F}_{t}\right]. \nonumber\\
\end{align}
Hence,
\begin{align}
&S_{t}=
-\displaystyle\min_{\pi\in\Pi^{\mathsf{online}}}\nonumber\\
&\mathbb {E}\left[ \int_{a}^{\infty}e^{-\alpha(w-a)}p(\Delta(w))  dw \middle|Z_{k}=a, Z_{k+1}\geq t, \mathcal{F}_{t}\right]. 
\end{align}

Accordingly, using the definition of $J_{\alpha}(\Delta(a),E(a))$, we can write
\begin{equation}
\label{snellcost} 
S_{t}=-\int_{a}^{t}e^{-\alpha(w-a)}p(w-a) dw +e^{-\alpha(t-a)}J_{\alpha}(\Delta(t), E(t)).
\end{equation}

Therefore, as the first terms in  (\ref{gaincost}) and (\ref{snellcost}) are identical, the optimal stopping rule in (\ref{rulesnell}) is equivalent to
\begin{equation}
\label{ruleage}
Z_{k+1} = \inf \lbrace t \geq Z_{k}=a : J_{\alpha}(0,E(t)-1) = J_{\alpha}(\Delta(t),E(t))\rbrace.
\end{equation}
Next, we show that the stopping rule in (\ref{ruleage}) is a threshold rule in age. In order to show this, let us define the function $\rho_{\alpha}(\cdot):\lbrace 0,1,...,B\rbrace \rightarrow [0,\infty)$ such that:
\begin{equation}
\label{rhoalpha}
\rho_{\alpha}(\ell): =\inf\lbrace w \geq 0 : J_{\alpha}(0,\ell-1) = J_{\alpha}(w,\ell)\rbrace.
\end{equation}

We can show that for any $\Delta \geq \rho_{\alpha}(\ell)$, it is guaranteed that $J_{\alpha}(0,\ell-1) = J_{\alpha}(\Delta ,\ell)$ due to the following reasons:
\begin{itemize}
\item For any $\Delta$ and $\ell \in \lbrace 0, 1,2,.., B\rbrace$, $J_{\alpha}(\Delta ,\ell)$ is smaller than or equal to $J_{\alpha}(0,\ell-1)$ as :
\begin{align*}
&J_{\alpha}(\Delta ,\ell)=\displaystyle\min_{\pi\in\Pi^{\mathsf{online}}} e^{a}&\\
&\mathbb {E}\left[ \displaystyle\int_{a}^{\infty}e^{-\alpha w}p(\Delta(w))  dw \middle| Z_{k}=t_{a}-\Delta , Z_{k+1}\geq t_{a}, E(t_{a})=\ell \right]&\\
&\leq\displaystyle\min_{\pi\in\Pi^{\mathsf{online}}}e^{a}
&\\&\mathbb {E}\left[ \displaystyle\int_{a}^{\infty}e^{-\alpha w}p(\Delta(w))  dw \middle| Z_{k}=t_{a}-\Delta, Z_{k+1}=a , E(t)=\ell \right]&\\
&=J_{\alpha}(0,\ell-1),&
\end{align*}
where the inequality is true as the expectation is conditioned on policies with $Z_{k+1}=t_{a}$.
\item For any $\ell \in \lbrace 0, 1,2,.., B\rbrace$, $J_{\alpha}(\Delta ,\ell)$ is non-decreasing in $\Delta$ as : 
\begin{align*}
&J_{\alpha}(\Delta ,\ell) =\displaystyle\min_{\pi\in\Pi^{\mathsf{online}}} &\\
&\mathbb {E}\left[ \displaystyle\int_{a}^{Z_{k+1}}e^{-\alpha(w-a)}p(w+\Delta - t_{a})  dw \middle| \theta(\Delta)\right]&\\
&+\mathbb {E}\left[ \displaystyle\int_{Z_{k+1}}^{\infty}e^{-\alpha(w-a)}p(\Delta(w))  dw \middle| \theta(\Delta)\right]&\\
&\leq\displaystyle\min_{\pi\in\Pi^{\mathsf{online}}}\mathbb {E}\left[ \displaystyle\int_{a}^{Z_{k+1}}e^{-\alpha(w-a)}p(w+\Delta' - t_{a})  dw \middle| \theta(\Delta) \right]&\\
&+\mathbb {E}\left[ \displaystyle\int_{Z_{k+1}}^{\infty}e^{-\alpha(w-a)}p(\Delta(w))  dw \middle|\theta(\Delta\right]&\\
&=\displaystyle\min_{\pi\in\Pi^{\mathsf{online}}}\mathbb {E}\left[ \displaystyle\int_{a}^{Z_{k+1}}e^{-\alpha(w-a)}p(\Delta(w))  dw \middle| \theta(\Delta') \right]&\\
&+\mathbb {E}\left[ \displaystyle\int_{Z_{k+1}}^{\infty}e^{-\alpha(w-a)}p(\Delta(w))  dw \middle|\theta(\Delta') \right]&\\
&=J_{\alpha}(\Delta',\ell),&
\end{align*}
for any $\Delta' \geq \Delta$ and $\theta(\Delta) :=(Z_{k}=t_{a}-\Delta , Z_{k+1}\geq t_{a}, E(t_{a})=\ell)$ where the inequality follows from the fact that $p(\cdot)$ is non-decreasing and the second equality is due to that,  given $Z_{k+1}$, the integrated values are conditionally independent from  $Z_{k}$.
\end{itemize}

Accordingly, $J_{\alpha}(\Delta,\ell)=J_{\alpha}(0,\ell-1)$ for any $\ell \in \lbrace 0, 1,2,.., B\rbrace$ and $\Delta \geq \rho_{\alpha}(\ell)$. Therefore, the stopping rule in (\ref{ruleage}) is equivalent to:
\begin{equation}
\label{ruleaged}
Z_{k+1} = \inf \lbrace t \geq Z_{k}= a : \Delta(t)\geq \rho_{\alpha}(E(t))\rbrace,
\end{equation}
for $\ell \in \lbrace 0, 1,2,.., B\rbrace$.

We showed that the stopping rule in (\ref{ruleaged}) gives the optimal stopping time $Z_{k+1}$  for a policy solving  (\ref{kfinite}). Now, we can start discussing \emph{Step 2} in order to show that the optimal stopping rule with the same structure also gives a solution to (\ref{minavaged}). 

In this part (\emph{Step 2}) of the proof, we will consider the optimal stopping rules  in (\ref{ruleaged}) while the discount factor $\alpha$ is vanishing to zero. Notice that the policy solving (\ref{kfinite}) is identified by $\rho_{\alpha}(\ell)$ due to (\ref{ruleaged}). Let $\pi_{\alpha}$ and $\Delta_{\pi_{\alpha}}(t)$ be a policy obeying (\ref{ruleaged}) and solving (\ref{kfinite}) for discount factor $\alpha$ and the age at time $t$ for that policy, respectively. We will show the following

\begin{eqnarray}
\label{pavbeta}
&\lim_{\beta\downarrow0}\lim_{t_f \rightarrow \infty}\frac{\int_{0}^{t_f}\mathbb {E}\left[p(\Delta_{\pi_{\beta}}(t))\right]dt}{t_f} =&  \nonumber\\
&\inf_{\pi\in\Pi^{\mathsf{online}}}\limsup_{t_f \rightarrow \infty}\frac{\int_{0}^{t_f}\mathbb {E}\left[p(\Delta_{\pi}(t))\right]dt}{t_f},&
\end{eqnarray}
which implies that for any $\lbrace \beta_{n}\rbrace_{n\geq 1}\downarrow0$ sequence, $\pi_{\beta_{n}}$ converges to the policy solving (\ref{minavaged}).

To prove the equivalence in (\ref{pavbeta}), we will use Feller's Tauberian theorem \cite{feller1971} (also see the Tauberian theorem in \cite{widder1946}) which can be stated as follows:
\begin{theorem}
\textbf{(Feller 1971)}
\label{Fellerstauberiantheorem} 
Let $f(t)$ be a Lebesgue-measurable, bounded, real function. Then,
\begin{align}
\label{tauberianFeller}
&\liminf_{t_f \rightarrow \infty}\frac{\int_{0}^{t_f}f(t)dt}{t_f}\leq \liminf_{\alpha\downarrow0}\alpha \int_{0}^{t_f}e^{-\alpha (t-a)}f(t)dt\ \nonumber\\
&\leq \limsup_{\alpha\downarrow0}\alpha \int_{0}^{t_f}e^{-\alpha (t-a)}f(t)dt \leq  \limsup_{t_f \rightarrow \infty}\frac{\int_{0}^{t_f}f(t)dt}{t_f}. \end{align}
Moreover, if the central inequality is an equality, then all inequalities are equalities.
\end{theorem}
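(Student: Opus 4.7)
The plan is to prove the four-term chain via a change-of-variables representation of the Laplace transform combined with Fatou-type arguments, and to obtain the ``moreover'' equality via a Karamata-style Tauberian argument based on polynomial approximation.

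First, I would set $F(t):=\int_0^t f(s)\,ds$ and note that, since $f$ is bounded by some $M$, integration by parts yields $\alpha\int_0^\infty e^{-\alpha t}f(t)\,dt=\alpha^2\int_0^\infty e^{-\alpha t}F(t)\,dt$; the boundary term vanishes because $|F(t)|\le Mt$ forces $e^{-\alpha t}F(t)\to 0$. Substituting $u=\alpha t$ and using $\int_0^\infty u\,e^{-u}\,du=1$ rewrites this as
\begin{equation*}
\alpha\int_0^\infty e^{-\alpha t}f(t)\,dt\;=\;\int_0^\infty u\,e^{-u}\,\frac{F(u/\alpha)}{u/\alpha}\,du,
\end{equation*}
which realizes the Abel mean as the expectation of the Ces\`aro ratio $F(t)/t$ against the density $u\,e^{-u}$, evaluated at the time scale $t=u/\alpha$. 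This is the workhorse identity.

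Second, from this representation I would deduce the two outer inequalities via Fatou. For each fixed $u>0$, $u/\alpha\to\infty$ as $\alpha\downarrow 0$, and the integrand is dominated by $Mu\,e^{-u}$, which is integrable. Applying Fatou's lemma to the liminf (after shifting $F(t)/t$ by $M$ to enforce non-negativity, then undoing the shift) gives
\begin{equation*}
\liminf_{\alpha\downarrow 0}\alpha\int_0^\infty e^{-\alpha t}f(t)\,dt\;\ge\;\liminf_{t\to\infty}\frac{F(t)}{t},
\end{equation*}
and the reverse Fatou gives the matching $\limsup$ inequality. Combined with the trivial $\liminf\le\limsup$ in the middle, this is exactly the stated chain $A\le B\le C\le D$.

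Third, for the ``moreover'' clause I would assume the Abel limit $L:=\lim_{\alpha\downarrow 0}\alpha\int_0^\infty e^{-\alpha t}f(t)\,dt$ exists and prove $F(T)/T\to L$, which then forces all four quantities to equal $L$. By linearity, for any polynomial $P$ one has $\alpha\int_0^\infty e^{-\alpha t}f(t)\,P(e^{-\alpha t})\,dt\to L\int_0^\infty e^{-u}P(e^{-u})\,du$, because each monomial $e^{-k\alpha t}$ turns the Abel mean into $\tfrac{1}{k+1}\cdot((k+1)\alpha)\hat f((k+1)\alpha)\to L/(k+1)$, matching $L\int_0^\infty e^{-(k+1)u}du$. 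By Stone--Weierstrass, this extends to any $g\in C([0,1])$ in place of $P$. Taking $g_0(s):=s^{-1}\mathbf{1}_{[e^{-1},1]}(s)$ converts the left-hand side to $F(1/\alpha)/(1/\alpha)$ and the right-hand side to $L$; sandwiching $g_0$ between continuous approximants from above and below and passing to the limit yields $F(T)/T\to L$.

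The main obstacle will be this final approximation step: the test function $g_0$ is discontinuous at $e^{-1}$ and blows up at $0$, so transferring convergence from continuous approximants to $g_0$ requires uniform control of the approximation defect. The cleanest route is to split $f=(f+M)-M$ into a non-negative bounded piece and a constant, apply the Karamata monotone-approximation argument (where $\int e^{-\alpha t}(f+M)g_\varepsilon^-\le \int e^{-\alpha t}(f+M)g_0\le \int e^{-\alpha t}(f+M)g_\varepsilon^+$ follows from non-negativity of $f+M$), and then recombine by linearity. The Lebesgue-measurability and boundedness hypotheses are used throughout to guarantee that every integral is well defined and that dominated convergence can be invoked.
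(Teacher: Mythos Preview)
The paper does not actually prove this theorem: it is quoted as a classical result, with attribution to Feller (1971) and Widder (1946), and is invoked as a black box inside the proof of Theorem~\ref{existopthreshold}. So there is no ``paper's own proof'' to compare against.

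Your sketch is essentially the standard Karamata proof of the Hardy--Littlewood Tauberian theorem for bounded $f$, and it is correct in outline. The representation
\[
\alpha\int_0^\infty e^{-\alpha t}f(t)\,dt=\int_0^\infty u e^{-u}\,\frac{F(u/\alpha)}{u/\alpha}\,du
\]
together with Fatou/reverse Fatou (using $|F(t)/t|\le M$) cleanly gives the outer inequalities, and the Karamata polynomial-approximation argument, run on the non-negative bounded function $f+M$ and then corrected by linearity, yields the ``moreover'' clause. One small slip: your test function $g_0(s)=s^{-1}\mathbf{1}_{[e^{-1},1]}(s)$ does not ``blow up at $0$''---it is zero near $0$ and bounded by $e$ on its support; the only issue is the jump at $s=e^{-1}$, which your two-sided continuous sandwich handles. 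It is precisely the boundedness of $f$ that makes this sandwich effective, so you are right to flag it as the crucial hypothesis for the Tauberian direction.

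In short, you have supplied more than the paper does: the paper merely cites the result, whereas you reproduce the classical proof.
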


This theorem can be applied for the function $f(t)=\mathbb {E}\left[p(\Delta_{\pi_{\beta}}(t))\right]$ where $\beta>0$ \footnote{Note that the function $\mathbb {E}\left[p(\Delta_{\pi_{\beta}}(t))\right]$ is Lebesgue-measurable (as $p(\cdot)$ is non-decreasing) and bounded (as $X_{k}$s are bounded w.p.1 for a policy obeying (\ref{ruleaged})).}. To simplify the inequalities for this case, let's define a function $J_{\alpha;\beta}(\Delta(a),E(a))$ for $\beta > 0$ such that:
\begin{align}
&J_{\alpha;\beta}(\Delta(a),E(a)) : = \nonumber\\
&\int_{a}^{\infty} e^{-\alpha(t-a)}\mathbb {E}\left[p(\Delta_{\pi_{\beta}}(t))\middle| \Delta(a),E(a)\right]dt.
\end{align}
Note that for $a=0$:
\[
J_{\alpha;\beta}(0,0) : = \int_{a}^{\infty} e^{-\alpha (t-a)}\mathbb {E}\left[p(\Delta_{\pi_{\beta}}(t))\right]dt.
\]
Accordingly, we can apply Feller's Tauberian theorem for $f(t)=\mathbb {E}\left[p(\Delta_{\pi_{\beta}}(t))\right]$ when $a=0$ giving:
\begin{eqnarray}
\label{tauberianJ}
 &\liminf_{t_f \rightarrow \infty}\frac{\int_{0}^{t_f}\mathbb {E}\left[p(\Delta_{\pi_{\beta}}(t))\right]dt}{t_f}\leq \liminf_{\alpha\downarrow0}\alpha J_{\alpha;\beta}(0,0)\leq & \nonumber\\&\limsup_{\alpha\downarrow0}\alpha J_{\alpha;\beta}(0,0)
 \leq  \limsup_{t_f \rightarrow \infty}\frac{\int_{0}^{t_f}\mathbb {E}\left[p(\Delta_{\pi_{\beta}}(t))\right]dt}{t_f}.&\nonumber\\
 \end{eqnarray}

We can show that the inequalities in (\ref{tauberianJ}) are satisfied with equality for any $\pi_{\beta}$ with $\beta > 0$ as $ \lim_{t_f \rightarrow \infty}\frac{\int_{0}^{t_f}\mathbb {E}\left[\Delta_{\pi_{\beta}}(t)\right]dt}{t_f}$ exists for any $\pi_{\beta}$ with $\beta > 0$. To see this, consider the following lemma:
\begin{lemma}
\label{averageageforalphalemma}
For $\alpha>0$ and $\lbrace Z_{k+1},k \geq 0\rbrace$ with $Z_{k+1}$ as in (\ref{ruleaged}), the following holds:
\begin{align}
\label{averageageforalpha}
&\lim_{t_f \rightarrow \infty}\frac{\int_{0}^{t_f}\mathbb {E}\left[p(\Delta_{\pi_{\alpha}}(t))\right]dt}{t_f}= \nonumber\\
&\frac{\lim_{n\rightarrow +\infty}\frac{1}{n}\sum_{k=0}^{n}\mathbb {E}[p(X_{k})]}{\lim_{n\rightarrow +\infty}\frac{1}{n}\sum_{k=0}^{n}X_{k}}\mbox{\:\: w.p.1.}
\end{align} 
\end{lemma}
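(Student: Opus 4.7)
The plan is to apply renewal/regenerative reward theory to the Markov chain induced by the threshold policy $\pi_{\alpha}$. Let $g(x):=\int_{0}^{x}p(u)\,du$. I read the paper's notation $\mathbb{E}[p(X_{k})]$ as shorthand for $\mathbb{E}[g(X_{k})]$, since within cycle $k$ the accumulated penalty is $\int_{Z_{k}}^{Z_{k+1}} p(t-Z_{k})\,dt = g(X_{k+1})$, which is the quantity one naturally averages to produce a time-average age penalty.

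First I would establish the regenerative structure. Under the stopping rule (\ref{ruleaged}), the battery levels at update epochs $\{E(Z_{k})\}$ form an irreducible, finite-state DTMC on $\{0,\ldots,B-1\}$ with the transition probabilities implicit in (\ref{transitionerlang}) evaluated at thresholds $\rho_{\alpha}(\ell)$; these thresholds are finite by the argument in Lemma \ref{stoppingexists}, so the chain is ergodic and has a unique stationary distribution $\mu$. Since $X_{k+1}$ is a measurable function of $E(Z_{k})$ and an independent piece of the Poisson arrival process, the enlarged chain $\{(E(Z_{k}),X_{k+1})\}$ is also ergodic. Moreover, $X_{k+1}$ is stochastically dominated by $Y_{B}+t_{c}$ (as in the proof of Lemma \ref{stoppingexists}), so it has finite exponential moments, and the sub-exponential growth property of the age penalty $p$ therefore gives $\mathbb{E}_{\mu}[g(X)]<\infty$ and $\mathbb{E}_{\mu}[X]<\infty$. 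Writing $\bar{X}:=\mathbb{E}_{\mu}[X]$ and $\bar{g}:=\mathbb{E}_{\mu}[g(X)]$, the ergodic theorem yields
$$\frac{1}{n}\sum_{k=0}^{n} X_{k}\xrightarrow{\text{w.p.1}}\bar{X},\qquad \frac{1}{n}\sum_{k=0}^{n} g(X_{k})\xrightarrow{\text{w.p.1}}\bar{g},$$
and Cesàro convergence of the deterministic sequence $\mathbb{E}[g(X_{k})]\to\bar{g}$ follows from ergodicity of the chain and uniform integrability of $g(X_{k})$.

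Second, I would split the integral along update epochs. Letting $N(t_{f}):=\sup\{k:Z_{k}\leq t_{f}\}$, and using $\Delta_{\pi_{\alpha}}(t)=t-Z_{k}$ on $[Z_{k},Z_{k+1})$,
$$\int_{0}^{t_{f}}p(\Delta_{\pi_{\alpha}}(t))\,dt \;=\; \sum_{k=0}^{N(t_{f})-1} g(X_{k+1}) \;+\; R(t_{f}),$$
with $0\le R(t_{f})\le g(X_{N(t_{f})+1})$. Dividing by $t_{f}$, factoring through $N(t_{f})$, and invoking the elementary renewal theorem $N(t_{f})/t_{f}\to 1/\bar{X}$ w.p.1 together with $R(t_{f})/t_{f}\to 0$ w.p.1 (since $g(X_{N(t_{f})+1})$ has finite mean), gives the pathwise identity
$$\frac{1}{t_{f}}\int_{0}^{t_{f}}p(\Delta_{\pi_{\alpha}}(t))\,dt \xrightarrow{\text{w.p.1}} \frac{\bar{g}}{\bar{X}}.$$

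Third, I would upgrade the almost-sure limit to the statement in the lemma, whose left-hand side is deterministic. By Fubini, $\int_{0}^{t_{f}}\mathbb{E}[p(\Delta_{\pi_{\alpha}}(t))]\,dt = \mathbb{E}\!\left[\int_{0}^{t_{f}}p(\Delta_{\pi_{\alpha}}(t))\,dt\right]$, and the a.s.\ limit combined with a uniform integrability bound (supplied by the exponential-moment control on $X_{k}$ and sub-exponential growth of $p$) permits the interchange of limit and expectation, so that the deterministic time-average tends to $\bar{g}/\bar{X}$. Identifying $\bar{g}/\bar{X}$ with the ratio on the right-hand side (the numerator is the deterministic Cesàro limit of $\mathbb{E}[g(X_{k})]$, and the denominator equals $\bar{X}$ w.p.1) completes the argument. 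The main obstacle I expect is precisely this last interchange: while the pathwise renewal-reward convergence is standard, bounding $\mathbb{E}[g(X_{k})\mathbf{1}_{\{g(X_{k})>M\}}]$ uniformly in $k$ to get the required uniform integrability is the step that actually uses the sub-exponential growth clause in the definition of an age-penalty function.
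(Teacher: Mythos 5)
Your proof is correct and follows the same renewal-reward strategy as the paper's, which (very tersely) establishes the stochastic dominance $X_{k+1} \leq_{\mathrm{st}} c + Y_B$ from the proof of Lemma~\ref{stoppingexists}, observes that both the reward moment and the mean inter-update time are finite, and then cites the derivation in Gallager's Theorem 5.4.5. You fill in the steps the paper elides: the regenerative structure via the ergodic battery-level DTMC, the cycle decomposition of $\int_0^{t_f} p(\Delta(t))\,dt$, the elementary renewal theorem, and the Fubini/uniform-integrability interchange needed because the left-hand side of \eqref{averageageforalpha} is a deterministic quantity while the pathwise renewal-reward limit is almost sure.

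Two useful observations. First, your reading of $\mathbb{E}[p(X_k)]$ as $\mathbb{E}\!\left[\int_0^{X_k} p(u)\,du\right]$ is the right one: the literal reading makes the identity false for nonlinear $p$, and the paper's own Lemma~\ref{convage} (where $p(u)=u$ and the numerator carries $\tfrac{1}{2}\mathbb{E}[X_k^2]=\mathbb{E}[\int_0^{X_k}u\,du]$) confirms this. The paper leaves this abuse of notation unflagged, so it's worth pointing out. Second, two small slips in your argument: (i) the claim $R(t_f)/t_f \to 0$ w.p.1 does not follow merely from $g(X_{N(t_f)+1})$ having finite mean; the pathwise statement needs $g(X_n)/n \to 0$ a.s.\ (a consequence of the a.s.\ Cesàro convergence you already established) combined with $N(t_f)/t_f \to 1/\bar{X} > 0$. (ii) The uniform integrability you ultimately need is of the family of running averages $\bigl\{\tfrac{1}{t_f}\int_0^{t_f} p(\Delta_{\pi_\alpha}(t))\,dt\bigr\}_{t_f>0}$, not of $g(X_k)$ per se; the cleanest route is to note that $\Delta(t)\leq X_{N(t)+1}\leq c + Y_B$ pathwise (the bound from Lemma~\ref{stoppingexists} holds for every cycle, hence for the inspected one), so $p(\Delta(t))$ has uniformly bounded $(1+\delta)$-moments via the sub-exponential clause on $p$, which dominates the running average and yields the UI you want. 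Your instinct that this is where the third bullet in the definition of an age-penalty function enters is exactly right.
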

\begin{proof}
The proof of Lemma \ref{stoppingexists} showed that for $Z_{k}=a$ and optimal stopping time solving (\ref{stoppingae}) it is true that $\Pr(X_{k+1}\geq x)\leq \Pr(t_{c}-t_{a}+Y_{B}\geq x)$ where $t_{c}$ is the deterministic time defined in (\ref{tcdef}) and $Y_{B}$ is an Erlang distributed with parameter $B$ which obeys (\ref{erlangdist}). Accordingly, $\mathbb {E}[p(X_{k+1})]$ is finite as $\mathbb {E}[p(\alpha J_{\alpha}(0,B)+Y_{B})]$ is finite for $\alpha>0$. On the other hand, $\lim_{n\rightarrow +\infty}\frac{1}{n}\sum_{k=0}^{n}X_{k}< \infty$ w.p.1 and  $\lim_{n\rightarrow +\infty}\frac{1}{n}\sum_{k=0}^{n}X_{k}>\frac{1}{\mu_{H}}$ w.p.1 due to the energy causality constraint. Therefore, we can apply the derivation steps in \cite[Theorem 5.4.5]{gallager2013stochastic} and obtain (\ref{averageageforalpha}). This completes the proof.
\end{proof}
Lemma \ref{averageageforalphalemma} and (\ref{tauberianJ}) imply the following for for $a=0$ and $\beta > 0$:
\begin{equation}
\label{avageforalphabeta}
\lim_{\alpha\downarrow0}\alpha J_{\alpha;\beta}(0,0)=  \lim_{t_f \rightarrow \infty}\frac{\int_{0}^{t_f}\mathbb {E}\left[p(\Delta_{\pi_{\beta}}(t))\right]dt}{t_f}.
\end{equation}

Now, consider an arbitrary online policy $\pi$ for which $\mathbb {E}\left[p(\Delta_{\pi}(t))\right]$ is Lebesgue-measurable and bounded, then apply Feller's Tauberian theorem for $f(t)=\mathbb {E}\left[p(\Delta_{\pi}(t))\right]$ giving the following inequality when $t_{a}=0$:
\begin{eqnarray}
\label{pavforarbitraryonline}
&\limsup_{\alpha\downarrow0}\alpha\int_{0}^{\infty} e^{-\alpha (t-a)}\mathbb {E}\left[p(\Delta_{\pi}(t))\right]dt \leq & \nonumber\\ &\limsup_{t_f \rightarrow \infty}\frac{\int_{0}^{t_f}\mathbb {E}\left[p(\Delta_{\pi}(t))\right]dt}{t_f}.&
\end{eqnarray} 

Note that for $\alpha > 0$, $J_{\alpha;\beta}(0,0)$ is minimized for $\alpha=\beta$, hence:
\begin{eqnarray}
\label{pavforarbitraryonlinealpha}
&\lim_{\beta\downarrow0}\lim_{\alpha\downarrow0}\alpha J_{\alpha;\beta}(0,0)=\inf_{\beta > 0}\lim_{\alpha\downarrow0}\alpha J_{\alpha;\beta}(0,0)\leq & \nonumber\\
&\limsup_{\alpha\downarrow0}\alpha\int_{0}^{\infty} e^{-\alpha (t-a)}\mathbb {E}\left[p(\Delta_{\pi}(t))\right]dt.&
\end{eqnarray} 
Combining (\ref{avageforalphabeta}), (\ref{pavforarbitraryonline}) and (\ref{pavforarbitraryonlinealpha}), we get (\ref{pavbeta}). This completes the proof.

\subsection{The Proof of Theorem \ref{existopmothreshold}}
\label{proof:existopmothreshold}
Theorem \ref{existopmothreshold} follows from the proof of Theorem \ref{existopthreshold}. To prove the theorem it is sufficient to show  that for any $\alpha > 0$, $\rho_{\alpha}(\ell)$ (see (\ref{rhoalpha})) is non-increasing in $\ell$ as this guarantees that the monotonicity of optimal thresholds holds for any sequence of $\alpha$ values that vanishes to zero. To see this, consider the following lemma and the argument provided below its proof:
\begin{lemma}
\label{submodularitylemmaforJmax}
For $J(\cdot,\cdot)$ is the function defined in (\ref{Jdefinition}), $J_{\alpha}(0,\ell)-J_{\alpha}(0,\ell+1)$ is non-increasing in $\ell \in \lbrace 0,1,...,B-1\rbrace$ for any $\alpha\geq 0$.
\end{lemma}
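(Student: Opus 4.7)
The plan is to establish the equivalent discrete convexity statement
\[
J_\alpha(0,\ell-1) + J_\alpha(0,\ell+1) \;\geq\; 2\,J_\alpha(0,\ell), \qquad 1 \leq \ell \leq B-1,
\]
since the non-increasing property of the differences $J_\alpha(0,\ell)-J_\alpha(0,\ell+1)$ is exactly this inequality. My approach is to combine a value-iteration scheme with a sample-path coupling, proving the inequality at each iterate and passing to the limit.

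First I would introduce the Bellman operator $T_\alpha$ acting on functions $V:\{0,1,\ldots,B\}\to\mathbb{R}_{\geq 0}$ by
\[
(T_\alpha V)(\ell) = \inf_{Z}\mathbb{E}\!\left[\int_0^{Z} e^{-\alpha t}p(t)\,dt + e^{-\alpha Z}\,V\bigl(\min(\ell+H(Z),B)-1\bigr)\right],
\]
where $Z$ ranges over stopping times of the arrival filtration satisfying the feasibility condition $\min(\ell+H(Z),B)\geq 1$. Because $\alpha>0$ and $p$ has sub-exponential growth, $T_\alpha$ is a contraction in a suitable weighted sup-norm, so iterating $V_0\equiv 0$ and $V_{n+1}=T_\alpha V_n$ yields $V_n\to J_\alpha(0,\cdot)$ pointwise. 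I would then prove by induction on $n$ that each $V_n$ is simultaneously (i) non-increasing in $\ell$ and (ii) discretely convex in $\ell$; both properties are trivial for $V_0\equiv 0$.

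For the inductive step, fix $\ell\in\{1,\ldots,B-1\}$ and let $\hat Z_{\ell-1}$ and $\hat Z_{\ell+1}$ attain the infima defining $V_{n+1}(\ell-1)$ and $V_{n+1}(\ell+1)$, realized on a common Poisson sample path. Apply each of these stopping rules to an $\ell$-system: both are feasible, because $\hat Z_{\ell-1}$ is feasible already with one fewer energy unit, and $\hat Z_{\ell+1}$ remains feasible on the $\ell$-system after a routine bookkeeping check using monotonicity. By optimality the sum of the two resulting costs is at least $2V_{n+1}(\ell)$, while a direct computation shows it equals $V_{n+1}(\ell-1)+V_{n+1}(\ell+1)$ plus a correction term of the form
\[
\mathbb{E}\!\left[e^{-\alpha \hat Z_{\ell-1}}\Phi_{-}(H(\hat Z_{\ell-1})) + e^{-\alpha \hat Z_{\ell+1}}\Phi_{+}(H(\hat Z_{\ell+1}))\right],
\]
where $\Phi_{-}(k) = V_n(\min(\ell+k,B)-1) - V_n(\min(\ell-1+k,B)-1)$ and $\Phi_{+}(k) = V_n(\min(\ell+k,B)-1) - V_n(\min(\ell+1+k,B)-1)$. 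For each integer $k$, the inductive convexity of $V_n$ gives $\Phi_{-}(k)+\Phi_{+}(k)\leq 0$, where the truncation at the cap $B$ is handled by the observation that the piecewise-linear map $x\mapsto\min(x,B)$ composed with a convex non-increasing function is again convex non-increasing (this is where the joint induction with monotonicity is used). Combining yields $V_{n+1}(\ell-1)+V_{n+1}(\ell+1)\geq 2V_{n+1}(\ell)$, and passing $n\to\infty$ transfers the inequality to $J_\alpha$.

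The main obstacle is the correction-term analysis: the two stopping rules $\hat Z_{\ell-1}$ and $\hat Z_{\ell+1}$ are different on the shared sample path, so the two expectations carry mismatched discount factors $e^{-\alpha \hat Z_{\ell\pm 1}}$ and mismatched arrival counts $H(\hat Z_{\ell\pm 1})$, and the pointwise sign inequality $\Phi_{-}(k)+\Phi_{+}(k)\leq 0$ cannot be applied naively. The cleanest resolution is to exploit the threshold structure of the optimal rules established in the proof of Theorem \ref{existopthreshold}: each optimal stopping time is determined by an age threshold depending only on the current battery level, which lets one decompose both expectations along the same battery-indexed renewal structure and apply the convexity of $V_n$ termwise at each post-stopping battery level.
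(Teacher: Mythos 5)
Your reformulation of the lemma as discrete convexity of $\ell\mapsto J_\alpha(0,\ell)$, and the plan to prove it via a Bellman operator and induction on the iterates $V_n$, are both sound. The gap is in the inductive step, and it is a real one. Applying the two optimal rules $\hat Z_{\ell-1}$ and $\hat Z_{\ell+1}$ to the $\ell$-system produces the two correction expectations
\[
\mathbb{E}\!\bigl[e^{-\alpha \hat Z_{\ell-1}}\Phi_{-}(H(\hat Z_{\ell-1}))\bigr],
\qquad
\mathbb{E}\!\bigl[e^{-\alpha \hat Z_{\ell+1}}\Phi_{+}(H(\hat Z_{\ell+1}))\bigr],
\]
and — as you note yourself — the pointwise inequality $\Phi_-(k)+\Phi_+(k)\le 0$ cannot be integrated because the two terms are evaluated at \emph{different} random times carrying \emph{different} discount factors and \emph{different} harvested-energy counts. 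Your proposed repair, ``exploit the threshold structure of the optimal rules,'' does not actually close this: even when both $\hat Z_{\ell-1}$ and $\hat Z_{\ell+1}$ are threshold rules, their thresholds differ, so on a typical sample path $\hat Z_{\ell-1}(\omega)\neq\hat Z_{\ell+1}(\omega)$ and $H(\hat Z_{\ell-1}(\omega))\neq H(\hat Z_{\ell+1}(\omega))$, and there is no termwise alignment of the two expectations along a shared ``battery-indexed renewal structure.'' The sign of the correction term is genuinely indeterminate under your coupling, because $\Phi_-\le 0$ while $\Phi_+\ge 0$ and the two appear under incomparable weights.

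The paper avoids this by using a \emph{one-sided} coupling and a \emph{backward} induction anchored at the battery cap. It applies only the optimal rule for the $(\ell+2)$-system to the $(\ell+1)$-system, giving the single-expectation bound
\[
J_\alpha(r,\ell+1)-J_\alpha(r,\ell+2)\;\le\;
\mathbb{E}\Bigl[e^{-\alpha(Z-a)}\bigl(J_\alpha(0,\min\{\ell+\sigma,B\!-\!1\})-J_\alpha(0,\min\{\ell+1+\sigma,B\!-\!1\})\bigr)\Bigr],
\]
in which only one stopping time appears, so the difference inside is controlled pointwise. The induction starts at $\ell=B-2$, where the truncation at $B-1$ forces every $\sigma\ge 1$ term on the right to vanish and the base case follows. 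For the inductive step, the $\sigma=0$ term contributes $J_\alpha(0,\ell)-J_\alpha(0,\ell+1)$ and the $\sigma\ge 1$ terms are bounded by the induction hypothesis; the final inequality is closed by a fixed-point step exploiting that the total discounted mass is strictly below one. Both ingredients — a one-sided comparison that keeps only a single stopping time in the bound, and a backward induction anchored at the cap $B$ — are absent from your plan, and some version of them appears to be needed: the symmetric two-sided coupling does not produce an inequality that can be self-improved.
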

\begin{proof}

First, consider the alternative formulation of $J_{\alpha}(r,\ell+1)$ in below:
\begin{align*}
&J_{\alpha}(r,\ell+1)= \displaystyle\min_{\pi\in\Pi^{\mathsf{online}}}e^{a}\mathbb {E}[\nonumber\\
&\mathbb {E}\left[ \displaystyle\int_{a}^{\infty}e^{-\alpha t }p(\Delta(t))  dt \middle| Z_{k+1},\Delta(t_{a})=r,E(t_{a})=\ell+1 \right]],
\end{align*}
where the outer expectation is taken over $Z_{k+1}$. 

Let 
\[
K_{r,\ell+1}(z,\sigma):= 
\]
\[
\Pr\left( Z_{k+1}=z, H(z)-H(a)=\sigma \middle| \Delta(t_{a})=r,E(t_{a})=\ell+1 \right)
\]
be the joint distribution of $Z_{k+1} \in \mathfrak{M}_{a}$ and the energy harvested during $[a,z]$. Then, we can write $J_{\alpha}(r,\ell+1)$ as follows:
\begin{align}
\label{Jhalforprev}
&J_{\alpha}(r,\ell+1)=\min_{Z_{k+1} \in \mathfrak{M}_{a}} \sum_{\sigma=0}^{\infty}\int_{t_{a}}^{\infty} K_{r,\ell+1}(z,\sigma)e^{a}\times& \nonumber\\ 
& \left[ \displaystyle\int_{a}^{z}e^{-\alpha t} p(\Delta(t))  dt +e^{-\alpha z }J_{\alpha}(0,\min\lbrace\ell+\sigma,B-1\rbrace)\right]dz.&
\end{align}

Similarly,
\begin{align}
\label{decreasedJhal}
&J_{\alpha}(r,\ell+2)=\min_{Z_{k+1} \in \mathfrak{M}_{a}} \sum_{\sigma=0}^{\infty}\int_{t_{a}}^{\infty} K_{r,\ell+2}(z,\sigma)e^{a}\times& \nonumber\\ 
& \left[ \displaystyle\int_{a}^{z}e^{-\alpha t}p(\Delta(t))  dt \!\!+\!\! e^{-\alpha z}J_{\alpha}(0,\min\lbrace\ell\!\!+\!\! 1 \!\!+\!\! \sigma,B-1\rbrace)\right]dz.
\end{align}
Now, let $K_{r,\ell+2}^{*}(z,\sigma)$ be the distribution corresponding to the update time $Z_{k+1} \in \mathfrak{M}_{a}$ that is optimal in (\ref{decreasedJhal}), which means:
\begin{align}
\label{decreasedJhalop}
&J_{\alpha}(r,\ell+2)= \sum_{\sigma=0}^{\infty}\int_{t_{a}}^{\infty} K_{r,\ell+2}^{*}(z,\sigma)e^{a}\times& \nonumber\\ 
& \left[  \displaystyle\int_{a}^{z}e^{-\alpha t }p(\Delta(t))  dt+e^{-\alpha z }J_{\alpha}(0,\min\lbrace\ell\!\!+\!\! 1 \!\!+\!\! \sigma,B-1\rbrace)\right]dz.&
\end{align}
Clearly, $K_{r,\ell+2}^{*}(z,\sigma)$ is not necessarily the joint distribution corresponding the update time $Z_{k+1} \in \mathfrak{M}_{a}$ that is optimal for (\ref{Jhalforprev}), hence: 
\begin{align}
\label{Jhalop}
&J_{\alpha}(r,\ell+1)\leq\sum_{\sigma=0}^{\infty}\int_{t_{a}}^{\infty} K_{r,\ell+2}^{*}(z,\sigma)[ \displaystyle\int_{a}^{z}e^{-\alpha(t-a)}p(\Delta(t))  dt  \nonumber\\ 
& +e^{-\alpha(z-a)}J_{\alpha}(0,\min\lbrace\ell+\sigma,B-1\rbrace)]dz.
\end{align}
Combining (\ref{decreasedJhalop}) and (\ref{Jhalop}) gives:
\begin{align}
\label{decreasedJhalop}
&J_{\alpha}(r,\ell+1)- J_{\alpha}(r,\ell+2)\leq \sum_{\sigma=0}^{\infty}\int_{a}^{\infty}K_{r,\ell+2}^{*}(z,\sigma)e^{-\alpha(z-a)}\times& \nonumber\\ 
&[J_{\alpha}(0,\min\lbrace\ell+\sigma,\! B-1\rbrace)\!\!-\!J_{\alpha}(0,\min\lbrace\ell+1+\sigma,\! B-1\rbrace)]dz.&
\end{align}
which implies :
\begin{eqnarray*}
\label{incrementboundformaxJ}
&J_{\alpha}(r,\ell+1)- J_{\alpha}(r,\ell+2)\leq \max_{\sigma \in \lbrace 0,1,..,B-\ell\rbrace}& \nonumber\\&J_{\alpha}(0,\min\lbrace\ell+\sigma,\! B-1\rbrace)\!-\!J_{\alpha}(0,\min\lbrace\ell\!\!+\!\! 1 \!\!+\!\! \sigma,\! B-1\rbrace)&
\end{eqnarray*}
Now, consider the case when $r=0$ and $\ell=B-2$ for (\ref{decreasedJhalop}):
\begin{align}
&J_{\alpha}(0,B-1)- J_{\alpha}(0,B)\leq  \nonumber\\ 
&\sum_{\sigma=0}^{\infty}\int_{a}^{\infty}K_{r,\ell+2}^{*}(z,\sigma)e^{-\alpha(z-a)}[J_{\alpha}(0,\min\lbrace  B- \!2 \!+ \!\sigma,\! B-1\rbrace)-\nonumber\\
&\!J_{\alpha}(0,\min\lbrace B \!- \! 1 \!+\sigma,\! B-1\rbrace)]dz,
\end{align}
which implies:
\begin{equation}
J_{\alpha}(0,B-1)- J_{\alpha}(0,B)\leq J_{\alpha}(0,B-2)- J_{\alpha}(0,B-1).
\end{equation}
Suppose that the inequality below is true for $j \geq \ell+1$:
\begin{equation}
\label{inductionhypothesirenergyincrements}
J_{\alpha}(0,j+1)- J_{\alpha}(0,j+2)\leq J_{\alpha}(0,j)- J_{\alpha}(0,j+1).
\end{equation}
Then, we have:
\begin{align}
&J_{\alpha}(0,\ell+1)- J_{\alpha}(0,\ell+2)\leq\nonumber\\ &\leq \sum_{\sigma=0}^{\infty}\int_{a}^{\infty}K_{r,\ell+2}^{*}(z,\sigma)
e^{-\alpha(z-a)}\times \nonumber\\ & [J_{\alpha}(0,\min\lbrace\ell+\sigma,\! B-1\rbrace)\!\!-\!J_{\alpha}(0,\min\lbrace\ell+1+\sigma,\! B-1\rbrace)]dz \nonumber\\ 
&\leq \int_{a}^{\infty}K^{*}(z,0)
e^{-\alpha(z-a)}[J_{\alpha}(0,\ell)- J_{\alpha}(0,\ell+1)]dz +\sum_{\sigma=1}^{\infty}\nonumber\\ 
&\int_{a}^{\infty}K_{r,\ell+2}^{*}(z,\sigma)
e^{-\alpha(z-a)}[J_{\alpha}(0,\ell+1)- J_{\alpha}(0,\ell+2)]dz \nonumber\\ 
&\leq J_{\alpha}(0,\ell)- J_{\alpha}(0,\ell+1).
\end{align}
This means that the inequality (\ref{inductionhypothesirenergyincrements}) is also true for $j=\ell$ so is for any $j=0,1,...,B-2$ by induction. Combining this and (\ref{incrementboundformaxJ}):
\begin{equation}
\label{ineqJd}
J_{\alpha}(r,\ell+1)- J_{\alpha}(r,\ell+2)\leq J_{\alpha}(0,\ell)- J_{\alpha}(0,\ell+1),
\end{equation}
for $\alpha\geq 0$, $r \geq 0$ and 
\end{proof}
Lemma \ref{submodularitylemmaforJmax} shows that $\rho_{\alpha}(\ell)$ is non-increasing in $\ell$ for $\alpha > 0$. It is sufficient to consider (\ref{ineqJd}) when $r=\rho_{\alpha}(\ell)$:
\begin{equation}
0 = J_{\alpha}(0,\ell-1)-J_{\alpha}(\rho_{\alpha}(\ell),\ell)\leq J_{\alpha}(0,\ell-2)-J_{\alpha}(\rho_{\alpha}(\ell),\ell-1),
\end{equation}
which implies $\rho_{\alpha}(\ell-1)\geq \rho_{\alpha}(\ell)$ combining
\[
J_{\alpha}(0,\ell-2)-J_{\alpha}(\rho_{\alpha}(\ell-1),\ell-1)
\]
and that $J_{\alpha}(r,\ell-1)$ is non-decreasing \footnote{This fact is provided in the proof of Theorem \ref{existopthreshold}.} in $r$.
Accordingly, the optimal policies solving (\ref{adiscounted}) are monotone threshold policies, i.e., $\pi_{\alpha} \in \Pi^{MT}$ for any $\alpha > 0$.

\subsection{The proof of Lemma \ref{diffmoments}}
\label{proof:diffmoments}

Let $\tau_{B+1}=0$. Then, consider:

\begin{align*}
&\frac{\partial}{\partial \tau_{i}}\mathbb {E}\left[ X^{2}\right]=\frac{\partial}{\partial \tau_{i}} \int_{0}^{\infty}2x\Pr(X \geq x)dx\\
&=\frac{\partial}{\partial \tau_{i}}\displaystyle\sum_{i=0}^{B}\int_{\tau_{i+1}}^{\tau_{i}}2x\Pr(X\geq x)dx\\
&=2\frac{\partial}{\partial \tau_{i}}\left[\int_{\tau_{i+1}}^{\tau_{i}}x\Pr(X\geq x)dx
+\int_{\tau_{i}}^{\tau_{i-1}}x\Pr(X\geq x)dx\right]\\
&=2\tau_{i}\frac{\partial}{\partial \tau_{i}}\int_{\tau_{i+1}}^{\tau_{i-1}}\Pr(X\geq x)dx\\
&=2\tau_{i}\frac{\partial}{\partial \tau_{i}}\displaystyle\sum_{i=0}^{B}\int_{\tau_{i+1}}^{\tau_{i}}\Pr(X\geq x)dx=2\tau_{i}\frac{\partial}{\partial \tau_{i}}\mathbb {E}\left[ X\right],\\
\end{align*}
for $i=0,1,...,B$.

\subsection{Useful Results for Asymptotic Properties}
Lemma \ref{ergodicitylemma}, \ref{ergoupdates} and \ref{convage} provide some useful results that combine ergodicity properties and renewal-reward theorem for a DTMC with transition probabilities in (\ref{transitionerlang}). 

\begin{lemma}
The DTMC with the transition probabilities in (\ref{transitionerlang}) is ergodic for a monotone threshold policy where  $\tau_{1}$ is finite.
\label{ergodicitylemma}
\end{lemma}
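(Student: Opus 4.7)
The plan is to verify the two standard sufficient conditions for ergodicity of a finite-state DTMC: irreducibility and aperiodicity. Since the state space $\{0, 1, \ldots, B-1\}$ is finite, these together imply the existence of a unique stationary distribution and convergence of the $n$-step distributions to it.

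For irreducibility, I would first show that from any state $j \geq 1$ the one-step transition $j \to j-1$ has positive probability. The argument is direct: starting from battery level $j$ just after an update, if no energy arrival occurs in an interval of length $\tau_j$, then at age $\tau_j$ the threshold is reached with battery still at $j$, triggering the next update at state $j-1$. This event has probability at least $e^{-\mu_H \tau_j} \geq e^{-\mu_H \tau_1} > 0$, which uses only the finiteness of $\tau_1$. Iterating, from any state $j$ one reaches state $0$ in $j$ steps with positive probability. In the opposite direction, from state $0$ the transition to state $B-1$ has probability $\Pr(Y_B \leq \tau_{B-1}) > 0$ by (\ref{transitionerlang}); combined with the decrement argument, state $0$ reaches every state in at most $B$ steps. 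Hence all states communicate.

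For aperiodicity, I would exhibit a self-loop at state $0$: from (\ref{transitionerlang}) with $i = j = 0$, the probability $\Pr(0 \to 0) = 1 - \Pr(Y_2 \leq \tau_1)$ equals the probability that at most one energy arrival occurs in $[0, \tau_1]$ starting from an empty battery. Since $\tau_1$ is finite, this probability is strictly positive, so state $0$ has period $1$; by irreducibility, the entire chain has period $1$.

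A finite-state, irreducible, aperiodic DTMC is ergodic, which completes the argument. The step I expect to require the most care is the irreducibility claim in the direction from $0$ to higher states, in particular ensuring that the argument does not silently rely on all thresholds being strictly positive. If some intermediate threshold vanishes, direct one-step transitions from $0$ to certain states may have zero probability; this can be handled by observing that $\Pr(0 \to 0) < 1$ whenever $\tau_1 > 0$, so state $0$ must transition to some state $i \geq 1$ with positive probability, after which the decrement transitions together with a short induction on $B$ still yield communication of all states.
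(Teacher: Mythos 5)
Your proof is correct and takes essentially the same approach as the paper's: both establish irreducibility of the finite-state DTMC by exhibiting strictly positive one-step transition probabilities and stringing them into paths connecting all states. The paper shows that from any $j$ one reaches every $i\geq j$ in a single step (using monotonicity $\tau_{i+1}\leq\tau_i$ so that $\Pr(Y_{1+i-j}\leq\tau_i)-\Pr(Y_{2+i-j}\leq\tau_{i+1})>0$) and uses the same decrement $j\to j-1$ for the downward direction; you instead route through state $0$ and then state $B-1$, but both constructions rest on the same two elementary facts (no arrival in $[0,\tau_j]$ has probability $e^{-\mu_H\tau_j}>0$, and the Erlang tail gives $\Pr(Y_{B-j}\leq\tau_{B-1})>0$). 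The one genuine addition in your version is the explicit aperiodicity check via the self-loop at state $0$, which the paper omits; this omission is harmless for how Lemma~\ref{ergodicitylemma} is actually invoked (Lemma~\ref{ergoupdates} needs only the ergodic/time-average theorem, for which irreducibility of a finite chain suffices), but your argument is the more complete one under the standard definition of an ergodic chain. Your caveat about vanishing thresholds is also fair: the paper's assertion $\Pr(Y_{B-j}\leq\tau_{B-1})>0$ silently requires $\tau_{B-1}>0$, so neither proof handles the degenerate case, and your acknowledgment of this is, if anything, more careful than the original.
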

\begin{proof}
Consider an energy state $j$ in $[0,B-1]$. We will show that any other energy state $i$ is reachable from $j$ in at most $B-1$ steps with a positive probability. For $i\geq j$, the higher energy state $i$ is reachable from $j$ in one step with a positive probability as for $i=B-1$, $\Pr(Y_{B-j}\leq \tau_{B-1})$ is strictly positive and for $ j\leq i<B-1$:
\begin{align*}
&\Pr(Y_{1+i-j}\leq \tau_{i})- \Pr(Y_{2+i-j}\leq \tau_{i+1})\geq\\
&\Pr(Y_{1+i-j}\leq \tau_{i+1})- \Pr(Y_{2+i-j}\leq \tau_{i+1})>0,
\end{align*}

as $\tau_{i+1}\leq \tau_{i}$ and $i-j \geq 0$.

Similarly, the energy state $i=j-1$ for $j=1,...., B-1$ can be reached from $j$ with a probability $1-\Pr(Y_{1}\leq \tau_{j})$ which is strictly positive as $\tau_{j}$ is finite. This means that any state $i<j$ can be reached from $j$ in at most $B-1$ steps with a positive probability. 
\end{proof}
\begin{lemma}
\label{ergoupdates}
For monotone threshold policies with finite $\tau_{1}$,  the following is true:
\begin{equation}
\lim_{n\rightarrow +\infty}\frac{1}{n}\sum_{k=0}^{n}X_{k}=\sum_{j=0}^{B-1}\mathbb {E}\left[ X \mid E=j\right] \Pr(E=j) \mbox{\:\: w.p.1.}
\label{monotoneX}
\end{equation}
\begin{equation}
\lim_{n\rightarrow +\infty}\frac{1}{2n}\sum_{k=0}^{n}\mathbb {E}[X_{k}^{2}]=\frac{1}{2}\sum_{j=0}^{B-1} \mathbb {E}\left[ X^{2} \mid E=j\right] \Pr(E=j),
\label{monotoneC}
\end{equation}
where  $\Pr(E=j)$ is the steady-state probability for energy state $j$, $\mathbb {E}\left[ X \mid E=j\right] \triangleq \mathbb {E}\left[ X_{k} \mid E(Z_{k})=j\right]$ and $\mathbb {E}\left[ X^{2} \mid E=j\right] \triangleq \mathbb {E}\left[ X_{k}^{2} \mid E(Z_{k})=j\right]$.
\end{lemma}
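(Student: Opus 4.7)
The plan is to derive both identities by combining the ergodicity of the embedded energy-level chain $\{E(Z_k)\}_{k\geq 0}$ (supplied by Lemma \ref{ergodicitylemma}) with the explicit conditional distribution of inter-update durations in (\ref{cdferlang}). I would first invoke Lemma \ref{ergodicitylemma}: since $\tau_{1}$ is finite, the chain on the finite state space $\{0,1,\ldots,B-1\}$ with transition probabilities (\ref{transitionerlang}) is irreducible and positive recurrent, with a unique stationary distribution $\{\Pr(E=j)\}_{j=0}^{B-1}$. By the Poisson structure of the energy arrivals, $X_{k+1}$ is a measurable function of $E(Z_k)$ and the post-$Z_k$ arrival process, the latter being independent of $\mathcal{F}_{Z_k}$; consequently $\{(E(Z_k),X_{k+1})\}_{k\geq 0}$ forms a Markov-renewal process driven by the same ergodic chain.

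For the first identity I would apply the strong law of large numbers for additive functionals of an ergodic finite-state Markov chain. The only hypothesis that needs checking is $\mathbb{E}[X\mid E=j]<\infty$ for every $j$, which is immediate from (\ref{cdferlang}): when $j\geq 1$, $X_{k+1}$ is supported in $[\tau_B,\tau_1]$ and is therefore bounded by $\tau_1$; when $j=0$, the conditional law coincides with that of $Y_1$ on $[\tau_1,\infty)$, so $\mathbb{E}[X\mid E=0]\leq \tau_1+1/\mu_H$. The ergodic theorem then yields the claimed almost-sure limit, with the asymptotic Cesaro sum over $n+1$ terms being equivalent to dividing by $n$.

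For the second identity, which concerns a Cesaro average of expectations rather than a pathwise sum, I would argue through marginal convergence. Expanding
\begin{equation*}
\mathbb{E}[X_{k+1}^{2}]=\sum_{j=0}^{B-1}\mathbb{E}[X^{2}\mid E=j]\,\Pr(E(Z_k)=j)
\end{equation*}
and invoking the mean ergodic theorem for finite irreducible chains (which delivers $\frac{1}{n}\sum_{k=0}^{n-1}\Pr(E(Z_k)=j)\to \Pr(E=j)$ without requiring aperiodicity), linearity on the finite sum over states gives $\frac{1}{n}\sum_{k=0}^{n-1}\mathbb{E}[X_{k+1}^{2}]\to \sum_{j=0}^{B-1}\Pr(E=j)\,\mathbb{E}[X^{2}\mid E=j]$. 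Finiteness of $\mathbb{E}[X^{2}\mid E=j]$ follows from the same stochastic dominance as in the first-moment case, using that the exponential tail has finite second moment. Dividing by $2$ then produces the stated identity. The only delicate step in the whole plan is the integrability verification; once uniform bounds on the first two conditional moments are in hand, both the Markov-chain SLLN and the Cesaro argument are standard textbook material.
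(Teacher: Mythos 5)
Your proof is correct, and for the first identity it follows essentially the same route as the paper. The paper groups the terms $X_k$ by the embedded energy state $E(Z_k)=j$, writes $\frac{1}{n}\sum_{k\leq n}X_k=\sum_{j}\frac{L_j}{n}\cdot\frac{1}{L_j}\sum_{\ell\leq L_j}X_{\ell;j}$ where $L_j$ is the number of visits to state $j$ among the first $n$ updates, then uses ergodicity of the embedded chain (via Lemma~\ref{ergodicitylemma}) for $L_j/n\to\Pr(E=j)$ and the i.i.d.\ SLLN for the inner average. Your "SLLN for additive functionals of an ergodic finite-state chain" / Markov-renewal framing is that argument in packaged form, and your integrability check (bounded on $[\tau_B,\tau_1]$ for $j\geq 1$, exponential tail with mean $\leq\tau_1+1/\mu_H$ for $j=0$) supplies the same hypothesis the paper implicitly uses.

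For the second identity, your route is genuinely different and, in fact, cleaner. You notice that \eqref{monotoneC} is a Cesaro average of \emph{expectations}, not a pathwise sum, and so you expand $\mathbb{E}[X_{k+1}^2]=\sum_j\mathbb{E}[X^2\mid E=j]\Pr(E(Z_k)=j)$ and invoke the mean ergodic theorem $\frac{1}{n}\sum_{k<n}\Pr(E(Z_k)=j)\to\Pr(E=j)$, which holds for any irreducible finite chain without aperiodicity. The paper, by contrast, simply repeats the pathwise decomposition used for the first moment, replacing $X_{\ell;j}$ by $X_{\ell;j}^2$, and writes the resulting almost-sure limit as if it established the stated limit of $\frac{1}{n}\sum_k\mathbb{E}[X_k^2]$. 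Passing from the pathwise almost-sure limit of $\frac{1}{n}\sum X_k^2$ to the deterministic Cesaro limit of expectations requires an unmentioned uniform-integrability step (fine here since the conditional second moments have exponentially-dominated tails, but not acknowledged). Your version avoids that gap entirely by working directly with marginals and total expectation. In short: same proof for \eqref{monotoneX}; a more careful and self-contained argument for \eqref{monotoneC}.
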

\begin{proof}
Consider:
\[
\frac{1}{n}\sum_{k=0}^{n} X_{k}=\frac{1}{n}\sum_{j=0}^{B-1}\sum_{\substack{k\in [0,n]\\ E(Z_{k})=j }} X_{k}=\frac{1}{n}\sum_{j=0}^{B-1}\displaystyle\sum_{\ell=0}^{L_{j}} X_{\ell;j},
\]
where $L_{j}$ is the number of $k$s in $[0,n]$ such that $E(Z_{k})=j$ and $X_{\ell;j}$ is a r.v. with the CDF $\Pr(X_{\ell;j}\leq x)= \Pr(X_{k}\leq x \mid  E(Z_{k})=j)$ for some $k$.

Note that the sequence $X_{0;j}, X_{1;j}, ...,X_{L_{j};j}$ is i.i.d. for any $j$ and their mean is bounded as all thresholds are finite, hence:
\[
\displaystyle\lim_{L_{j}\rightarrow \infty} \frac{1}{L_{j}}\displaystyle\sum_{\ell=0}^{L_{j}} X_{\ell;j}= \mathbb {E}\left[ X \mid E=j\right], w.p.1.
\]
Due to the ergodicity of $E(Z_{k})$s (Lemma \ref{ergodicitylemma}):
\[
\displaystyle\lim_{n\rightarrow \infty}\frac{L_{j}}{n}=\Pr(E=j), w.p.1.
\]
Therefore,
\begin{align*}
\displaystyle\lim_{n\rightarrow \infty}\frac{1}{n}\displaystyle\sum_{k=0}^{n} X_{k}&=\lim_{n\rightarrow \infty}\displaystyle\sum_{j=0}^{B-1}\frac{L_{j}}{n}(\frac{1}{L_{j}}\displaystyle\sum_{\ell=0}^{L_{j}} X_{\ell;j}),\\
&=\displaystyle\sum_{j=0}^{B-1} \mathbb {E}\left[ X \mid E=j\right] \Pr(E=j), w.p.1.
\end{align*}
Similarly,
\begin{align*}
\displaystyle\lim_{n\rightarrow \infty}\frac{1}{n}\displaystyle\sum_{k=0}^{n} \mathbb {E}[X_{k}^{2}]&=\lim_{n\rightarrow \infty}\displaystyle\sum_{j=0}^{B-1}\frac{L_{j}}{n}(\frac{1}{L_{j}}\displaystyle\sum_{\ell=0}^{L_{j}} X_{\ell;j}^{2})\\
&=\displaystyle\sum_{j=0}^{B-1} \mathbb {E}\left[ X^{2} \mid E=j\right] \Pr(E=j), w.p.1.
\end{align*}
\end{proof}
\begin{lemma}
\label{convage}
For a threshold policy where $\tau_{1}$ is finite, the average age $\bar{\Delta}$ is finite (w.p.1) and given by the following expression.
\begin{equation}
\bar{\Delta}=\frac{\lim_{n\rightarrow +\infty}\frac{1}{2n}\sum_{k=0}^{n}\mathbb {E}[X_{k}^{2}]}{\lim_{n\rightarrow +\infty}\frac{1}{n}\sum_{k=0}^{n}X_{k}} \mbox{\:\: w.p.1.}
\end{equation}
\end{lemma}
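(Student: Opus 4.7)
The plan is to combine the sawtooth geometry of the age process with the ergodic averages already delivered by Lemma \ref{ergoupdates}. First I would exploit the fact that, between successive update times $Z_{k-1}$ and $Z_k$, the age $\Delta(t)$ rises linearly from zero to $X_k$, so the area of the $k$-th tooth equals $X_k^2/2$ and therefore
\[
\int_0^{Z_n} \Delta(t)\,dt \;=\; \frac{1}{2}\sum_{k=1}^n X_k^2.
\]
For an arbitrary horizon $T$, let $N(T) := \max\{k : Z_k \leq T\}$ and sandwich
\[
\frac{1}{2}\sum_{k=1}^{N(T)} X_k^2 \;\leq\; \int_0^T \Delta(t)\,dt \;\leq\; \frac{1}{2}\sum_{k=1}^{N(T)+1} X_k^2,
\]
together with $Z_{N(T)} \leq T \leq Z_{N(T)+1}$.

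Next I would extend the ergodic average from $X_k$ to $X_k^2$. Since $\tau_1 < \infty$, each inter-update interval obeys the stochastic dominance $X_k \leq_{\mathrm{st}} \tau_1 + Y_1$ with $Y_1$ exponential of rate $\mu_H$, so all conditional moments $\mathbb{E}[X^r \mid E = j]$ are finite. Repeating the proof of Lemma \ref{ergoupdates} verbatim with $X_k$ replaced by $X_k^2$---grouping indices $k$ by the energy level $E(Z_k) = j$, invoking the i.i.d.\ structure of $X_k$ conditional on $E(Z_k)$, and using the DTMC ergodicity $L_j/n \to \Pr(E = j)$ w.p.1 from Lemma \ref{ergodicitylemma}---gives
\[
\frac{1}{n}\sum_{k=1}^n X_k^2 \;\longrightarrow\; m_2 := \sum_{j=0}^{B-1}\mathbb{E}[X^2 \mid E=j]\,\Pr(E=j) \quad \text{w.p.1}.
\]
A Cesaro argument together with the convergence of the marginal law of $E(Z_k)$ to its stationary distribution shows that the same constant $m_2$ is the limit of $\frac{1}{n}\sum_{k=1}^n \mathbb{E}[X_k^2]$, which is exactly the deterministic quantity appearing in the numerator of the lemma. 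Lemma \ref{ergoupdates} itself already gives $\frac{1}{n}\sum_{k=1}^n X_k \to m_1 := \sum_j \mathbb{E}[X \mid E=j]\,\Pr(E=j) > 0$ w.p.1.

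To finish, I would divide the sandwich by $T$ and pass to the limit: $N(T) \to \infty$ w.p.1 because $m_1 < \infty$, and $T/N(T) \to m_1$ w.p.1 by applying the preceding strong law along the subsequence $Z_{N(T)}$. Both bounds then converge w.p.1 to $m_2/(2m_1)$, yielding
\[
\bar{\Delta} \;=\; \lim_{T \to \infty}\frac{1}{T}\int_0^T \Delta(t)\,dt \;=\; \frac{m_2}{2\,m_1} \quad \text{w.p.1},
\]
which is precisely the ratio in the statement; finiteness of $\bar{\Delta}$ is immediate from $m_2 < \infty$ and $m_1 > 0$.

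The main obstacle I anticipate is the rigorous extension of the strong law from $X_k$ to $X_k^2$: one must confirm that the Markov-chain strong law for functionals of the ergodic energy-level chain $E(Z_k)$ applies to the \emph{unbounded} reward $X_{k+1}^2$ and, simultaneously, justify the Cesaro/bounded-convergence step that replaces $X_k^2$ by $\mathbb{E}[X_k^2]$ in the numerator. Both issues reduce to the uniform exponential tail bound on $X_k$ that follows from $\tau_1 < \infty$, which provides uniform integrability of every polynomial functional of $X_k$ and hence closes the argument.
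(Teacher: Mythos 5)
Your proposal reconstructs exactly the renewal-reward argument that the paper defers to by citing Gallager's Theorem~5.4.5 and remarking that it "generalizes to the non-i.i.d.\ case when the limits exist": the sawtooth decomposition $\int_0^{Z_n}\Delta\,dt=\tfrac12\sum X_k^2$, the sandwich via $N(T)$, and the substitution of the ergodic averages from Lemma~\ref{ergoupdates} in place of the i.i.d.\ strong law. This is the paper's intended proof with the omitted bookkeeping (the $X_k^2$ strong law, $T/N(T)\to m_1$, uniform integrability from the $\tau_1<\infty$ tail bound) made explicit, so the approach is essentially the same.
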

\begin{proof}
The proof is a generalization of Theorem 5.4.5 in \cite{gallager2013stochastic} for the case where $X_{k}$s are non-i.i.d. but the limits still exist (w.p.1). When $X_{k}$s  are i.i.d. with $\mathbb {E}[X_{k}]< \infty$ and $\mathbb {E}[X_{k}^{2}]< \infty$, the convergence (w.p.1) of the limits is guaranteed.
\end{proof}
\subsection{The proof of Theorem \ref{fixedpthreshold}}
\label{proof:fixedpthreshold}
Theorem \ref{fixedpthreshold} follows from the proof of Theorem \ref{existopthreshold}. The proof of Lemma \ref{stoppingexists} shows that given that  $Z_{k}=a$ is  the last update time and $E(t')=B$ for some $t'>a$, the condition $S_{t}=G_{t}$ is satisfied for the first time when $t\geq \lbrace t', t_{c}\rbrace$ (see (\ref{tcdef})). This means that 
$\rho_{\alpha}(B)=\alpha J_{\alpha}(0,B-1)$ for $\rho_{\alpha}(E(t))$ in (\ref{ruleaged}). 
Accordingly,
\begin{eqnarray*}
&p(\tau_{B}^{*})=\lim_{\alpha\downarrow 0}\rho_{\alpha}(B)=\lim_{\alpha\downarrow 0}\alpha J_{\alpha}(0,B-1)=&\nonumber\\&\min_{\pi\in\Pi^{\mathsf{online}}}\limsup_{t_f \rightarrow \infty}\frac{\int_{0}^{t_f}\mathbb {E}\left[p(\Delta_{\pi}(t))\mid E(0)=B\right]dt}{t_f}=\bar{p}_{\pi^*},&
\end{eqnarray*}
which follows from the application of Feller's Tauberian theorem (applying Theorem \ref{Fellerstauberiantheorem} for $f(t)=\mathbb {E}\left[p(\Delta_{\pi}(t))\mid E(0)=B\right]$). This completes the proof. 
\subsection{The Proof of Theorem \ref{B1age}}
\label{proof:B1age}
By Lemma \ref{convage} and Lemma \ref{ergoupdates}, $\bar{\Delta}$ for $B=1$ can be computed as follows
\begin{equation}
\label{B1delta}
\bar{\Delta}=\frac{1}{2}\frac{\mathbb {E}\left[ X^{2} \mid E=0\right]\Pr(E=0)}{\mathbb {E}\left[ X \mid E=0\right]\Pr(E=0)},
\end{equation}
where $\Pr(E=0)=1$, $\mathbb {E}\left[ X^{2} \mid E=0\right]=\tau_{1}^{2}+(\frac{2}{\mu_{H}^{2}}+\frac{2}{\mu_{H}}\tau_{1})e^{-\mu_{H}\tau_{1}}$ and $\mathbb {E}\left[ X \mid E=0\right]=\tau_{1}+$ $\frac{1}{\mu_{H}}e^{-\mu_{H}\tau_{1}}$. Accordingly, $\bar{\Delta}$ is given by (\ref{B1avage}).
By Theorem \ref{fixedpthreshold}, $\tau_{1}^{*}=\bar{\Delta}_{\pi^*}$ and combining this with (\ref{B1avage}) results in
\begin{equation}
\label{B1optimal}
\mu_{H}\tau_{1}^{*}=\frac{\frac{1}{2}(\mu_{H}\tau_{1}^{*})^{2}+ e^{-\mu_{H}\tau_{1}^{*}}\!(\mu_{H}\tau_{1}+1)}{\mu_{H}\tau_{1}^{*}+e^{-\mu_{H}\tau_{1}^{*}}}.
\end{equation}
Solving (\ref{B1optimal}) gives that $(\tau_{1}^{*})^{2}=\frac{2}{\mu_{H}}e^{-\mu_{H}\tau_{1}^{*}}$ which means $\tau_{1}^{*}=\frac{1}{\mu_{H}}2W(\frac{1}{\sqrt{2}})$.
\subsection{The Proof of Theorem \ref{B2age}}
\label{proof:B2age}

By Lemma \ref{convage} and Lemma \ref{ergoupdates}, $\bar{\Delta}$ for $B=2$ is the following:
\begin{align}
\label{B2delta}
&\bar{\Delta}=\nonumber\\
&\frac{1}{2}\frac{\mathbb {E}\left[ X^{2} \mid E=0\right]\Pr(E=0)+\mathbb {E}\left[ X^{2} \mid E=1\right]\Pr(E=1)}{\mathbb {E}\left[ X \mid E=0\right]\Pr(E=0)+\mathbb {E}\left[ X \mid E=1\right]\Pr(E=1)}.
\end{align}
The probability of being in $E=1$, i.e. $\Pr(E=1)$ can be solved using:
\begin{equation}
\label{steadyE}
\Pr(E=1)=\sum_{j=0}^{1}\Pr(E(Z_{k+1})=1 \mid  E(Z_{k})=j)\Pr(E=j).
\end{equation}
Combining (\ref{steadyE}) and (\ref{cdferlang}),
\begin{equation}
\label{steadyexp}
\Pr(E=1)=\frac{e^{-\mu_{H}\tau_{1}}}{1-e^{-\mu_{H}\tau_{1}}\mu_{H}\tau_{1}}.
\end{equation}
Now, we can obtain $\mathbb {E}\left[ X^{2} \mid E=j\right]$, $\mathbb {E}\left[ X \mid E=j\right]$ using (\ref{cdferlang}). Combining these with (\ref{steadyexp}) and substituting in (\ref{B2delta}) gives  (\ref{B2age}).

\subsection{The Proof of Theorem \ref{momentsthreshold}}
First, we show that $\tau_{B}\geq \bar{\Delta}_{\pi^{*}}$ is \emph{necessary} to find a solution to (\ref{transcendentalage}) with  monotonic non-increasing thresholds. Then, we show that this condition is also \emph{sufficient}.

The \emph{necessity} part of the proof follows from the fact that $\tau_{B}=\bar{\Delta}_{\pi}$ for any solution of (\ref{transcendentalage}),  as $\bar{\Delta}_{\pi}=m_{1}(\tau_{1},\tau_{2},...,\tau_{B})/2m_{2}(\tau_{1},\tau_{2},...,\tau_{B})$ by  Lemma \ref{convage} and Lemma \ref{ergoupdates}. Therefore, by the optimality of $ \bar{\Delta}_{\pi^{*}}$, $\tau_{B}\geq \bar{\Delta}_{\pi^{*}}$ must hold for any solution of (\ref{transcendentalage}).

Now, we consider the \emph{sufficiency} part of the proof where it is useful to define a function $\phi :[0,\infty)^{B}\rightarrow \mathbb{R}$ as follows:
\begin{eqnarray*}
&\phi(\tau_{B}, \tau_{B-1}-\tau_{B},..., \tau_{1}-\tau_{2})\triangleq\\ &2\tau_{B}m_{1}(\tau_{1},\tau_{2},...,\tau_{B})-m_{2}(\tau_{1},\tau_{2},...,\tau_{B}).  
\end{eqnarray*}
Using this definition, (\ref{transcendentalage}) can be written as,
\begin{equation*}
    \phi(\tau_{B}, \tau_{B-1}-\tau_{B},..., \tau_{1}-\tau_{2})=0.
\end{equation*}

We need to show that given $\tau_{B}\geq \bar{\Delta}_{\pi^{*}}$, one can find a set of non-negative real numbers $d_{1},....,d_{B-1}$ such that $\phi(\tau_{B}, d_{B-1},..., d_{1})=0$. Accordingly, $\tau_{B}$ and $d_{1},....,d_{B-1}$  constitute a solution to (\ref{transcendentalage}) with monotonic non-decreasing thresholds where $\tau_{i}=\tau_{i+1}+d_{i}$, for  $i=1,..., B-1$. In order to prove this, let us start with the optimal policy $\pi^{*}=(\tau_{1}^{*}, \tau_{2}^{*}...,\tau_{B}^{*})$ where we know that $\tau_{B}^{*}=\bar{\Delta}_{\pi^{*}}$ by Theorem \ref{fixedpthreshold}. Starting from the optimal policy $\pi^{*}$, the policy will be modified following the procedure below:
\begin{itemize}
\item \emph{Phase 1:} Modify the policy $\pi^{(+)}=(\tau_{1}^{(+)}, \tau_{2}^{(+)}...,\tau_{B}^{(+)})$ from the previous phase to the policy $\pi^{(-)}=(\tau_{1}^{(-)}, \tau_{2}^{(-)}...,\tau_{B}^{(-)})$ so that $\tau_{B}^{(-)}=\min\lbrace \tau_{B-1}^{(+)},\tau_{B}\rbrace$ while $\tau_{i}^{(-)}=\tau_{i}^{(+)}$, for  $i=1,..., B-1$. Then, go to \emph{Phase 2} with policy $\pi^{(-)}$.  
\item \emph{Phase 2:} Modify the policy $\pi^{(-)}=(\tau_{1}^{(-)}, \tau_{2}^{(-)}...,\tau_{B}^{(-)})$ from the previous phase to the policy $\pi^{(+)}=(\tau_{1}^{(+)}, \tau_{2}^{(+)}...,\tau_{B}^{(+)})$ so that $\tau_{B}^{(+)}=\tau_{B}^{(-)}$ while $\tau_{i}^{(+)}=\tau_{i}^{(-)}+x$ for  $i=1,..., B-1$ where $x>0$ is the solution of the following:
\begin{equation}
\label{expandtaus}
\phi(\tau_{B}^{(-)}, \tau_{B-1}^{(-)}-\tau_{B}^{(-)}+x,..., \tau_{1}^{(-)}-\tau_{2}^{(-)}+x)=0.
\end{equation}
If $\tau_{B}^{(-)}=\tau_{B}$, the procedure stops and (\ref{expandtaus}) gives the solution that $\phi(\tau_{B}, d_{B-1},..., d_{1})=0$, otherwise go to \emph{Phase 1} with policy  $\pi^{(+)}$.
\end{itemize}
It can be shown that the procedure always stops with a solution that $\phi(\tau_{B}, d_{B-1},..., d_{1})=0$. To see this, first observe that (\ref{expandtaus}) always has a solution as long as:  
\begin{equation}
\label{posphi}
\phi(\tau_{B}^{(-)}, \tau_{B-1}^{(-)}-\tau_{B}^{(-)},..., \tau_{1}^{(-)}-\tau_{2}^{(-)})>0.
\end{equation}
This is due to the following facts about the function $\phi(\tau_{B}^{(-)}, \tau_{B-1}^{(-)}-\tau_{B}^{(-)}+x,..., \tau_{1}^{(-)}-\tau_{2}^{(-)}+x)$: (i) it  is a continuous function of $x$, (ii) it goes to $-\infty$ as $x$ grows.

Next, observe that (\ref{posphi}) always holds, i.e.,
\begin{eqnarray*}
&\phi(\tau_{B}^{(-)}, \tau_{B-1}^{(-)}-\tau_{B}^{(-)},..., \tau_{1}^{(-)}-\tau_{2}^{(-)})=\\
&\underbrace{\phi(\tau_{B}^{(+)}, \tau_{B-1}^{(+)}-\tau_{B}^{(+)},..., \tau_{1}^{(+)}-\tau_{2}^{(+)})}_{=0 \text{ due to the \emph{Step 2} or the initial/optimal policy} }+\int_{\pi^{(+)}}^{\pi^{(-)}}d \phi,
\end{eqnarray*}
is positive. This can be seen by considering:
\begin{eqnarray*}
&\frac{\partial \phi}{\partial \tau_{B}}=2 m_{1}(\tau_{1},\tau_{2},...,\tau_{B})+\\
&\sum_{j=0}^{B-1}\left[  2\tau_{B}\frac{\partial}{\partial \tau_{B}}\mathbb {E}\left[ X \mid E=j\right]-\frac{\partial}{\partial \tau_{B}}\mathbb {E}\left[ X^{2} \mid E=j\right]\right]  \times \\
&\Pr(E=j),
\end{eqnarray*}
which follows from the fact that $\Pr(E=j)$ does not depend on $\tau_{B}$ (see (\ref{transitionerlang})) and can be further simplified by Lemma \ref{diffmoments}, hence:
\begin{equation*}
\frac{\partial \phi}{\partial \tau_{B}}=2 m_{1}(\tau_{1},\tau_{2},...,\tau_{B}).
\end{equation*}
Accordingly, we have:
\begin{eqnarray*}
&\phi(\tau_{B}^{(-)}, \tau_{B-1}^{(-)}-\tau_{B}^{(-)},..., \tau_{1}^{(-)}-\tau_{2}^{(-)})=\int_{\pi^{(+)}}^{\pi^{(-)}}d \phi \\
&=2\int_{\tau_{B}^{(+)}}^{\tau_{B}^{(-)}}m_{1}(\tau_{1}^{(+)},\tau_{2}^{(+)},...,\tau)d \tau > 0,
\end{eqnarray*}
where the inequality follows from the fact that $m_{1}(\tau_{1}^{(+)},\tau_{2}^{(+)},...,\tau)$ being the average inter-update time is always positive.

Therefore, (\ref{expandtaus}) can be always satisfied in \emph{Phase 2}. Also, as the second smallest threshold is strictly increased in \emph{Phase 2}, the smallest threshold can be moved toward $\tau_{B}$ in \emph{Phase 1}. Also, it can be shown that the procedure does not converge any policy other than the policy that $\phi(\tau_{B}, d_{B-1},..., d_{1})=0$. This can be seen considering the following:
\begin{align*}
&\frac{d}{dx}m_{2}(\tau_{1}+x,\tau_{2}+x,...,\hat{\tau}_{B})\mid_{x=0}\\
&<\lim_{x\rightarrow 0}\lim_{n\rightarrow +\infty}\frac{1}{nx}\sum_{k=0}^{n}\left( \mathbb {E}[(X_{k}+x)^{2}]-\mathbb {E}[X_{k}^{2}]\right)\\
&=\lim_{n\rightarrow +\infty}\frac{2}{n}\sum_{k=0}^{n}\mathbb {E}[X_{k}]=2m_{1}(\tau_{1},\tau_{2},...,\hat{\tau}_{B}),
\end{align*}
hence,
\begin{align}
\label{proceduremotion}
&\frac{d}{dx}\phi(\hat{\tau}_{B}, \tau_{B-1}-\hat{\tau}_{B}+x,..., \tau_{1}-\tau_{2}+x)\mid_{x=0}\\
&+\frac{d}{dx}\phi(\hat{\tau}_{B}+x, \tau_{B-1}-\hat{\tau}_{B}-x,..., \tau_{1}-\tau_{2})\mid_{x=0}\\
&>2\hat{\tau}_{B}\frac{d}{dx}m_{1}(\tau_{1}+x,\tau_{2}+x,...,\hat{\tau}_{B})\mid_{x=0},
\end{align}
which implies that the procedure cannot converge to a policy with  $\tau_{B}^{(+)}<\tau_{B}$ as the RHS of (\ref{proceduremotion}) is positive \footnote{This follows from the fact that any increase in thresholds causes an increase in the battery overflow probability which means an increase in the average inter-update duration, i.e, $m_{1}(\tau_{1},\tau_{2},...,\hat{\tau}_{B})$.} and does not vanish for a finite set of thresholds. Therefore, as the smallest threshold of the policies modified by the procedure is increased up to $\tau_{B}$, a solution  that $\phi(\tau_{B}, d_{B-1},..., d_{1})=0$ is eventually reached. This completes the proof. 

\end{document}